\makeatletter
\let\latexkernel@label\label
\makeatother

\documentclass[aps,prl,superscriptaddress,twocolumn,nofootinbib,notitlepage,10pt]{revtex4-2}

\makeatletter
\let\label\latexkernel@label
\makeatother
\usepackage{amsthm}
\usepackage{amsmath,amssymb,mathtools}
\usepackage{physics}
\usepackage{bbm}
\usepackage{color}
\usepackage{booktabs}
\usepackage{tikz}
\usepackage{cancel}
\usepackage{graphicx}
\usepackage{adjustbox}
\usetikzlibrary{arrows.meta,positioning,calc,fit,decorations.pathreplacing}

\newtheorem{theorem}{Theorem}[section]
\newtheorem{lemma}{Lemma}[section]
\newtheorem{corollary}{Corollary}[section]

\theoremstyle{definition}
\newtheorem{definition}{Definition}[section]
\newtheorem{problem}{Problem}[section]

\renewcommand{\thetheorem}{\ifnum\value{section}=0
    \arabic{theorem}\else
    \thesection.\arabic{theorem}\fi
}
\renewcommand{\theproblem}{\ifnum\value{section}=0
    \arabic{problem}\else
    \thesection.\arabic{problem}\fi
}
\renewcommand{\thecorollary}{\ifnum\value{section}=0
    \arabic{corollary}\else
    \thesection.\arabic{corollary}\fi
}
\newcommand{\ignore}[1]{}
\newcommand{\ges}[1]{\textcolor{red}{#1}}

\newcommand{\Id}{\mathbbm{1}}
\newcommand{\pauli}{\mathbb P}
\newcommand{\STAB}{\mathrm{STAB}}
\newcommand{\stp}{\mathrm{SP}}
\newcommand{\sharpP}{\#\mathrm P}
\newcommand{\GapP}{\mathrm{GapP}}
\newcommand{\CeqP}{\mathrm C_{=}\mathrm P}
\newcommand{\PP}{\mathrm{PP}}
\newcommand{\FP}{\mathrm{FP}}
\newcommand{\poly}{\mathrm{poly}}
\newcommand{\cont}{\mathcal C}
\newcommand{\cc}[1]{\mathrm{#1}}
\newcommand{\NP}{\cc{NP}}
\newcommand{\coNP}{\cc{coNP}}
\newcommand{\BQP}{\cc{BQP}}
\newcommand{\dist}{\operatorname{dist}}
\newcommand{\ApproxSTABPEPS}{\mathrm{Approx\text{-}STAB\text{-}PEPS}}
\newcommand{\stabpeps}{\mathrm{STAB\text{-}PEPS}_{\mathrm{exact}}}
\newcommand{\yes}{\mathrm{YES}}
\newcommand{\no}{\mathrm{NO}}
\newcommand{\ceil}[1]{\left\lceil #1 \right\rceil}
\newcommand{\floor}[1]{\left\lfloor #1 \right\rfloor}
\newcommand{\ot}{\otimes}
\newcommand{\unpeps}{\ket*{\widetilde{\psi}(A)}}
\newcommand{\lstab}{\mathrm{loc STAB}_{AB}}
\definecolor{myrefcolor}{rgb}{0.067,0.5,0.5}
\usepackage[colorlinks=true,citecolor=myrefcolor,urlcolor=myrefcolor,linkcolor=myrefcolor]{hyperref}
\usepackage[capitalize,nameinlink]{cleveref}
 \usepackage{verbatim}

\begin{document}

\title{Nonstabilizerness of quantum tensor network states is intractable in two dimensions}

\author{Gianluca Esposito}
\affiliation{Scuola Superiore Meridionale, Largo S. Marcellino 10, 80138 Naples, Italy}
\affiliation{Istituto Nazionale di Fisica Nucleare, Sezione di Napoli}
\affiliation{Max Planck Institute of Quantum Optics, Hans Kopfermann Str. 1, 85748 Garching, Germany}
\affiliation{Munich Center for Quantum Science and Technology, Schellingstr. 4, 80799 Munich, Germany}
\author{Georgios Styliaris}
\affiliation{Max Planck Institute of Quantum Optics, Hans Kopfermann Str. 1, 85748 Garching, Germany}
\affiliation{Munich Center for Quantum Science and Technology, Schellingstr. 4, 80799 Munich, Germany}
\date{\today}

\begin{abstract}
Nonstabilizerness is a necessary resource for quantum systems to lie beyond the classically simulable regime. With the advent of stabilizer $\alpha$-Rényi entropies, nonstabilizerness has also become a many-body diagnostic, complementary to entanglement. While deciding whether an arbitrary quantum state has nonstabilizerness is provably hard, such states already require a description exponentially large in the number of qubits and are thus out of reach for many-body physics. Here we instead consider 2D tensor network (TN) states, which compactly capture states obeying an entanglement area law, and ask whether they admit a simpler algorithm for the same task. In contrast to the 1D case, we prove that, even at small, constant bond dimension, computing the stabilizer entropy of 2D TN states is $\#\mathrm{P}$-hard for any integer $\alpha\ge2$, and deciding stabilizer membership is $\mathrm{C}_{=}\mathrm{P}$-complete. The corresponding constant-accuracy problems remain $\mathrm{C}_{=}\mathrm{P}$-hard. We further prove that deciding whether a PEPS can be transformed into a stabilizer state by local unitaries over a specified bipartition is $\mathrm{C}_{=}\mathrm{P}$-hard. Under standard complexity assumptions, these results rule out classical or quantum algorithms with polynomial resources for all three tasks in the worst case.
\end{abstract}

\maketitle

\textit{Introduction}---Entanglement alone does not separate which quantum states and processes are classically tractable from those which are not. Stabilizer states can be highly entangled, yet they are efficiently simulable on classical computers~\cite{Gottesman_1998}. A necessary missing ingredient is nonstabilizerness, the resource that is required to achieve universal quantum computation~\cite{Veitch_Hamed_Mousavian_Gottesman_Emerson_2014,bravyi_universal,bravyi2016improved}. Beyond this role, nonstabilizerness has become a many-body diagnostic, marking phase transitions~\cite{Viscardi_2026,adhikary2026quantumcomplexitythermalphase,Catalano_Odavić_Torre_Hamma_Franchini_Giampaolo_2024,PRXQuantum.3.020333,Sarkar2020,Tarabunga_Tirrito_Chanda_Dalmonte_2023} and the onset of chaos in quantum circuits~\cite{Leone2021quantumchaosis,Turkeshi_Tirrito_Sierant_2024} and unitary dynamics~\cite{Cusumano_Esposito_Hamma_2026,iannotti2026nonstabilizernessu1symmetrychaotic,tirrito2025magicphasetransitionsmonitored}. Finer, basis-independent notions of non-local and long-range nonstabilizerness~\cite{cao2025gravitational,nlmagic,frustration_nl,iannotti2026nonlocalmagicresourcesfermionic,collura2026nonlocalnonstabilizernessfreefermion,malvimat2026multipartitenonlocalmagicsyk,pyzr-jmvw,Korbany_2025,korbany2026longrangenonstabilizernesstopologicallyencoded,wei2026longrangenonstabilizernessquantumcodes,nehra2025topologicalmagicresponsequantum,viscardi2026nonlocalmagicclosedformsolution,sierant2026exactquantificationnonlocalmagic} have been introduced, which can be used to probe topological order. One ingredient that makes such a program viable is the availability of nonstabilizerness measures that can be directly evaluated. Stabilizer entropies (SE)~\cite{Leone_Oliviero_Hamma_2022} have emerged as a useful set of measures~\cite{Bittel2026operational,esposito2026stabilizerentropytrustworthymixed}, as they avoid the optimization over stabilizer decompositions characteristic of other monotones~\cite{Leone_Bittel_2024,Howard_Campbell_2017}.

\begin{figure}[t!]
    \centering
    \includegraphics[width=\linewidth]{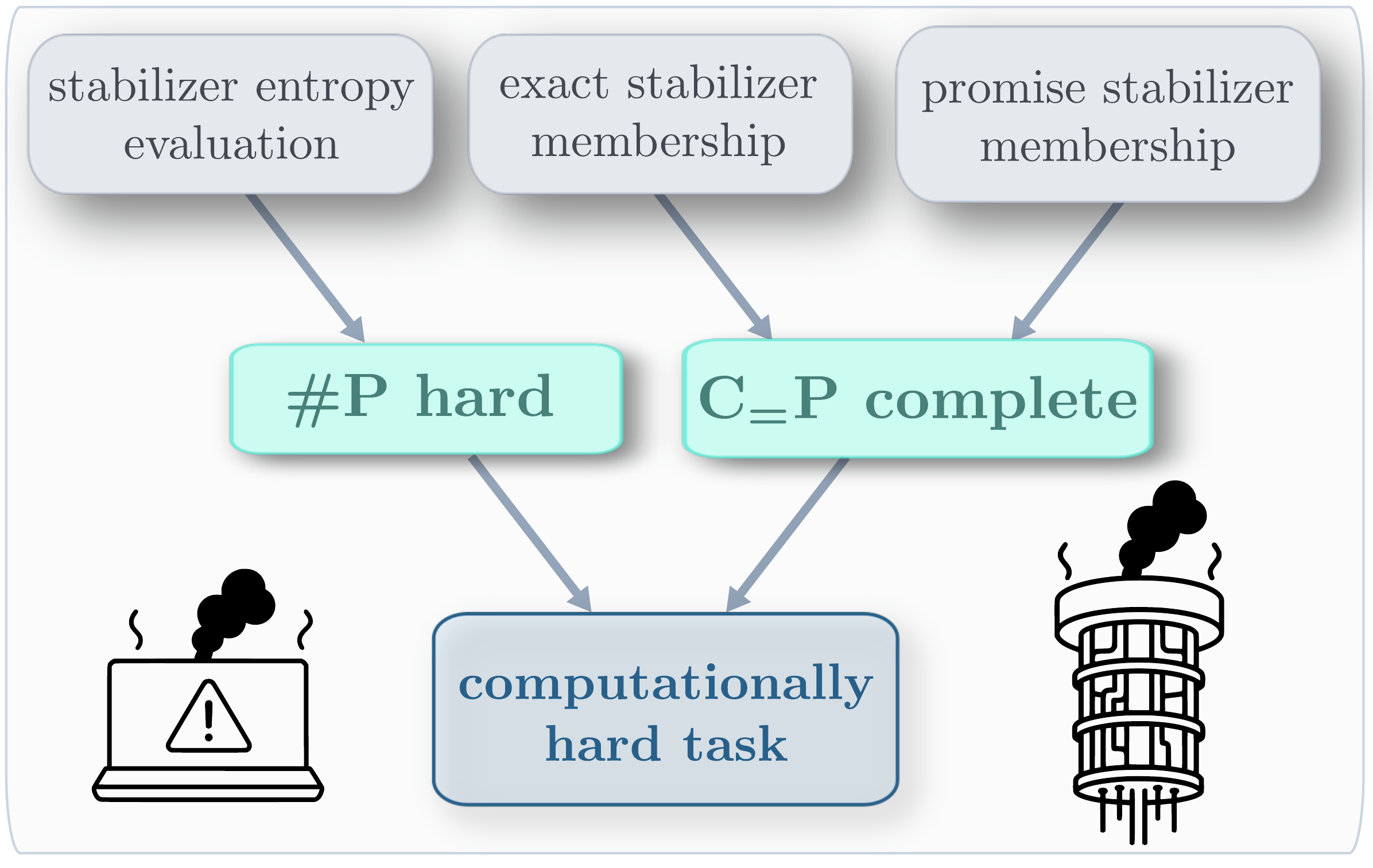}
    \caption{We establish that computing SE of 2D PEPS or just deciding whether a PEPS is a stabilizer state, exactly or approximately, are hard computational tasks for classical and quantum computers. An efficient classical algorithm for either of those tasks would imply the collapse of the polynomial hierarchy, and an efficient quantum algorithm would imply the collapse $\NP\subseteq \BQP$: both implications are considered highly unlikely.
    }
    \label{fig:results}
\end{figure}

This raises a computational question: given a description of a state, how hard is it to quantify its nonstabilizerness, or even to decide whether it has any at all? An exact algorithm computes SEs at an exponential cost in the number of qubits~\cite{huang2026fastexactapproachstabilizer}, and it has been proven that for arbitrary mixed states the problem is intractable: deciding membership in the stabilizer set is super-exponentially hard even in an approximate sense~\cite{unbearable}. This formulation, however, takes the full density matrix as input, an object whose size grows exponentially with the number of qubits and is already out of practical reach after a few tens of qubits. This problem has also been tackled in a sparse and a marginal-probability setting~\cite{Varela_Keller_Junior_Moreira_Chaves_Macêdo_2026}, but this also does not apply to many-body systems, which in general have exponentially many non-zero Pauli weights.
Generic quantum states are, in that sense, too general to be of relevance to many-body physics, which deals with systems having significantly more structure in their correlations. Tensor networks (TN) provide a canonical family making these correlations explicit, while also providing a compact description. Their central idea is to replace the exponentially large number of amplitudes of a many-body wavefunction by a contraction network of tensors, yielding a description of size polynomial in the number of qubits whenever the bond dimension grows slow enough. The resulting class is that of projected entangled pair states (PEPS)~\cite{Cirac_Perez-Garcia_Schuch_Verstraete_2021}, and their one-dimensional counterpart, matrix product states (MPS)~\cite{Perez-Garcia_Verstraete_Wolf_Cirac_2007,fannes1992finitely}. These families capture states relevant for many-body physics, as they efficiently encode the entanglement area law~\cite{Eisert2010}, obeyed by ground states of gapped, local Hamiltonians~\cite{Hastings_2006,arad2013area,anshu2022area}, and in two dimensions represent exactly, at constant bond dimension, all known (non-chiral) topological phases~\cite{levin2005string,buerschaper2009explicit,gu2009tensor}.

It is then natural to ask how hard it is to evaluate stabilizer R\'enyi entropies when the input is a PEPS of constant bond dimension, specified efficiently by its tensors, rather than a generic state. Note that the hardness results for unstructured inputs do not answer this question, as the hard instances they produce need not admit a tensor-network representation at bounded bond dimension. At the same time, the ansatz has additional structure that an algorithm could exploit, so tractability is not excluded either. The answer has thus to be established for the tensor-network input model itself, and it need not be the same one as for generic states. Indeed, in one spatial dimension the MPS restriction drastically changes the complexity: stabilizer R\'enyi entropies of MPS can be computed at a cost polynomial in the system size and in the bond dimension~\cite{Haug_Piroli_2023,PhysRevLett.131.180401,PhysRevLett.133.010602}. This efficiency is however restricted to one dimension, and thus excludes 2D locally interacting systems, such as those with topological order.

In this Letter, we answer this question by determining the worst-case complexity of SE evaluation and stabilizer membership decision for 2D PEPS (\cref{fig:results}). In short, we show that neither of these problems can be solved with polynomial resources, on a classical or a quantum computer, even for constant bond dimension to an additive precision. This is subject to the standard complexity assumption that quantum computers cannot efficiently solve all $\mathrm {NP}$ problems~\cite{BennettBernsteinBrassardVazirani1997Strengths,Aaronson2005NPComplete}. 

In particular, we first show that SE evaluation is at least as hard as $\sharpP$-counting~\cite{valiant_permanent}, already for 2D qubit PEPS of bond dimension $D=4$. This renders SE evaluation as hard as arbitrary PEPS contractions~\cite{Schuch_Wolf_Verstraete_Cirac_2007}, including exactly counting the solutions of $\mathrm{NP}$ problems. Moreover, we prove that computational hardness is retained when computing SE in PEPS up to constant accuracy. 

Often, however, the goal is not to quantify nonstabilizerness, but only to decide whether a state is a stabilizer: this weaker decision problem is known as stabilizer membership. We study its complexity, both exactly and up to constant accuracy, for small bond dimension 2D PEPS, and show that both versions are at least as hard as the complexity class $\CeqP$~\cite{gapP}, which amounts to deciding whether a Boolean expression admits as many YES as NO instances. An efficient classical or quantum algorithm for either is highly unlikely, as it would imply a collapse of the classical polynomial hierarchy or the inclusion $\rm NP \subseteq \BQP$, respectively. Certifying the stabilizer structure from a 2D PEPS description thus remains intractable in the worst case, even at small bond dimension.

We further prove that hardness extends to non-local nonstabilizerness: deciding whether a given PEPS can be transformed into a stabilizer state by using factorized unitaries across a bipartition is $\CeqP$-hard, already for square-lattice qubit PEPS of constant bond dimension.

\textit{Stabilizer formalism}---Let $\pauli_N:=\{\Id,X,Y,Z\}^{\otimes N}$ be the set of phase-free Pauli strings on $N$ qubits, and let $\STAB_N$ be the set of pure $N-$qubit stabilizer states, namely $+1$ eigenstates of $N$ algebraically independent commuting Pauli strings \cite{nielsen_chuang}. Stabilizer states, and the dynamics that preserve them (namely the Clifford group), can be simulated efficiently on a classical computer~\cite{nielsen_chuang,Gottesman_1998}.
In turn, when a state evolves under non-Clifford dynamics, this efficient simulation scheme gradually breaks down~\cite{bravyi2016improved}: the property of being outside of the $\STAB_N$ set is called nonstabilizerness. In this work, we use stabilizer entropy (SE)~\cite{Leone_Oliviero_Hamma_2022} to quantify nonstabilizerness. Given a pure state $\ket{\psi}$, we define its characteristic distribution over $\pauli_N $ as $\Xi_P(\psi):=2^{-N}\Tr(P\ketbra{\psi})^2$. This distribution quantifies the probability of finding the Pauli string $P$ in the state representation of $\ket{\psi}$. Then, the $\alpha-$SE, for an integer $\alpha>1$, is defined as the (offset) $\alpha-$R\'enyi entropy of the characteristic distribution:
\begin{equation}
\label{main:se_def}
        M_\alpha (\psi):=\frac{1}{1-\alpha}\log \left(2^{N(\alpha-1)}\sum_{P}\Xi_P(\psi)^\alpha\right)\,,
\end{equation}
with the logarithm to be intended in base two.
Pure normalized stabilizer states satisfy the condition $M_\alpha(\psi)=0\iff \ket\psi\in\STAB_N$ for all $\alpha$, with higher values attained by non-stabilizer states~\cite{Leone_Oliviero_Hamma_2022,Bittel2026operational}: SE quantifies the flatness of the characteristic distribution, and stabilizer states minimize this spread.

\textit{Projected entangled pair states}---We now introduce PEPS for qubits on a square lattice; a thorough introduction can be found in Ref.~\cite{Cirac_Perez-Garcia_Schuch_Verstraete_2021}. Given such a lattice $\Lambda=(V,E)$, we first place a $D-$dimensional maximally entangled state $\ket{\Omega}_e$ over each edge $e \in E$ and then apply a local linear map $A_v: (\mathbb{C}^D)^{\otimes \deg (v)} \to \mathbb{C}^2$ at each vertex $v \in V$, where $\deg (v)$ denotes the degree (e.g., 4 in the bulk). The resulting (unnormalized) state
\begin{equation}
        \unpeps=\left(\bigotimes_{v\in V}A_v\right)\bigotimes_{e\in E }\ket{\Omega}_e\,=\!\!\!\!\raisebox{-0.35\height}{\includegraphics[height=3.5 em]{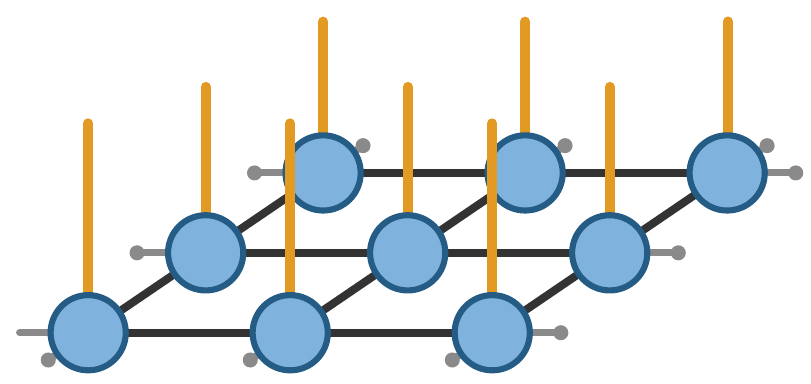}}
        \label{main:eq_pepsdef}
\end{equation}
is a PEPS of \emph{bond dimension} $D$, specified by local maps $\{ A_v \}$, and represents a state of qubits placed on the vertices of $\Lambda$. Introducing a computational basis, each $A_v$ can be represented by a \emph{tensor}, illustrated graphically as 
\begin{equation}
    \mel{i}{A_v}{j_1 j_2 j_3 j_4}=\raisebox{-0.4\height}{\includegraphics[height=4 em]{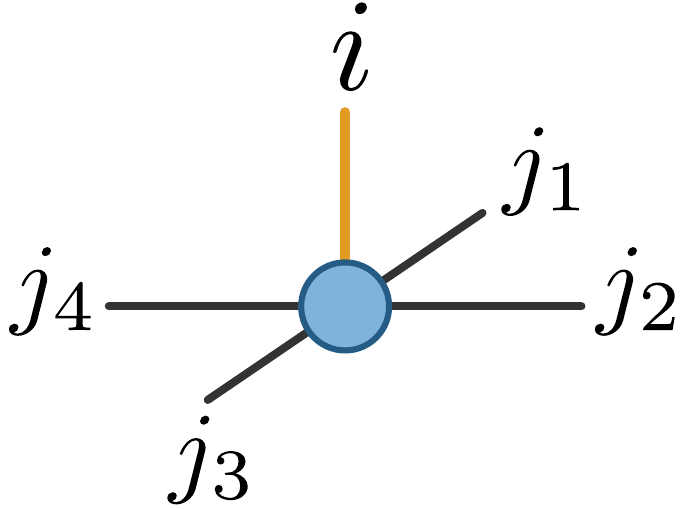}}\in \mathbb{C}\,.
\end{equation}
We refer to the $j$ tensor legs as \emph{virtual}.

State-specific quantities of $\unpeps$, like computational basis amplitudes, its norm, and expectation values of local observables, are obtained by tensor contractions, denoted as $\mathcal{C}(A)$, which consist of summing over virtual tensor indices: for instance, the evaluation of its norm reads
\begin{equation}
    \braket*{\widetilde{\psi}(A)}=\raisebox{-0.33\height}{\includegraphics[height=5 em]{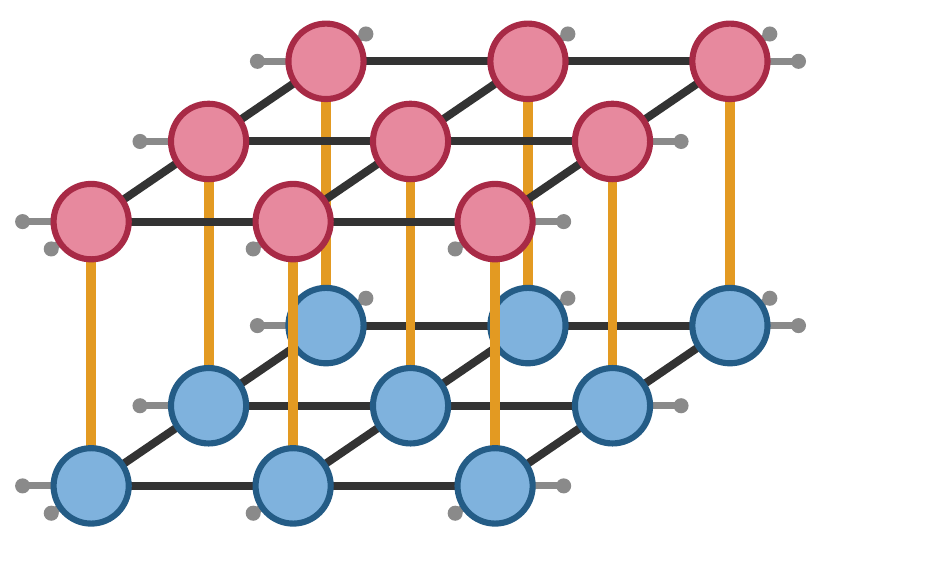}}\,,
\end{equation}
where the red circles represent tensors with complex-conjugated entries.
Despite their simple representations, PEPS contractions in 2D are known to be computationally hard to carry out already for $D = 2$~\cite{Schuch_Wolf_Verstraete_Cirac_2007}: contracting generic PEPS would allow solving $\sharpP$ problems (namely, counting the satisfying assignments of a given Boolean formula~\cite{valiant_permanent,AroraBarak2009}): this class contains $\mathrm{NP}$ and is not expected to be tractable with polynomial resources for either classical or quantum computers. This hardness result has also been refined, proving that $\sharpP$ hardness persists for the average case~\cite{Haferkamp_Hangleiter_Eisert_Gluza_2020}, and computing local expectation values in more restricted tensor-network families, such as isometric TN states~\cite{Malz_Trivedi_2024}, exhibits $\BQP-$hardness.

\begin{figure}[t!]
        \includegraphics[width=\linewidth]{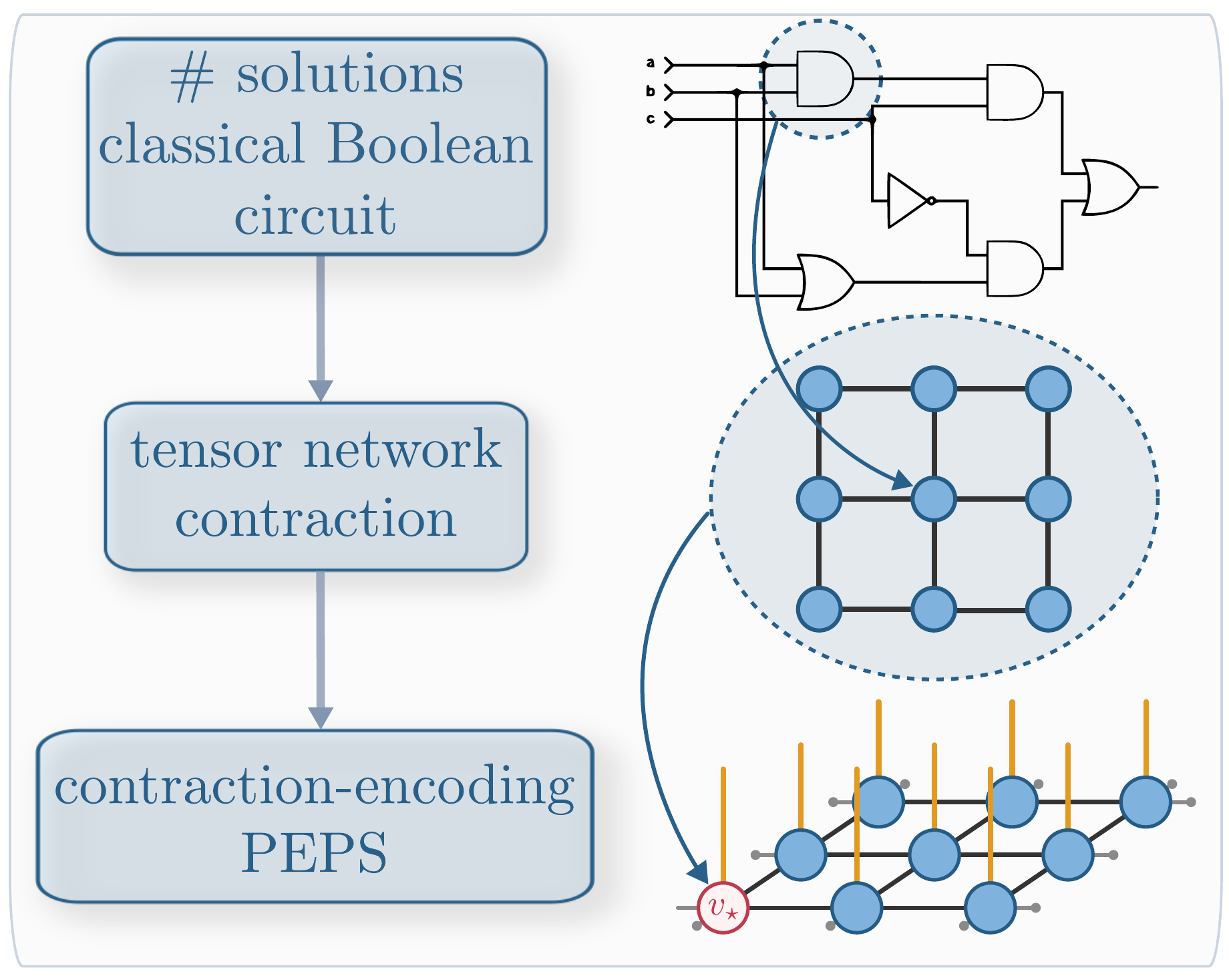}
        \caption{Graphical schematizations of the proofs of computational hardness in this work. The main idea is always to encode the number of satisfying assignments of a Boolean circuit in a scalar TN contraction, then embed its result in the 1-qubit amplitude of a specific PEPS.}
        \label{fig:methods}
\end{figure}

\textit{Stabilizer entropy evaluation}---How hard is it to compute the amount of nonstabilizerness in PEPS? Focusing on the SE, we have:  
\begin{problem}[Exact PEPS SE evaluation]
\label{main:prob_sp}
 Given a rational tensor description $\{A_i\}$ of a PEPS on a lattice with bond dimension $D$, output a dyadic rational number (namely, a rational with its denominator a power of two) that approximates the $\alpha-$SE of the corresponding normalized PEPS $\ket{\psi(A)}$ up to $m$ digits.
\end{problem}

Focusing on the $\alpha=2$ case for simplicity (but without loss of generality), we now show that computing SE of a PEPS can be recast as two PEPS contractions. In particular, the argument of the logarithm in Eq.~\eqref{main:se_def} can be rewritten using the replica trick:
\begin{equation}
    M_2(\ket*{\psi(A)})=-\log \left[2^N\frac{\bra*{\widetilde{\psi}(A)}^{\otimes 4}Q_N \ket*{\widetilde{\psi}(A)}^{\otimes 4}}{\braket*{\widetilde{\psi}(A)}^4}\right]
\end{equation}
with $Q_N:=4^{-N}\sum_{P\in\pauli_N} P^{\otimes 4}$ being an orthogonal projector acting on four copies of $(\mathbb{C}^{2})^{\otimes N}$ and that satisfies $Q_N=\bigotimes_{i=1}^N Q_1$. Applying this formula to a PEPS with tensors $A = \{A_i\}$ gives the following tensor contractions:
\begin{equation}
    \!M_2(\ket{\psi(A)})=-\log \left[2^N\;\frac{\raisebox{-0.45\height}{\includegraphics[height=5 em]{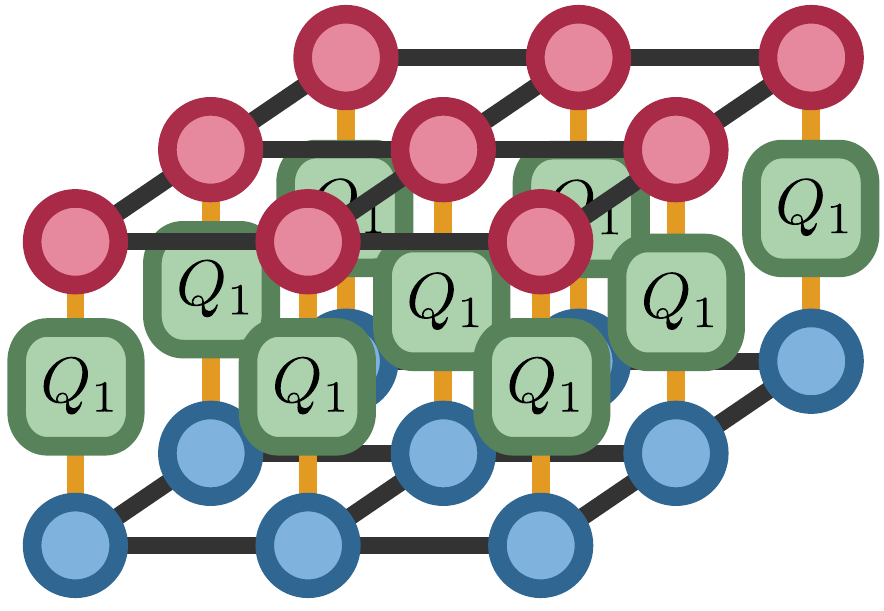}}}{\left(\raisebox{-0.45\height}{\includegraphics[height=5 em]{figures/double_layer_network.pdf}}\right)^4}\right]
    \label{eq:sre_PEPS}
\end{equation}
with
$
    \raisebox{-0.38\height}{\includegraphics[height=3.3 em]{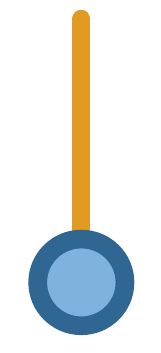}}=\raisebox{-0.33\height}{\includegraphics[height=4 em]{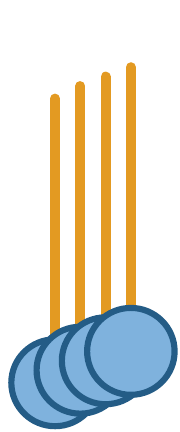}}
$
representing the 4-copy replicated tensors. It is well known that these contractions can be efficiently carried out in the 1D MPS case~\cite{Haug_Piroli_2023}. However, it is not entirely clear whether an efficient algorithm for 2D PEPS can exist: the denominator contraction in Eq.~\eqref{eq:sre_PEPS} is known to be computationally hard in the worst case for PEPS~\cite{Schuch_Wolf_Verstraete_Cirac_2007}, but the SE contraction has additional structure: first, it is a ratio between two PEPS contractions that are dependent on one another, so simplifications cannot be excluded, and second, the $Q_1$ projector at each physical site restricts the Hilbert space accessible to the 4-copy replicated tensor states. Thus, an efficient algorithm cannot be ruled out by the hardness of PEPS contractions alone. Despite this, we are able to prove that the task of computing SE of PEPS retains the full hardness of computing generic 2D contractions.

\begin{theorem}[SE evaluation is $\sharpP-$hard]
        \label{main:thm_sp}
        Problem~\ref{main:prob_sp} for $D=4$ and digit precision $m=\Omega(\poly(N)+\log \alpha)$ is $\sharpP$-hard. It can be solved with polynomial-time post-processing and a single call to a $\sharpP$ oracle.
\end{theorem}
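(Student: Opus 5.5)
The proof has two directions. For hardness, we reduce a $\GapP$ quantity (equivalently $\sharpP$ counting) to the SE of a suitably constructed PEPS. For the upper bound, we write the SE as a ratio of two contractions, each computable by a $\sharpP$ oracle.

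\emph{Hardness.} Following Schuch et al.~\cite{Schuch_Wolf_Verstraete_Cirac_2007}, and as sketched in \cref{fig:methods}, a Boolean circuit $f$ on $n$ inputs can be embedded in a planar scalar tensor network of constant bond dimension with rational entries. Crossings are handled by swap gadgets, and gate tensors are placed on the square lattice. The network contracts to $c=\#\{x:f(x)=1\}$, or to a gap $g=\sum_x(-1)^{f(x)}$. From this network I would build a PEPS on $N=\poly(n)$ qubits. All vertices but one carry a fixed physical state, say $\ket0$, so they act as purely virtual tensors. The distinguished vertex $v_0$ carries the qubit $a\ket0+b\ket1$, where the $\ket0$ branch comes from a trivial, product-like contraction with known value $a$ (for instance $a=2^n$, obtained by letting the virtual configuration route through an all-ones network), and $b=g$ is the counting network. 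Putting this direct sum on the virtual bonds doubles the dimension, which is why $D=4$ instead of $2$. The normalized state is then $\ket{0}^{\otimes N-1}\otimes\ket{\phi}$ with $\ket\phi\propto a\ket0+b\ket1$. Since the SE is additive and vanishes on $\ket0$, we get $M_\alpha(\psi)=M_\alpha(\phi)$. For a real single-qubit state with Bloch vector $(\sin\theta,0,\cos\theta)$,
\begin{equation}
M_\alpha(\phi)=\frac{1}{1-\alpha}\log\frac{1+\cos^{2\alpha}\theta+\sin^{2\alpha}\theta}{2},\qquad \cos\theta=\frac{a^2-b^2}{a^2+b^2}.
\end{equation}
This is an explicit function of $t=b^2/a^2$. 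I would check that it is monotone on the relevant range: by choosing signs, or by shifting with $a$ large, we can keep $t\in[0,1]$, where $M_\alpha$ increases with $\theta\in[0,\pi/2]$. Then $t$ is recovered from $M_\alpha$ by inverting this function. The target $b^2$ is an integer bounded by $4^n$, so to recover it we need $t$ to precision $2^{-O(n)}$. The derivative of the inverse map may blow up polynomially in $\alpha$ near the endpoints, and this is where the $\log\alpha$ term in $m=\Omega(\poly(N)+\log\alpha)$ enters. Knowing $|g|$ does not by itself give $\sharpP$-hardness, so I would instead use the counting network directly with $b=c\ge0$, or use $g=2^n-2c$ and reduce to $|g|$, which is enough for $\sharpP$-hardness under polynomial-time Turing reductions.

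\emph{Upper bound.} The SE is determined by the ratio in \eqref{eq:sre_PEPS} (for general $\alpha$ we use $2\alpha$ replicas and $Q^{(\alpha)}$). Numerator and denominator are sums of exponentially many products of rational tensor entries. After clearing denominators, both are $\GapP$ functions, which are in fact nonnegative, and a single $\sharpP$ query can return both by packing them into one number with disjoint bit ranges. The logarithm is then evaluated to $m$ digits in polynomial time.

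\emph{Main obstacle.} I expect the hardest step to be the precision analysis: showing that the inverse of the single-qubit SE function loses only $O(n+\log\alpha)$ bits near the degenerate endpoints ($t\to0$, where $M_\alpha\approx\Theta(\alpha t)$ up to the prefactor). A secondary obstacle is keeping the gadget planar at $D=4$ while realizing the direct sum. If a direct sum in one bond is awkward, I would fall back to routing a single control bond of dimension $2$ alongside the dimension-$2$ Schuch network. The tensor product of the two gives $D=4$.
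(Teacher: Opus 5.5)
Your architecture matches the paper's. It uses a sitewise direct sum of the counting network with a constant-valued network, routed to the two physical values of one marked qubit (Lemma~\ref{lem:branch-amplitude}), and pads with $\ket{0}$'s. It then uses the one-qubit closed form $\stp_\alpha=\tfrac12(1+\cos^{2\alpha}\theta+\sin^{2\alpha}\theta)$ and an inversion step. For the upper bound it uses the $2\alpha$-replica contraction with both contractions packed into one oracle answer.

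\textbf{The gap: the inversion step fails as written.} Your claim that $M_\alpha$ increases with $\theta$ on $[0,\pi/2]$, i.e.\ for $t=b^2/a^2\in[0,1]$, is false. $\stp_\alpha$ is invariant under $\theta\mapsto\pi/2-\theta$, which is the Clifford Hadamard. It sends $a\ket{0}+b\ket{1}$ to a multiple of $(a+b)\ket{0}+(a-b)\ket{1}$, i.e.\ $r=b/a\mapsto(1-r)/(1+r)$. So $M_\alpha$ increases only up to $\theta=\pi/4$, i.e.\ $b/a=\sqrt2-1$, and then decreases back to zero at $b/a=1$.

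With your concrete choice $a=2^n$, $b=c\in\{0,\dots,2^n\}$, the ratio $c/2^n$ sweeps across this critical point. Then $c=0$ gives $\ket{0}$ and $c=2^n$ gives $\ket{+}$, both stabilizer states with $M_\alpha=0$, so the SE does not determine the count. Choosing signs does not help either, since $b\mapsto -b$ is the Clifford $Z$.

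For the same reason you have mislocated the precision bottleneck. As $t\to0$ one has $dM_\alpha/dt\to 2\alpha/((\alpha-1)\ln 2)$, which is bounded away from zero. The derivative actually vanishes at $b/a=\sqrt2-1$, which lies inside your range.

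\textbf{The fix is cheap.} Take $a=2^{n+2}$, so that $b/a\le 1/4<\sqrt2-1$. The paper stays on a monotone branch in the same way, up to the Clifford $X$: it writes the marked qubit as $(\lambda+f(x))\ket{0}+\ket{1}$ with $\lambda=242/100>\sqrt2+1$. There Lemma~\ref{lem:sp_t} and Theorem~\ref{thm:sp_alpha_hard} give strict monotonicity.

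\textbf{Comparison of how precision is handled.} Once repaired, your route differs from the paper's only here.
\begin{itemize}
\item You invert $M_\alpha$ directly on the gadget. This needs a lower bound on $|dM_\alpha/dt|$ over the restricted range, which is a positive constant for fixed $\alpha$. It also needs certified numerical evaluation of $M_\alpha$ at every binary-search candidate. This gives roughly $2n+O_\alpha(1)$ digits, with $n$ the witness length.
\item The paper instead proves a gadget-independent equivalence (Theorem~\ref{thm:complexity_stabilizer_entropy}). The map $x\mapsto 2^{(1-\alpha)x}$ is $(\alpha-1)\ln 2$-Lipschitz on $[0,\infty)$, and $\stp_\alpha$ is a rational with denominator at most $B=2^{q(n)}$. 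Hence $m_\star=2q+\lceil\log\alpha\rceil+2$ digits of $M_\alpha$ determine $\stp_\alpha$ exactly via continued fractions. $f(x)$ is then found by binary search using exact rational comparisons of $p_\alpha(\lambda+k)$.
\end{itemize}
That is where the $\log\alpha$ in $m$ comes from, not from a blow-up of the inverse derivative. The paper's version needs the denominator bound, but it yields a one-query, two-way reduction between dyadic SE and exact purity for arbitrary PEPS. Yours avoids rational reconstruction but is tied to the specific one-qubit gadget.
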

Recall that $\sharpP$ is the class of counting problems that can be represented as satisfying assignments of an (efficiently computable) polynomial-size Boolean circuit~\cite{valiant_permanent}. Thus, $\sharpP$ is harder than $\NP$. Here, $\sharpP-$hardness of Problem~\ref{main:prob_sp} means that any other problem in $\sharpP$ can be reduced with a polynomial number of operations to the computation of the SE of a PEPS. To prove this statement, one has to find a way to map counting the YES instances of a Boolean circuit into computing SE for a suitable PEPS. It is known that counting YES instances of a Boolean circuit can be written as a TN contraction where the tensors enforce the truth tables of the classical gates~\cite{Kourtis_Chamon_Mucciolo_Ruckenstein_2019}. Given this TN one can then define a PEPS of bond dimension 4 that generates the state
 \begin{equation} \label{eq:form_marked_qubit}
          \frac{t\ket{0}+\ket{1}}{\sqrt{1+t^2}}\otimes \ket{0\dots0}\,,
    \end{equation}
where $t$ is the value of the TN contraction (\cref{fig:methods}). Then the SE of said state will be in 1:1 correspondence with the original counting, and with polynomial-time post-processing one is able to reconstruct its value. A complete proof of Theorem~\ref{main:thm_sp} can be found in App.~\ref{sec:sp-complexity}.

\textit{Approximate SE evaluation---}The solution of Problem~\ref{main:prob_sp} retains computational hardness also when trying to computing PEPS SE up to constant accuracy: in particular, our second result establishes that solving Problem~\ref{main:prob_sp} up to a constant error is computationally hard for both classical and quantum computers, even for small bond dimension. In particular, if a polynomial-time classical algorithm for deciding this problem existed, then both the polynomial hierarchy would collapse at the first level (namely, $\mathrm{P}=\NP$ would follow) and polynomial-time quantum computation, i.e. $\BQP$, would collapse into $\rm P$. Moreover, if a polynomial-time quantum algorithm for Problem~\ref{main:prob_sp} up to constant error existed, it would also imply that quantum computers can solve every $\NP$ problem; both implications are highly unlikely under standard complexity assumptions. To prove these implications, we show that Problem~\ref{main:prob_sp} is hard for a complexity class known as $\CeqP$~\cite{gapP,FortnowRogers1999ComplexityLimitations}
which consists of the task of determining whether two $\sharpP$ functions yield exactly the same count.
\begin{theorem}[Constant-accuracy SE is hard]
\label{main:cor_constant_accuracy}
For every fixed integer $\alpha>1$, solving Problem~\ref{main:prob_sp} up to additive error at most $\varepsilon_\alpha = \Omega(1)$, independent of $N$, is
$\CeqP$-hard, already
at $D=33$.
For $\alpha=2$, one may take $\varepsilon_2=1/100$.
\end{theorem}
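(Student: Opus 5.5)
We reduce from a $\CeqP$-complete problem: given two polynomial-size Boolean circuits $C_1, C_2$ on $n$ inputs, decide whether $\#C_1=\#C_2$. Equivalently, we decide whether the gap $g=\#C_1-\#C_2$, a $\GapP$ function, vanishes. The plan is to build, in polynomial time, a 2D PEPS whose state is a stabilizer state exactly when $g=0$, and whose stabilizer entropy is bounded below by a fixed constant $c_\alpha>2\varepsilon_\alpha$ whenever $g\neq0$. Any algorithm approximating $M_\alpha$ to additive error $\varepsilon_\alpha$ then decides the question: in the YES case it outputs a value $\le\varepsilon_\alpha$, and in the NO case a value $\ge c_\alpha-\varepsilon_\alpha>\varepsilon_\alpha$.

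\textbf{Step 1: the gap as a contraction.} As in the proof of Theorem~\ref{main:thm_sp}, we write $\#C_1$ and $\#C_2$ as scalar TN contractions of gate truth-table tensors, laid out on a square lattice. We combine them into a single contraction whose value is $g$, for instance by adding a control bond that selects $C_1$ with sign $+1$ or $C_2$ with sign $-1$, and summing over it. Running both circuits and the selector in parallel costs extra bond dimension. I expect this is where $D=33$ comes from, e.g.\ $D=2\cdot 16+1$ from two copies of the $D=4$ bond together with an extra ``marker'' value used to route the output.

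\textbf{Step 2: amplify to a constant gap.} Embedding $g$ directly as in Eq.~\eqref{eq:form_marked_qubit} does not work. With $t=g$ the state $(t\ket0+\ket1)/\sqrt{1+t^2}$ is a stabilizer state both at $t=0$ and as $t\to\infty$, so a large $|g|$ would also give small SE. Instead, I would embed $g$ into an amplitude in a way that drives the marked qubit to a fixed magic state, such as $\ket{T}$ or $(\ket0+e^{i\pi/4}\ket1)/\sqrt2$, whenever $g\ne0$.

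One route is to prepare a marked qubit with amplitudes $\ket0$ for $g=0$ and $\ket0+\lambda g\ket1$ otherwise, but a generic $g$ prevents this from landing on a fixed state. A more robust option is to reduce first to the $\CeqP$-complete promise variant in which $g\in\{0\}\cup\{\pm 2^{k}\}$ with $k$ known, or $g^2\in\{0, G\}$ with $G$ known. This can be done by standard $\GapP$ closure: use $g^2$, which is nonnegative, and apply a polynomial amplification trick. We then normalize by multiplying with the known scalar $2^{-k}$, which is dyadic, so tensor entries stay rational and of polynomial size.

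An alternative that avoids the promise is to use many marked qubits. Take $K=\mathrm{poly}(N)$ copies of the marked qubit, each with amplitude ratio $t_j=g\cdot 2^{-j}$ for $j=0,\dots,\mathrm{poly}$. For some $j$ we get $|t_j|\in[1/2,1]$, where that qubit has SE at least a constant. SE is additive on product states, and every term is nonnegative, so the total SE is at least that constant. If $g=0$, all qubits are $\ket0$ and the state is a stabilizer state with $M_\alpha=0$. I would adopt this multi-copy version. Each copy is a separate PEPS region containing the gap contraction, rescaled by $2^{-j}$ at one tensor, so all copies sit side by side in 2D at the same $D$.

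\textbf{Step 3: constants.} For a single qubit $(t\ket0+\ket1)/\sqrt{1+t^2}$ with real $t$, the Pauli expectations are $\langle Z\rangle=(t^2-1)/(1+t^2)$ and $\langle X\rangle=2t/(1+t^2)$. This gives $M_\alpha$ in closed form, and we minimize it over $|t|\in[1/2,1]$. At $\alpha=2$ the minimum should exceed $2/100$, which yields $\varepsilon_2=1/100$. For general $\alpha$ the minimum is a positive constant $c_\alpha$, and we set $\varepsilon_\alpha=c_\alpha/3$.

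The main obstacle is Step 2, specifically placing all the copies, the selector, and the marker routing in a single 2D PEPS with a fixed bond dimension while keeping every copy's state exactly a product $\ket{\text{marked}}\otimes\ket{0\cdots0}$. Concretely, the contraction value must be transported to one physical qubit without leaving residual entanglement. I would first try to reuse the Theorem~\ref{main:thm_sp} construction verbatim per copy and count the bond dimension it needs.
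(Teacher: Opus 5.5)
Your Step~3 fails as written. The window $|t|\in[1/2,1]$ contains $t=\pm1$. There $(t\ket0+\ket1)/\sqrt{1+t^2}=\ket{\pm}$ is a stabilizer state: with your own expectation values, $\langle X\rangle=\pm1$ and $\langle Z\rangle=0$, so $\stp_\alpha=\frac12\bigl[1+\langle X\rangle^{2\alpha}+\langle Z\rangle^{2\alpha}\bigr]=1$ and $M_\alpha=0$. The minimum you need to be a positive constant is therefore zero, and by continuity so is the infimum over $[1/2,1)$. Hence no $c_\alpha>0$ and no $\varepsilon_\alpha$ come out, and the claim that the minimum exceeds $2/100$ at $\alpha=2$ is false. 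This is not a corner case: $|g|=2^k$ puts copy $j=k$ exactly on $\ket{\pm}$.

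The repair stays inside your construction. The only stabilizer points of this real one-parameter family are $t\in\{0,\pm1\}$ and $|t|\to\infty$, so use a dyadic window that avoids them, e.g.\ $|t|\in[1/4,1/2)$. If $|g|\le 2^{p(n)}$, the copy with $j=\lfloor\log_2|g|\rfloor+2\le p(n)+2$ lands there, so $p(n)+3$ copies suffice. Writing $t=\tan(\theta/2)$ gives $\stp_\alpha=\frac12\bigl[1+\sin^{2\alpha}\theta+\cos^{2\alpha}\theta\bigr]$, which decreases on $(0,\pi/4)$, increases on $(\pi/4,\pi/2)$, and is symmetric about $\pi/4$. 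On $[1/4,1/2]$ its maximum is therefore at $t=1/4$, whence $M_\alpha\ge c_\alpha:=-\frac{1}{\alpha-1}\log\frac12\bigl[1+(8/17)^{2\alpha}+(15/17)^{2\alpha}\bigr]>0$. For $\alpha=2$ this gives $c_2\approx0.27$, comfortably above $2\varepsilon_2=1/50$.

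The rest of your argument is then fine, since $M_\alpha$ is additive and nonnegative on product states. The obstacle you anticipate also does not arise. Disjoint lattice regions joined by bonds of dimension one give an exact product state. Each copy is just Lemma~\ref{lem:branch-amplitude} with two branches: $2^{-j}$ times the gap network routed to $\ket0$, and a scalar-one network routed to $\ket1$. This has bond dimension $D_g+1$, well below $33$.

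For comparison, the paper uses no copies. Lemma~\ref{lem:saturated-gadget} makes the marked-qubit amplitude ratio itself a ratio of two contractions: $c\,\cont(T)^2$ routed to $\ket0$ against $1+\cont(T)^2$ routed to $\ket1$, so $t=F(g)=cg^2/(1+g^2)$. With $c=1/4$ this sends $g=0$ to $t=0$ and every nonzero integer into $[1/8,1/4)$, a window bounded away from all stabilizer points. A single qubit then already gives $\stp_\alpha\le p_\alpha(1/8)=1-\Delta_\alpha$, and the threshold test is the one in your opening paragraph. The doubling $T\otimes T$ is exactly what costs $D=2\cdot16+1=33$.

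Your corrected route replaces this saturating rational map by dyadic rescaling over $O(p(n))$ copies. It gives hardness at the smaller bond dimension $D_g+1$, at the price of more qubits and of relying on additivity of $M_\alpha$. The paper's single-marked-qubit gadget, by contrast, is reused verbatim for its exact-membership, distance and $\lstab$ reductions.

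Finally, the promise variant you float first ($g\in\{0\}\cup\{\pm2^k\}$ with $k$ known) does not follow from $\GapP$ closure properties. No way is known to force every nonzero gap of a $\CeqP$ language to a prescribed value, so you were right not to build on it.
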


Note the tradeoff between the strength of the hardness result and the required accuracy: here we obtain only $\CeqP$-hardness, but at constant accuracy, whereas \cref{main:thm_sp} earlier establishes $\sharpP$-hardness at the cost of a much more stringent approximation. Regardless, this suffices, under the standard complexity assumptions discussed earlier, to rule out an efficient classical or quantum algorithm for Problem~\ref{main:prob_sp}. To map a generic $\CeqP$ cancellation problem to an constant accuracy SE evaluation instance, one follows a similar procedure to the one used to prove Theorem~\ref{main:thm_sp}. This time one can write two Boolean formulas as classical Boolean circuits and compute two $\sharpP$ functions $f_1$ and $f_2$ counting their satisfying instances with two TN contractions $\mathcal{C}(A_{f_1})$ and $\mathcal{C}(A_{f_2})$. Then by computing $\mathcal{C}[A_{f_1}\oplus (-A_{f_2})]=\mathcal{C}(A_{f_1})-\mathcal{C}(A_{f_2})$, one is able to write $f_1-f_2$ in a single TN contraction. Then, using these tensors one is able to build a PEPS state containing $\mathcal{C}[A_{f_1}\oplus (-A_{f_2})]$ in the amplitude of one of its qubits. By construction, evaluating said PEPS SE up to constant error allows for deciding if $f_1$ and $f_2$ cancel each other out or if their difference is greater than zero. We refer to \cref{sec:constant_acc_se} of the Appendix for details about the construction and proof.

\textit{Exact stabilizer membership}---Despite having established that quantifying nonstabilizerness of a PEPS is hard, it is often enough to decide whether a given state is a stabilizer state at all, i.e. to answer the stabilizer membership problem. This seemingly easier question is nonetheless relevant: stabilizer states are far more efficient to prepare and manipulate on quantum devices~\cite{DehaeneDeMoor2003, Maslov2018}, possess a structured entanglement spectrum~\cite{Fattal_Cubitt_Yamamoto_Bravyi_Chuang_2004,Iannotti_2025} and admit efficient evaluation of physical quantities such as expectation values of observables, which remain computationally hard to extract from generic PEPS~\cite{Schuch_Wolf_Verstraete_Cirac_2007}. We formalize this as follows.

\begin{problem}[Exact PEPS stabilizer membership --  $\stabpeps$]\label{main:prob_exact-membership}
  Given a non-zero PEPS $A=\{A_i\}$ of bond dimension $D$ on the $N-$qubit square lattice with rational entries, output $\rm YES$ if and only if the corresponding normalized state $\ket{\psi(A)}$ belongs to $\STAB_N$, otherwise output $\rm NO$.
\end{problem}
While one could imagine this weaker question is efficiently solvable, our next result instead proves that deciding exact PEPS stabilizer membership is still computationally hard for classical and quantum alike, by placing it as a complete problem for the previously mentioned $\CeqP$ complexity class.

\begin{theorem}[$\stabpeps$ is $\CeqP$-complete]
\label{main:thm_exact-membership}Problem~\ref{main:prob_exact-membership} with $D= 33$ is $\CeqP$-complete under polynomial-time reductions.
\end{theorem}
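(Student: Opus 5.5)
We prove $\CeqP$-hardness and membership in $\CeqP$ separately.

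\emph{Hardness.} We reuse the construction sketched for \cref{main:cor_constant_accuracy}. Take a $\CeqP$ instance, given as two polynomial-size Boolean circuits with counting functions $f_1,f_2$. Following Ref.~\cite{Kourtis_Chamon_Mucciolo_Ruckenstein_2019}, each count is written as a scalar TN contraction, and the direct sum $A_{f_1}\oplus(-A_{f_2})$ gives one planar network of constant bond dimension whose value is $t=f_1-f_2$. Crossings are removed with swap tensors; we believe this, together with the direct sum and the marked-qubit register below, accounts for $D=33$.

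We then embed $t$ in a PEPS whose state is
\[
\frac{t\ket{0}+c\ket{1}}{\sqrt{t^2+c^2}}\otimes\ket{0\cdots0},
\]
with $c$ an integer constant. The marked site reads a virtual bond carrying the contraction value, as in the proof of \cref{main:thm_sp}.

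The choice of $c$ must be made with care. If the amplitude ratio were the stabilizer value ($t=0$) exactly when $f_1=f_2$, the reduction would be direct. But a single-qubit real state $t\ket0+c\ket1$ is a stabilizer state iff $t\in\{0,\pm c\}$ or $c=0$. So we must prevent $t=\pm c$ from occurring for nonzero $t$. Two options work:
\begin{itemize}
\item Shift the counts so that $|t|$ is either $0$ or strictly less than $c$, for instance by scaling $t\mapsto 2t$ and choosing $c$ odd.
\item Use the amplitude pair $(t,\,c+t)$ or $(t, ct)$ style encodings, where a stabilizer arises only at $t=0$.
\end{itemize}
The simplest concrete choice is $c$ odd and $t\mapsto 2t$. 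Then $2t=\pm c$ is impossible by parity, and $t=0$ iff $f_1=f_2$. The resulting state is nonzero, so it is a valid input, and it is in $\STAB_N$ iff $f_1=f_2$.

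\emph{Membership.} We must express ``$\ket{\psi(A)}\in\STAB_N$'' as a $\CeqP$ predicate. By the characterization $M_2=0\iff$ stabilizer, membership is equivalent to
\[
2^N\sum_P \Tr(P\psi)^4=\Big(\sum_P\Tr(P\psi)^2\Big)^2,
\]
which is the replica identity
\[
2^N\,\bra{\tilde\psi}^{\ot4}Q_N\ket{\tilde\psi}^{\ot4}=\braket{\tilde\psi}^4 .
\]
The right-hand side uses $\sum_P\Tr(P\tilde\psi)^2=2^N\braket{\tilde\psi}^2$.

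Both sides are sums over exponentially many paths of products of polynomially many rational entries, namely the PEPS contractions. After clearing a common denominator, each side is a $\GapP$ function. Since $\GapP$ is closed under products and multiplication by $2^N$, the predicate ``$g_1-g_2=0$'' for $g_1,g_2\in\GapP$ lies in $\CeqP$.

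Complex rational entries are handled by splitting into real and imaginary parts. The quantities above are real, so the equality reduces to one real $\GapP$ identity; the imaginary part vanishes automatically.

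\emph{Main obstacle.} The subtle step is the hardness gadget: ensuring the single-qubit encoding is stabilizer precisely at $t=0$ and never at accidental values such as $t=\pm c$. The parity trick handles this. Beyond that, the remaining work is bounding the bond dimension at $33$ through the planarization.
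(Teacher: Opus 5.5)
Your proof is correct. The membership half is the same as the paper's: faithfulness of $\stp_2$ turned into the $\GapP$ zero test $2^N S(A)-Z(A)^4=0$. The hardness gadget, however, is genuinely different.

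The paper never encodes $g=f_1-f_2$ linearly. It takes the sitewise square $T\otimes T$ and uses three branches ($c\,g^2$ on $\ket{0}$; $1$ and $g^2$ on $\ket{1}$) to realize the amplitude ratio $F(g)=c\,g^2/(1+g^2)$ with $0<c<\sqrt2-1$. This ratio lies in $\{0\}\cup[c/2,c)$, so the dangerous values $\pm1$ are excluded by size, which is essentially the goal stated in your first bullet. The squaring is also where $D=2\cdot 4^2+1=33$ comes from.

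Your encoding $2g\ket{0}+c\ket{1}$ with $c$ odd excludes $\pm c$ by parity instead. It needs only two branches and no squaring, so it gives bond dimension $D_g+1$ (five, with the paper's $D_g=4$). The statement at $D=33$ then follows by padding the virtual spaces with zero blocks, not from the construction ``accounting for'' 33 as you suggest.

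What the paper's bounded gadget buys is a constant gap. For $g\neq0$ its state stays at distance at least $d_1(c/2)>0$ from $\STAB_N$ and has stabilizer entropy bounded away from zero. That is why the same PEPS also yields the constant-gap membership and constant-accuracy SE results. Under your encoding, large $|g|$ drives the state towards $\ket{0}\otimes\ket{0}^{\otimes M}$ and no gap survives. For the exact problem, which is all this theorem requires, your parity argument is simpler and fully sufficient.

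The alternatives in your bullet list should be dropped:
\begin{itemize}
\item $(t,ct)$ has a $t$-independent ratio and vanishes at $t=0$.
\item $(t,c+t)$ becomes the stabilizer state $\ket{0}$ at $t=-c$.
\item Scaling $t\mapsto 2t$ does not make $|t|<c$.
\end{itemize}
Only the parity choice you finally adopt is sound.
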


Completeness means that Problem~\ref{main:prob_exact-membership} can be solved by a $\CeqP$ oracle and, at the same time, any $\CeqP$ problem can be mapped to deciding if a PEPS is a stabilizer state. The $\CeqP$ complexity class arises naturally in this problem: to satisfy the $M_\alpha(\psi)=0$ stabilizer criterion, an exact cancellation is necessary between two $\sharpP$ functions, namely the argument of the logarithm in Eq.~\eqref{main:se_def} and the norm of the PEPS itself which is also proven to be a $\sharpP$ function in Ref.~\cite{Schuch_Wolf_Verstraete_Cirac_2007}. Then, mapping a generic $\CeqP$ problem in an instance of Problem~\ref{main:prob_exact-membership} requires the same encoding construction used in the proof of~\cref{main:thm_constant-threshold} to encode a the difference of two $\sharpP$ counts in the single qubit amplitude of a PEPS, and then testing for its stabilizer membership allows to know if the two counts cancel each other. 

\textit{Approximate stabilizer membership decision}---Since exactly determining whether a PEPS is a stabilizer is computationally hard, we consider a relaxed version of the previous question, namely whether one can learn if the distance between the generated PEPS and the set $\STAB_N$ is above or below a certain threshold. This question can be formulated as a promise problem:
\begin{problem}[Approximate PEPS stabilizer membership $\ApproxSTABPEPS_{a,b}$]
\label{main:prob_approx-membership}
For $0\le a<b$, given as input a rational-entry $N$-qubit PEPS $A=\{A_i\}$ on the square lattice with the promise
\begin{equation}
\begin{split}
        \yes: &\quad \dist(\ket{\psi(A)},\STAB_N)\le a,\\
        \no:  &\quad \dist(\ket{\psi(A)},\STAB_N)\ge b,
\end{split}
\end{equation}
decide which is the case. Here
$\ket{\psi(A)}$ represents the normalized state described by the PEPS input, and $\dist(\ket{\psi(A)},\STAB_N)$ is the minimum Euclidean distance between $\ket*{\psi(A)}$ and an $N-$qubit pure stabilizer state.
\end{problem}
In hardness proofs for promise problems, the relevant parameter is the promise gap~\cite{Goldreich2006Promise}, namely $b-a$.
In our case, we prove that $\ApproxSTABPEPS_{a,b}$ is still $\CeqP-$hard for \emph{constant} promise gaps, and it is $\CeqP-$complete when $a=0$. This means that also approximately determining stabilizer membership for PEPS still retains the full hardness of the exact membership problem, entailing the same complexity-class collapse discussed earlier.
\begin{theorem}[Constant gap $\ApproxSTABPEPS_{a,b}$ $\CeqP-$hardness]
\label{main:thm_constant-threshold}
For every $0\le a<b\le 1/5$, the promise problem $\ApproxSTABPEPS_{a,b}$ is $\CeqP$-hard. In particular, $\ApproxSTABPEPS_{0,1/16}$ is $\CeqP$-complete as a promise problem.
\end{theorem}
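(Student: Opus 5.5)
We reduce from an arbitrary $\CeqP$ instance, given by two $\sharpP$ functions $f_1,f_2$ realized as satisfying-assignment counts of polynomial-size Boolean circuits. We decide $f_1(x)=f_2(x)$ through a single PEPS whose distance to $\STAB_N$ is either $0$ or at least $b$. As in the proof of \cref{main:cor_constant_accuracy}, we build the TN contraction $\mathcal{C}[A_{f_1}\oplus(-A_{f_2})]=f_1-f_2=:\Delta$. This integer satisfies $|\Delta|\le 2^{n}$, where $n$ is the number of input bits.

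First, we choose an encoding that normalises the gap. Rather than placing $\Delta$ directly as the amplitude $t$ in Eq.~\eqref{eq:form_marked_qubit}, we embed a \emph{second} contraction with a fixed, known value. The marked qubit is then in the state $\propto \Delta\ket{0}+\lambda\ket{1}$, and all other qubits are in $\ket{0}$. Here $\lambda$ is chosen so that $\Delta\neq0$ forces the ratio $|\Delta|/\lambda$ to lie in a window bounded away from both $0$ and $\infty$. A single fixed $\lambda$ cannot do this, since $|\Delta|$ ranges over $[1,2^n]$.

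The fix I would try is amplification. Replace $\Delta$ by a quantity $G$ with $G=0$ iff $\Delta=0$, and with $G\ge c\,2^{m}$ otherwise, where $2^m$ is a computable upper bound on $G$. Two candidates:
\begin{enumerate}
\item Use $\Delta^2$: two copies of the contraction, which squares the bond dimension.
\item Use the standard $\GapP$ closure trick: sum $\Delta(x)^2$ over a block of auxiliary variables to reach a value that is either $0$ or of order its maximum.
\end{enumerate}
Either way the contraction remains a constant-bond-dimension square-lattice PEPS, by the same planar embedding (with crossing gadgets) used for Theorem~\ref{main:thm_exact-membership}.

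A cleaner alternative avoids normalisation altogether. Encode the state
\[
\ket{\phi}\propto \ket{0}^{\otimes N}+\Delta\,\ket{T},
\]
where $\ket{T}$ is a fixed product of magic states $\cos\theta\ket0+\sin\theta\ket1$ spread over many qubits, and is itself produced by a branch of the PEPS controlled by a flag. If $\Delta=0$, the state is $\ket{0}^{\otimes N}\in\STAB_N$, so it is a YES instance with $a=0$. If $\Delta\ne0$, the weight on the $\ket T$ branch is large, and we need the resulting state to be far from every stabilizer state. Scaling by $\Delta$ does not help directly here, so I would instead use a single flag qubit: the state is $\propto \ket{0}\ket{0\cdots0}+\Delta\ket{1}\ket{T}$, with $\Delta$ replaced by the amplified $G$. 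Then $G/\|G\|\approx$ a fixed constant fraction, and the state is close to $(\ket0\ket{0\cdots}+\ket1\ket T)/\sqrt2$.

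The key estimate is a lower bound on the distance from this state to $\STAB_N$ of at least a constant, say $1/5$. I would obtain it via stabilizer fidelity: the overlap of any stabilizer state with a superposition of $\ket{0\ldots0}$ and a product of $k$ magic states is at most $1/2+O(\cos^k(\pi/8))$-type quantities. This uses multiplicativity of stabilizer fidelity for single-qubit magic states together with the triangle inequality. The result gives $\dist^2=2-2|\langle s|\phi\rangle|$, which exceeds a constant.

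Hardness then follows for every $0\le a<b\le1/5$, since YES instances have distance $0\le a$ and NO instances have distance $\ge 1/5\ge b$. For $\CeqP$-membership of $\ApproxSTABPEPS_{0,1/16}$, it suffices that the YES condition (distance $0$) is exactly Problem~\ref{main:prob_exact-membership}. By Theorem~\ref{main:thm_exact-membership} that problem lies in $\CeqP$, and any promise-respecting input is answered correctly by that decider.

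I expect the main obstacle to be the constant lower bound on the distance in the $\Delta\ne0$ case, uniformly over all nonzero integer values of $\Delta$. This is exactly where the amplification that makes $G$ either $0$ or $\Theta$ of a known normaliser must be carried out carefully within constant bond dimension on the square lattice.
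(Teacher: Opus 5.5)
\paragraph{The gap: amplification.} As written, both of your constructions replace $\Delta$ by an amplified $G$ that is either $0$ or within a constant factor of a known normaliser. Neither proposed mechanism produces such a $G$, and you yourself flag this as the main obstacle.

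\begin{itemize}
\item Squaring gives $\Delta^2\in\{0\}\cup[1,2^{2p}]$. Against any fixed $\lambda$ the amplitude ratio still spans exponentially many scales, so some nonzero instance puts the marked qubit exponentially close to $\ket{0}$ or $\ket{1}$, i.e.\ exponentially close to $\STAB_N$.
\item Summing $\Delta(x)^2$ over auxiliary variables on which it does not depend multiplies both $G$ and its upper bound by the same factor $2^r$, so the ratio is unchanged.
\end{itemize}
No zero-preserving amplification of this strength is known. For a $\sharpP$ count over $2^m$ witnesses it would even let random sampling decide $\coNP$ with one-sided error. Your window would also have to avoid ratio $\pm1$, where the marked qubit becomes $\ket{\pm}$.

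\paragraph{The missing idea.} The $\ket{1}$ amplitude need not be a fixed $\lambda$. The paper routes $c\,\Delta^2$ (the network $T\otimes T$ with one tensor rescaled by $c$) to $\ket{0}$, and routes both a scalar-one network and $T\otimes T$ to $\ket{1}$. The state is then $\propto\bigl(c\Delta^2\ket{0}+(1+\Delta^2)\ket{1}\bigr)\otimes\ket{0}^{\otimes M}$, at bond dimension $2D^2+1$. The ratio $F(\Delta)=c\Delta^2/(1+\Delta^2)$ normalises itself: it vanishes iff $\Delta=0$, and it lies in $[c/2,c)$ for every nonzero integer $\Delta$. With $0<c<\sqrt2-1$, the projection lemma reduces the distance to the single-qubit value $d_1(t)=\sqrt{2-2/\sqrt{1+t^2}}$, whose closest stabilizer is $\ket{1}$. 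Since $d_1$ is increasing in $t$, NO instances have distance at least $d_1(c/2)$. Taking $c=1/4$ gives more than $1/16$, and a rational $c$ close to $\sqrt2-1$ (e.g.\ $c=41/100$) gives more than $1/5$.

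\paragraph{Your flag construction already works.} Your flag construction never needed amplification, so you were one observation away from a valid alternative proof. The two branches of $\ket{\phi}\propto\ket{0}\ket{0\cdots0}+\Delta\ket{1}\ket{T}$ are orthogonal, so the norm is exactly $\sqrt{1+\Delta^2}$. With $\epsilon_k:=\max_{s\in\STAB}|\langle s|T\rangle|$, your triangle inequality gives $|\langle s|\phi\rangle|\le(1+|\Delta|\epsilon_k)/\sqrt{1+\Delta^2}\le1/\sqrt2+\epsilon_k$ for \emph{every} nonzero integer $\Delta$. Hence $\dist^2\ge2-\sqrt2-2\epsilon_k$, and a constant number $k$ of magic sites already makes this exceed $(1/5)^2$.

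To make this rigorous you need three things:
\begin{itemize}
\item rational amplitudes, e.g.\ $(12\ket{0}+5\ket{1})/13$ instead of $\cos\frac{\pi}{8},\sin\frac{\pi}{8}$, since the PEPS entries must be rational;
\item a small extension of the branch lemma, to place a non-basis state on designated sites of one branch;
\item the external result that stabilizer fidelity is multiplicative for products of single-qubit states.
\end{itemize}
The projection lemma handles the flag and padding qubits.

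\paragraph{Comparison.} This route gives a larger constant gap and avoids the squaring, so the bond dimension is $D+1$ rather than $2D^2+1$. The paper's route is fully elementary, using only one-qubit geometry. Its single-qubit gadget is also reused verbatim for exact membership, constant-accuracy SE, and local-stabilizer equivalence. Your containment argument for $\ApproxSTABPEPS_{0,1/16}$ matches the paper's.
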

To prove this theorem, we again write the difference of two $\sharpP$ functions as two TN contractions and then encode their difference in the single qubit amplitude of a PEPS of the form of Eq.~\eqref{eq:form_marked_qubit}. Then, exact cancellation of the $\sharpP$ counts corresponds to an upper bound of $a$ on the distance from the stabilizer set, whereas lack of exact cancellation corresponds to a $b-$lower bound on the same distance: these are the NO instances of the promise problem. Again, we refer to App.~\ref{sec:approx-membership} for a detailed proof. 

\textit{Local-stabilizer equivalence}---A related problem is deciding if can a PEPS be mapped to a stabilizer state using factorized unitaries with respect to a given bipartition. We denote as $\lstab$ the set of states that can be mapped to a stabilizer state by use of unitaries of the form $U_A\otimes U_B$, and formulate this question as a decision problem:
\begin{problem}[$\lstab$ membership]
\label{main:prob_zero-nlm}
Given a rational-entry tensor description $T$ of a nonzero $N$-qubit PEPS on a square lattice
and a specified bipartition $A|B$ of its physical qubits, output YES if $\ket{\psi(T)}\in\lstab$, and NO otherwise.
\end{problem}
It is known that this problem amounts to deciding if the entanglement spectrum is flat, with rank an integer power of two~\cite{Fattal_Cubitt_Yamamoto_Bravyi_Chuang_2004,sierant2026exactquantificationnonlocalmagic,cao2025gravitational}. Note that, hardness of extracting the entanglement spectrum of a PEPS alone~\cite{cheng2019ubiquitouscomplexityentanglementspectra} does not suffice to show hardness of deciding Problem~\ref{main:prob_zero-nlm}.
\begin{theorem}[Hardness of $\lstab$ membership]
\label{main:thm_nlm-hardness}
Problem~\ref{main:prob_zero-nlm} is $\CeqP$-hard under polynomial-time postprocessing for $D=33$.
\end{theorem}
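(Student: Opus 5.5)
We reduce from an arbitrary $\CeqP$ instance, reusing the encoding behind \cref{main:thm_exact-membership} and \cref{main:thm_constant-threshold}. Given two $\sharpP$ functions $f_1,f_2$ on input $x$, we form the contraction $\mathcal{C}[A_{f_1}\oplus(-A_{f_2})]=f_1(x)-f_2(x)=:t$ as a planar scalar tensor network of bond dimension at most $33$. As in the earlier constructions, we embed it into a square-lattice qubit PEPS whose single marked qubit carries the amplitude $t$. The difference from before is that the PEPS must now encode $t$ in the \emph{entanglement spectrum} across a bipartition $A|B$, not in a single-qubit state. We use the characterization quoted above: $\ket{\psi}\in\lstab$ if and only if the Schmidt spectrum across $A|B$ is flat with rank a power of two.

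\textbf{Target state.} We aim for the PEPS state
\begin{equation}
\ket{\psi}\propto t\ket{00}_{ab}+\ket{11}_{ab}\;\otimes\ket{0\cdots0},
\end{equation}
where $a\in A$ and $b\in B$ are two adjacent qubits and all other qubits are fixed. Its Schmidt coefficients across $A|B$ are proportional to $(|t|,1)$. The spectrum is therefore flat with rank $2$ exactly when $|t|=1$, and flat with rank $1$ when $t=0$. To obtain a clean YES/NO dichotomy tied to $t=0$, we shift the encoded value: we use $t'=1+(f_1-f_2)$. This is again a difference of $\sharpP$ functions, obtained by adding a trivial one-term network to the $f_1$ branch of the direct sum. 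Then $t'\in\mathbb{Z}$, and $|t'|=1$ iff $f_1-f_2\in\{0,-2\}$. To eliminate the spurious value $-2$, we replace $f_1,f_2$ by $2f_1,2f_2$ beforehand (duplicate the circuit with a free extra variable). Now $f_1-f_2$ is even and nonzero differences have magnitude at least $2$. The case $t'=-1$, i.e.\ $f_1-f_2=-2$, is still possible, so instead we use $t'=1+(f_1-f_2)^2$. This is also a $\GapP$ quantity, realized by tensoring two copies of the direct-sum network (squaring a contraction is a disjoint union of networks), and it satisfies $t'\ge1$ with equality iff $f_1=f_2$. Since $t'\ge 1$, the rank is always $2$, and the spectrum is flat iff $t'=1$ iff the $\CeqP$ instance is YES.

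\textbf{PEPS construction.} We follow \cref{fig:methods}. The scalar network is laid out on the lattice, and the marked qubit $a$ is given the tensor that outputs $\ket{0}$ weighted by the contraction value and $\ket{1}$ with weight $1$. Qubit $a$ is correlated with its neighbour $b$ through one extra virtual bond value, copying the branch label, so the amplitudes are $t'$ on $\ket{00}$ and $1$ on $\ket{11}$. Every other qubit is projected to $\ket{0}$ regardless of virtual indices. The branch-selecting bond must carry both the ``contraction'' sector and the ``trivial'' sector, which is why we must check that the bond dimension stays within the stated $D=33$ (e.g.\ $2\cdot 16+1$). All tensors are rational and of polynomial size.

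\textbf{Main obstacle.} The delicate step is realizing $1+(f_1-f_2)^2$, or an equivalent encoding with no spurious flat-spectrum solutions, as a single planar network of bounded bond dimension. Squaring requires two copies of the network, and the constant offset requires a direct sum with a trivial branch, both without exceeding $D=33$. If the squaring route inflates the bond dimension, the first fallback is to keep the linear encoding $t'=1+2(f_1-f_2)$: it is odd, so $|t'|=1$ iff $f_1-f_2\in\{0,-1\}$. The remaining spurious case $f_1-f_2=-1$ is then excluded by a standard padding that makes $f_1-f_2\ge0$ or even. Verifying that the copying of the branch label between $a$ and $b$ yields exactly the stated two-term state, with no cross terms, completes the reduction.
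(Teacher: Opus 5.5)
Your reduction is correct, but it uses the opposite flat-spectrum case from the paper.

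\textbf{The paper's route.} It reuses the saturating gadget of Lemma~\ref{lem:saturated-gadget} with $c=1/4$. The branches are $c\,g(x)^2$ on $\ket{00}_{ab}$ and $1+g(x)^2$ on $\ket{11}_{ab}$, giving bond dimension $16+1+16=33$. YES instances become the product state $\ket{11}_{ab}\ket{0}^{\otimes(N-2)}$, with a rank-one spectrum. NO instances have ratio $t=F(g(x))\in[1/8,1/4)$, which rules out both $t=0$ and the equal-coefficient case $t=1$.

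\textbf{Your route.} You send YES instances to a Bell pair, i.e.\ a flat rank-two spectrum, via $t'=1+g(x)^2$. Positivity of $t'$ excludes the rank-one case, and $t'\ge 2$ on NO instances excludes flatness. Some consequences:
\begin{itemize}
\item Your branches cost only $16+1+1=18$, so the $2\cdot16+1$ worry is unnecessary. Hardness at smaller bond dimension trivially gives it at $D=33$ by zero-padding.
\item Your linear fallback can be made concrete as $t'=1+4g(x)$. It is odd, hence never $0$, and $|t'|=1$ only at $g(x)=0$. This avoids squaring entirely, at bond dimension $4+1+1$.
\item No extra label-copying bond is needed. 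The block-diagonal direct sum of Lemma~\ref{lem:branch-amplitude} already forces one global branch, so $a$ and $b$ need not be adjacent, which matches the ``any bipartition'' claim.
\end{itemize}

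\textbf{Trade-off.} The paper's route gains uniformity with its other reductions and a robustness margin: its NO ratios stay in a constant window away from both $0$ and $1$. In your encoding, $t'\to\infty$ as $|g(x)|$ grows. NO instances then approach the product state $\ket{00}_{ab}\ket{0}^{\otimes(N-2)}\in\lstab$, so your reduction would not survive an approximate or promise relaxation. For the exact decision problem as stated, this makes no difference.

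Incidentally, your first candidate $t'=1+g(x)$ also fails at $g(x)=-1$, where $t'=0$ gives a rank-one flat spectrum. Your final choice $t'\ge1$ correctly covers this case.
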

The proof is straightforward by employing a similar strategy as previously: one can encode a difference of $\sharpP$ counts $t$ in a TN contraction, and then in a PEPS proportional to $(t\ket{00}+\ket{11})\otimes\ket{0}\ldots \ket{0}$. Knowledge of membership of such state to $\lstab$ allows to to reconstruct the value of $t$. Details on the proof can be found in App.~\ref{subsec:nonlocal-magic}


\textit{Discussion}---We have shown that nonstabilizerness of 2D $N-$qubit PEPS is worst-case intractable: computing the SE of such states, even at small bond dimension, is as hard as generic tensor-network contractions, which are intractable for both classical and quantum computers with polynomial resources, subject to standard complexity assumptions. This hardness persists for the seemingly simpler problem of deciding stabilizer membership of a PEPS, both approximately and exactly, or deciding its stabilizer membership up to a local unitary equivalence along a given bipartition.

The constant-accuracy hardness established in Theorem~\ref{main:cor_constant_accuracy} shows that relaxing numerical evaluation to a sufficiently small fixed additive tolerance does not remove the worst-case obstruction to estimating PEPS nonstabilizerness. This result constrains algorithms with guarantees for arbitrary PEPS, while leaving room for efficient methods on physically motivated subclasses or typical instances encountered in numerical work. An open question is whether constant-accuracy estimation also retains the $\sharpP$-hardness of exact SE evaluation, beyond the $\CeqP$-hardness established here.


Several other directions remain open.  It would be important to understand how the present hardness changes under additional physical constraints, such as injectivity, translational invariance, the existence of a uniformly gapped parent Hamiltonian and its possible symmetries.  Since the reductions already use constant physical dimension and constant bond dimension, any efficient subclass must rely on structure beyond bounded local dimensions.  Further extensions include other two-dimensional ans\"atze such as isometric TN, and other nonstabilizerness monotones or stabilizer-based witnesses.  Clarifying which of these quantities remain hard, and which become tractable under physically natural restrictions, would sharpen the role of nonstabilizerness in tensor-network simulation of quantum many-body systems. 

\textit{Acknowledgments}---We thank Rahul Trivedi and Daniele Iannotti for helpful discussions. G.S. acknowledges funding from the Deutsche Forschungsgemeinschaft (DFG, German Research Foundation) under Germany's Excellence Strategy Grant No. EXC2111-390814868.

The authors acknowledge use of AI for parts of this project. In particular, OpenAI’s language models 5.6 Sol and 6 Astra were used by the authors for refinements of the proofs of Theorems 2, 4, and 5, as well as general proofreading and typographical corrections. The ideas of the proofs are human-generated.

\makeatletter
\let\GE@bibsectionwithtoc\bibsection
\renewcommand{\bibsection}{\let\GE@savedaddcontentsline\addcontentsline
  \let\addcontentsline\@gobblethree
  \GE@bibsectionwithtoc
  \let\addcontentsline\GE@savedaddcontentsline
}
\makeatother

\bibliographystyle{apsrev4-2}
\bibliography{refs_supplemental}

\clearpage 
\appendix
\onecolumngrid
\makeatletter
\let\set@footnotewidth\set@footnotewidth@one
\makeatother
\begin{center}
{\large{\bf Supplemental Material for\\
 ``Nonstabilizerness of quantum tensor network states is intractable in two dimensions''}}
\end{center}

\tableofcontents
\medskip\medskip\medskip

\section{Tensor Networks, PEPS, and Boolean counting}
\label{sec:tensor-networks}

This section contains all the concepts relative to Tensor Networks (TN), and specifically PEPS, that are relevant to the results of this work. A comprehensive reference to the subject can be found in \cite{Cirac_Perez-Garcia_Schuch_Verstraete_2021}. 
\begin{definition}[Scalar tensor network]
{
A scalar TN $T$ on a graph $\Lambda=(V,E)$ assigns to each vertex
$v\in V$ a tensor
\begin{equation}
  (T_v)_{(\alpha_e)_{e\rightarrow v}}\in\mathbb{C},
  \qquad \alpha_e\in\{1,\dots,D_e\},
\end{equation}
with one index $\alpha_e$ for each edge $e$ incident to $v$. Its \emph{contraction} is
\begin{equation}
  \cont(T):=\sum_{\{\alpha_e\}_{e\in E}} \prod_{v\in V}(T_v)_{(\alpha_e)_{e\rightarrow v}},
\end{equation}
where the sum runs over all $\alpha_e\in\{1,\dots,D_e\}$.
}
\end{definition}

For rational-entry tensors, $\cont(T)$ is a rational number, and after multiplying by the maximum common denominator it is an integer-valued signed sum over internal index assignments.

We now briefly introduce the PEPS formalism and the conventions used throughout this Supplemental Material.
\begin{definition}[Projected Entangled Pair States (PEPS)]
\label{PEPS:def}
Let $\Lambda=(V,E)$ be a finite square or rectangular lattice, denoting by $N:=|\Lambda|$ the number of physical sites. To each site we associate a qubit (\emph{physical dimension} $d=2$) and the total Hilbert-space is thus $(\mathbb C^2)^{\otimes N}$. At each edge $e\in E$ we assign a \emph{virtual space} {$\mathbb C^{D_e}$} where each \emph{bond dimension} $D_e$ may depend on the edge. We place on each edge an (unnormalized) maximally entangled vector
\begin{equation}
    \ket*{\Omega_e}
    :=
    \sum_{\alpha_e=1}^{D_e}
    \ket*{\alpha_e}\ket*{\alpha_e}.
    \label{eq:virtual-pair}
\end{equation}
At every site $v\in V$ we then apply a local linear map
\begin{equation}
     A_v:
    \bigotimes_{e\rightarrow v}\mathbb C^{D_e}
    \longrightarrow
    \mathbb C^2,
    \label{eq:local-peps-map}
\end{equation}
whose output is the physical qubit at $v$. The corresponding unnormalized PEPS is
\begin{equation}
    \ket*{\widetilde\psi(A)}
    =
    \left(\bigotimes_{v\in V}A_v\right)
    \left(\bigotimes_{e\in E}\ket*{\Omega_e}\right).
    \label{eq:peps-map-definition}
\end{equation}
\end{definition}

After choosing bases on every virtual space, the local map~\eqref{eq:local-peps-map} is equivalently represented by a tensor
\begin{equation}
{
    \left(A_v^{s_v}\right)_{(\alpha_e)_{e\rightarrow v}}\in\mathbb{C},
    \qquad
    s_v\in\{0,1\},
    \qquad
    \alpha_e\in\{1,\ldots,D_e\}.
}
\end{equation}
The index $s_v$ is called the \textit{physical} index, since it labels the computational basis of the physical qubit at site $v$. The \emph{bond dimension} of a PEPS is defined as $D = \max_{e \in E} D_e$.
For every computational basis configuration
$s=(s_v)_{v \in V}\in\{0,1\}^N$,
the corresponding amplitude of the PEPS is obtained by contracting all virtual indices,
\begin{equation}
{\braket*{s}{\widetilde\psi(A)} = \;}
    \cont_A(s)=\sum_{\{\alpha_e\}_{e\in E}}
    \prod_{v \in V}
    \left(A_v^{s_v}\right)_{(\alpha_e)_{e\rightarrow v}}\,.
    \label{eq:peps-coordinate-amplitude}
\end{equation}
Thus each amplitude $\braket*{s}{\widetilde\psi(A)}$ of a PEPS corresponds to a scalar TN \textit{contraction}. In representing tensor contractions, a graphical notation is often used for simplicity: each circle represents a single tensor, black legs represent virtual indices and yellow legs indicate physical indices. Every connected leg between two tensors is to be interpreted as the corresponding index between the two tensors being summed over, whereas non-attached legs are to be interpreted as free indices. According to this notation, the state vector then reads
\begin{equation}
    \ket*{\widetilde\psi(A)}
    =
    \sum_{s\in\{0,1\}^N}
    \cont_A(s)\ket*{s}=\!\!\!\raisebox{-0.4\height}{\includegraphics[height=4 em]{figures/single_layer_network.pdf}}
\end{equation}
This coordinate representation is the one employed throughout the reductions developed below.

Throughout this work, we explicitly distinguish between normalized and unnormalized PEPS. The squared norm of the unnormalized vector is given by the following scalar tensor contraction:
\begin{equation}
\begin{split}
    Z(A)
    &:=
    \braket*{\widetilde\psi(A)}{\widetilde\psi(A)}
    =
    \sum_{s\in\{0,1\}^N}
    \bigl|\cont_A(s)\bigr|^2=\!\!\!\raisebox{-0.33\height}{\includegraphics[height=5 em]{figures/double_layer_network.pdf}}
    \label{eq:peps-norm}
    \end{split}
\end{equation}
where red circles represent the conjugated tensors $\overline{A_v^{s}}$, and whenever $Z(A)>0$ we define
\begin{equation}
    \ket*{\psi(A)}
    =
    \frac{\ket*{\widetilde\psi(A)}}{\sqrt{Z(A)}}.
\end{equation}
Since every physical quantity regarding PEPS can be expressed as a TN contraction~\cite{Cirac_Perez-Garcia_Schuch_Verstraete_2021}, the graphical notation can be used to represent every relevant quantity; in particular, it can be used to express expectation values of a local observable $O_R$:
\begin{equation}\label{eq:expval}
    \mel*{\widetilde{\psi}(A)}{O_R}{\widetilde{\psi}(A)}=\raisebox{-0.3\height}{\includegraphics[height=5 em]{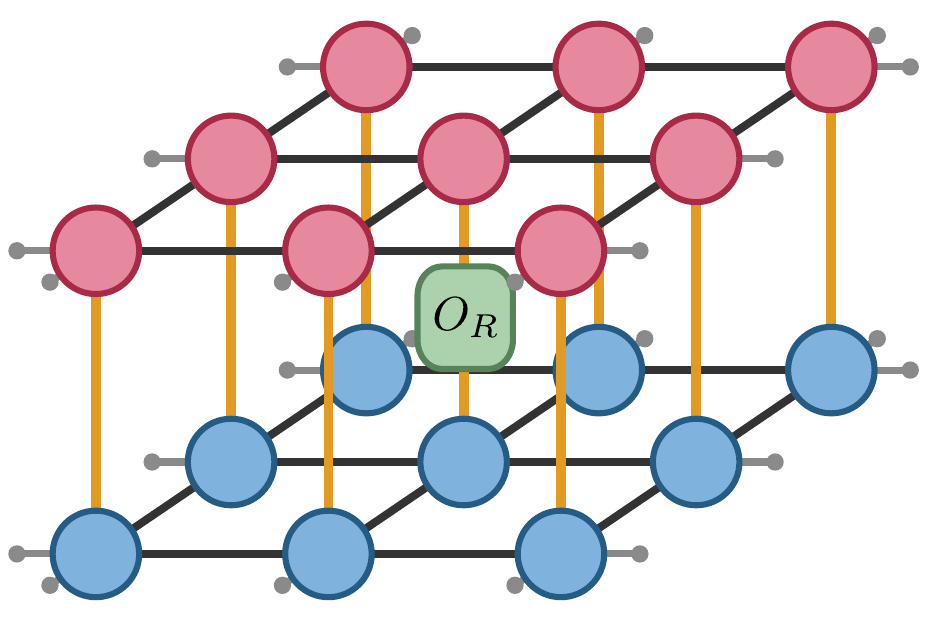}}
\end{equation}

\section{Complexity-theoretic preliminaries}
\label{sec:complexity}
In this section we introduce the relevant complexity theory concepts that we employ in this work. A comprehensive review of these topics is provided in~\cite{AroraBarak2009,Papadimitriou1994}.
\subsection{Reductions}

For decision problems $A,B\subseteq\{0,1\}^*$, a \textit{polynomial-time many-one reduction} $A\le_m^P B$ is a polynomial-time computable map $R$ such that
\begin{equation}
        x\in A
        \quad\iff\quad
        R(x)\in B.
\end{equation}
If $A\le_m^P B$, then an efficient algorithm for $B$ would imply an efficient algorithm for $A$.  For promise problems $\Pi=(\Pi_{\yes},\Pi_{\no})$ and $\Gamma=(\Gamma_{\yes},\Gamma_{\no})$, a \textit{promise many-one reduction} is a polynomial-time map $R$ satisfying
\begin{equation}
        x\in\Pi_{\yes}\Rightarrow R(x)\in\Gamma_{\yes},
        \qquad
        x\in\Pi_{\no}\Rightarrow R(x)\in\Gamma_{\no},
\end{equation}
with no requirement outside the promise~\cite{Goldreich2006Promise}.

For exact function problems we use the following standard exact postprocessing notion.

\begin{definition}[Exact function reduction]
\label{def:function-reduction}
Let $F$ and $G$ be exact function problems.  We say that $F$ \textit{reduces to} $G$ \textit{under polynomial-time exact postprocessing} if there are polynomial-time computable maps $R$ and $P$ such that
\begin{equation}
        F(x)=P\bigl(x,G(R(x))\bigr)
\end{equation}
for every input $x$.  If $P$ is the identity, the reduction is \emph{parsimonious}.  If $P$ consists only of known exact arithmetic operations, such as clearing denominators, multiplying by known factors, or inverting an efficiently computable injective rational map on a discrete range, we will call the reduction \emph{weakly parsimonious}.
\end{definition}

\subsection{Counting and equality classes}
In this subsection we introduce and define the counting and decision complexity classes that are relevant to this work.
\begin{definition}[$\sharpP$ functions]
    A function $f:\{0,1\}^*\to\mathbb N$ lies in $\sharpP$ if it can be represented as the number of satisfying assignments of a polynomially evaluable Boolean formula~\cite{valiant_permanent}, or, equivalently, a polynomial-size Boolean circuit $C_x(y)$~\cite{AroraBarak2009} on the condition that the map $x\mapsto C_x$ is efficiently computable.
\end{definition}

In this definition we actually used the notion of a $\sharpP-$complete problem, namely $\# \rm SAT$, in place of the standard definition involving counting the accepting paths of a non-deterministic polynomial-time Turing machine, but the two are equivalent \cite{Karp_1972}.
\begin{definition}[$\GapP$ functions]
    The class $\GapP$ is the closure of $\sharpP$ under subtraction~\cite{gapP}.  Thus $g:\{0,1\}^* \to \mathbb{Z}$ is in $\GapP$ if and only if there exist $f_+,f_-\in\sharpP$ such that
    \begin{equation}
            g(x)=f_+(x)-f_-(x).
    \end{equation}
\end{definition}
    GapP functions may be positive, negative, or zero.  They are closed under addition, subtraction, and multiplication.  In particular, if $g\in\GapP$, then $g^2\in\GapP$~\cite{FortnowRogers1999ComplexityLimitations}. 

\begin{definition}[$\CeqP$ problems]
The decision class $\CeqP$ (pronounced ``C-equal P'') consists of exact zero tests of $\GapP$ functions:
\begin{equation}
        L\in\CeqP
        \quad\iff\quad
        \exists\, g\in\GapP\text{ such that }
        x\in L\iff g(x)=0.
        \label{eq:ceqp-definition}
\end{equation}
\end{definition}

It is useful to contrast this complexity class with $\PP$, which consists of sign tests $g(x)>0$ for $\GapP$ functions.  Thus $\CeqP$ is naturally associated with exact cancellations, while $\PP$ is associated with majority or threshold tests.  Standard background on these classes can be found in Refs.~\cite{AroraBarak2009,Papadimitriou1994,Sipser2013,gapP,FortnowRogers1999ComplexityLimitations}.

Because $\sharpP$ and $\GapP$ are function classes while $\CeqP$ is a decision class, statements such as $\CeqP\subseteq\sharpP$ are not formally meaningful.  The correct relation used subsequently is that equality to zero of an explicitly constructed GapP function yields a $\CeqP$ decision problem, while exact evaluation of rational tensor contractions belongs to $\GapP$, computable with a  $\sharpP$ oracle and polynomial-time exact arithmetic.

\subsection{Computational significance of \texorpdfstring{$\CeqP$}{C=P}-hardness and consequences of an efficient algorithm for 
\texorpdfstring{$\CeqP$}{C=P}}
\label{subsec:ceqp-significance}

The class $\CeqP$ is not frequently invoked in quantum information. We thus collect below complexity-theoretic evidence that it is highly unlikely to be easy to decide, by exploring the consequences of the hypothetical existence of a polynomial-time algorithm for it.

The significance of $\CeqP$ is best understood from its defining $\GapP$ representation.  Writing $g=f_{+}-f_{-}$ with $f_{+},f_{-}\in\sharpP$, each term may count an exponentially large family of accepting witnesses, while the decision predicate asks whether the two counts agree exactly.  A YES instance can therefore arise from a perfect cancellation between two exponentially large integers.

The relation to more familiar classes can be summarized by the standard containments
\begin{equation}
    \mathrm{coNP}\subseteq\CeqP\subseteq\PP\subseteq\mathrm{PSPACE},
    \qquad
    \mathrm{NP}\subseteq\mathrm{co}\text{-}\CeqP\subseteq\PP.
    \label{eq:ceqp-containments}
\end{equation}
The first inclusion follows already from Boolean satisfiability: if $s(\varphi)$ is the number of satisfying assignments of a formula $\varphi$, then $s\in\sharpP\subseteq\GapP$ and $\varphi$ is unsatisfiable if and only if $s(\varphi)=0$.  Since \textsc{Unsat} is $\mathrm{coNP}$-complete, every problem in $\mathrm{coNP}$ reduces to a $\CeqP$ zero test.  For the upper inclusion, if $x\in L\iff g(x)=0$ with $g\in\GapP$, then closure of $\GapP$ under products and subtraction gives $h(x):=1-g(x)^2\in\GapP$, and $x\in L\iff h(x)>0$; hence $L\in\PP$.  Likewise, $g(x)\neq0$ is equivalent to $g(x)^2>0$, which places the complementary nonzero-test class in $\PP$ and contains the usual $\mathrm{NP}$ witness-existence predicate.  

We now record the complexity-theoretic consequences of the hypothetical existence of an exact
polynomial-time algorithm for $\CeqP$. Recall that $\CeqP$ is understood as the class of languages
$L$ for which there exists a function $g\in\GapP$ satisfying
\begin{equation}
    x\in L\quad\iff\quad g(x)=0.
\end{equation}
Thus, the hypothesis that $\CeqP$ can be solved in
polynomial time means that every language in $\CeqP$
has an exact deterministic polynomial-time decision algorithm. Equivalently,
it is sufficient that a $\CeqP$-complete problem under
polynomial-time reductions belong to $\mathrm{P}$.

\begin{lemma}
\label{lem:np-coceqp}
The following containments hold:
\begin{enumerate}
    \item 
    $\mathrm{NP}\subseteq\operatorname{co}\CeqP$,  $\mathrm{coNP}\subseteq\CeqP$.

    \item $\PP\subseteq\mathrm{NP}^{\CeqP}$.

\end{enumerate}

\end{lemma}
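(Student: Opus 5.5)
For item 1, the plan is to work directly from the definition of $\CeqP$ as zero tests of $\GapP$ functions. Given $L\in\mathrm{coNP}$, the complement $\overline L$ lies in $\mathrm{NP}$, so there is a polynomial-time verifier $V(x,y)$ with $x\in\overline L\iff\exists y\,V(x,y)=1$. The function $s(x):=\#\{y: V(x,y)=1\}$ is a $\sharpP$ function, hence lies in $\GapP$ (take $f_-\equiv 0$). Then $x\in L\iff s(x)=0$, which is exactly the defining condition for $L\in\CeqP$. Taking complements gives $\mathrm{NP}\subseteq\operatorname{co}\CeqP$. Since $\mathrm{coNP}$ is closed under polynomial-time many-one reductions, it is enough to use \textsc{Unsat} together with Cook--Levin, as in the discussion preceding the lemma. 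This part is routine.

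For item 2, take $L\in\PP$ with $x\in L\iff g(x)>0$ for some $g\in\GapP$, and fix a polynomial $p$ such that $|g(x)|\le 2^{p(|x|)}$. The $\mathrm{NP}$ machine will guess an integer $k$ with $1\le k\le 2^{p(|x|)}$, written in $p(|x|)+1$ bits, and then query the $\CeqP$ oracle on the pair $(x,k)$ with the question ``$g(x)-k=0$?''. This is a $\CeqP$ language because the function $h(x,k):=g(x)-k$ is in $\GapP$: the constant $k$ is $\FP$-computable from the input, and $\GapP$ is closed under subtraction of such functions, for instance because $k$ equals the number of accepting paths of a machine that guesses a string $y$ of $p+1$ bits and accepts iff $y<k$. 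The machine accepts iff the oracle answers YES. If $g(x)>0$, the guess $k=g(x)$ is a valid certificate. If $g(x)\le 0$, no positive $k$ can make the oracle answer YES. Hence $L\in\mathrm{NP}^{\CeqP}$.

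The main point requiring care is item 2, specifically two things: checking that $g(x)-k$ is uniformly a $\GapP$ function of the combined input $(x,k)$, and checking that $g$ is polynomially bounded in bit length so that the witness $k$ has polynomial size. Both follow from the standard facts that any $\GapP$ function is a difference of counts of accepting paths of polynomial-time nondeterministic machines, so it is bounded by $2^{\poly}$, and that $\GapP$ is closed under subtraction, as recalled earlier in the paper. I expect no further obstacles.
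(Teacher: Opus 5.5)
Your proof is correct. Item 1 is the same witness-counting argument as the paper's. You start from $\coNP$ and the paper starts from $\NP$, which is immaterial. Your aside about \textsc{Unsat} is superfluous, and strictly that route needs $\CeqP$, not $\coNP$, to be closed under many-one reductions.

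For item 2 you take a genuinely different route. The paper gives no direct argument: it cites Tor\'an's theorem $\NP^{\PP}=\NP^{\CeqP}$ and concludes $\PP\subseteq\NP^{\PP}=\NP^{\CeqP}$. You instead exhibit the simulation explicitly, nondeterministically guessing $k=g(x)\in\{1,\dots,2^{p(|x|)}\}$ and certifying the guess with one exact zero test of the $\GapP$ function $(x,k)\mapsto g(x)-k$.

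Your version is self-contained. It uses only facts the paper already records: the sign-test characterization of $\PP$, closure of $\GapP$ under subtraction, and the $2^{\poly}$ bound on $\GapP$ values. It also shows that a single oracle query per nondeterministic branch suffices, and it is essentially the easy inclusion inside Tor\'an's result. The citation buys the stronger equality of the two relativized classes. However, only the inclusion $\PP\subseteq\NP^{\CeqP}$ is used in Theorem~\ref{thm:ceqp-collapse}, so your argument fully suffices there.
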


\begin{proof}

\begin{enumerate}
\item Let $L\in\mathrm{NP}$, and let $V$ be a polynomial-time verifier for
$L$. After padding the witness register if necessary, define
\begin{equation}
    f(x)=\#\left\{w\in\{0,1\}^{q(|x|)}:V(x,w)=1\right\},
\end{equation}
where $q$ is a polynomial. By definition, $f\in\sharpP$ and hence
$f\in\GapP$. Moreover,
\begin{equation}
    x\in L\quad\iff\quad f(x)\neq 0.
\end{equation}
It follows that $L\in\operatorname{co}\CeqP$, proving
the first containment. Taking complements gives
$\mathrm{coNP}\subseteq\CeqP$.
\item This is an immediate consequence of Ref.~\cite{Toran1991CountingQuantifiers}, Corollary 3.11 and Theorem 4.1, which together give $\rm{NP^{PP}=NP^{\CeqP}}$.
\end{enumerate}
\end{proof}

The next theorem elucidates the complexity collapse that a hypothetical polynomial algorithm that decides $\CeqP$ problems would imply.

\begin{theorem}
\label{thm:ceqp-collapse}
If
\begin{equation}
    \CeqP
    \subseteq
    \mathrm{P},
\end{equation}
then
\begin{equation}
    \mathrm{P}
    =
    \mathrm{NP}
    =
    \mathrm{coNP}
    =
    \mathrm{PH}
    =
    \PP.
\end{equation}
\end{theorem}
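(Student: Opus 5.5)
The plan is to use the two containments of Lemma~\ref{lem:np-coceqp} and push them through the hypothesis $\CeqP\subseteq\mathrm{P}$. The collapse is proved in three stages: first $\mathrm{P}=\mathrm{NP}=\mathrm{coNP}$, then $\PP=\mathrm{P}$, and finally the polynomial hierarchy is squeezed between $\mathrm{P}$ and $\PP$.

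For the first stage, Lemma~\ref{lem:np-coceqp}(1) gives $\mathrm{coNP}\subseteq\CeqP\subseteq\mathrm{P}$. Since $\mathrm{P}$ is closed under complement, this also gives $\mathrm{NP}\subseteq\operatorname{co}\CeqP\subseteq\operatorname{co}\mathrm{P}=\mathrm{P}$. Together with the trivial inclusion $\mathrm{P}\subseteq\mathrm{NP}\cap\mathrm{coNP}$, we obtain $\mathrm{P}=\mathrm{NP}=\mathrm{coNP}$.

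For the second stage, use Lemma~\ref{lem:np-coceqp}(2): $\PP\subseteq\mathrm{NP}^{\CeqP}$. Every oracle query to a $\CeqP$ language can be answered by the assumed polynomial-time algorithm. A nondeterministic polynomial-time machine makes only polynomially many queries, each of polynomial length, so the oracle can be simulated inline without leaving the class. This gives $\mathrm{NP}^{\CeqP}\subseteq\mathrm{NP}^{\mathrm{P}}=\mathrm{NP}$. Combining with the first stage, $\PP\subseteq\mathrm{NP}=\mathrm{P}$. The reverse inclusion $\mathrm{P}\subseteq\PP$ is standard: a polynomial-time decider $M$ yields the $\GapP$ function $g(x)=2[M(x)=1]-1$, which is positive exactly on accepted inputs.

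For the third stage, $\mathrm{P}=\mathrm{NP}$ implies $\mathrm{PH}=\mathrm{P}$. This follows by induction on the levels: $\Sigma_{k+1}^p=\mathrm{NP}^{\Sigma_k^p}=\mathrm{NP}^{\mathrm{P}}=\mathrm{NP}=\mathrm{P}$, with the same argument for the $\Pi$ levels by complementation. Chaining everything gives $\mathrm{P}=\mathrm{NP}=\mathrm{coNP}=\mathrm{PH}=\PP$. Note that the third stage uses only the first; Toda's theorem is not needed, since $\mathrm{PH}$ collapses to $\mathrm{P}$ directly.

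The only step requiring care is the oracle-absorption argument in the second stage, specifically justifying $\mathrm{NP}^{\CeqP}=\mathrm{NP}$ from $\CeqP\subseteq\mathrm{P}$. Here one should check two things: that the oracle relativization is the standard one, and that query lengths remain polynomial so the simulated oracle calls keep the total running time polynomial. Everything else is bookkeeping.
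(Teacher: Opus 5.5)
Your proof is correct and follows essentially the same route as the paper. Both arguments push Lemma~\ref{lem:np-coceqp} through the hypothesis to get $\mathrm{NP}=\mathrm{coNP}=\mathrm{P}$, collapse $\mathrm{PH}$ to $\mathrm{P}$, and conclude via $\PP\subseteq\mathrm{NP}^{\CeqP}\subseteq\mathrm{NP}^{\mathrm{P}}=\mathrm{NP}=\mathrm{P}$; you spell out the oracle absorption and the level-by-level induction, and you skip the paper's unneeded remark that $\mathrm{P}\subseteq\CeqP$.

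One small issue: your roadmap says $\mathrm{PH}$ is ``squeezed between $\mathrm{P}$ and $\PP$'', but your third stage does not do this, and $\mathrm{PH}\subseteq\PP$ is not known unconditionally. The argument you actually give, deriving $\mathrm{PH}=\mathrm{P}$ from $\mathrm{P}=\mathrm{NP}$, is the right one.
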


\begin{proof}
The reverse containment
$\mathrm{P}\subseteq\CeqP$ is immediate: for any
$L\in\mathrm{P}$, the polynomial-time computable function
\begin{equation}
    g_L(x)
    =
    \begin{cases}
        0, & x\in L,\\
        1, & x\notin L,
    \end{cases}
\end{equation}
belongs to $\GapP$ and characterizes $L$ by its zero set.
Therefore the hypothesis is equivalent to
\begin{equation}
    \CeqP=\mathrm{P}.
\end{equation}

Since $\mathrm{P}$ is closed under complement, the hypothesis also gives
\begin{equation}
    \operatorname{co}\CeqP
    =
    \mathrm{P}.
\end{equation}
Lemma~\ref{lem:np-coceqp} then implies
\begin{equation}
    \mathrm{NP}
    =
    \mathrm{coNP}
    =
    \mathrm{P}.
\end{equation}
The polynomial hierarchy therefore collapses at P:
\begin{equation}
    \mathrm{PH}=\mathrm{P}.
\end{equation}

Finally, Lemma~\ref{lem:np-coceqp} gives
\begin{equation}
    \PP
    \subseteq
    \mathrm{NP}^{\CeqP}
    =
    \mathrm{NP}^{\mathrm{P}}
    =
    \mathrm{NP}
    =
    \mathrm{P}.
\end{equation}
Together with the trivial containment $\mathrm{P}\subseteq\PP$,
this proves $\PP=\mathrm{P}$.
\end{proof}

For the quantum case, we have:

\begin{corollary}
\label{cor:quantum-collapse}
The hypothesis
$\CeqP=\mathrm{P}$ implies
\begin{equation}
    \mathrm{P}=\mathrm{BPP}=\mathrm{BQP}=\PP=\mathrm{postBQP}.
\end{equation}
\end{corollary}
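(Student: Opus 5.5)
The plan is to derive each equality from the collapse $\PP=\mathrm{P}$ of Theorem~\ref{thm:ceqp-collapse}, together with standard containments among the probabilistic and quantum classes. Assume $\CeqP=\mathrm{P}$. Theorem~\ref{thm:ceqp-collapse} then gives $\PP=\mathrm{P}$, and the rest of the argument consists of squeezing $\mathrm{BPP}$, $\BQP$ and $\mathrm{postBQP}$ between $\mathrm{P}$ and $\PP$.

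For $\mathrm{BPP}$ and $\BQP$ we use the chain
\begin{equation}
\mathrm{P}\subseteq\mathrm{BPP}\subseteq\BQP\subseteq\PP .
\end{equation}
The first two inclusions are trivial. A deterministic machine is a probabilistic one, and a quantum circuit can simulate coin flips via Hadamard gates followed by measurement. The last inclusion is the theorem of Adleman, DeMarrais and Huang: the acceptance probability of a polynomial-size quantum circuit over a rational (or suitably approximated) gate set can be written as a $\GapP$ function divided by a known power of two, so the bounded-error acceptance condition becomes a $\PP$ threshold test. Once $\PP=\mathrm{P}$, all members of the chain coincide.

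For $\mathrm{postBQP}$ we invoke Aaronson's theorem $\mathrm{postBQP}=\PP$. The nontrivial direction, $\PP\subseteq\mathrm{postBQP}$, is the one we actually need for equality. The easy direction follows from the $\GapP$ representation of accept-and-postselect probabilities and ratio arguments. Combining this with $\PP=\mathrm{P}$ gives $\mathrm{postBQP}=\mathrm{P}$, and hence the full chain of equalities $\mathrm{P}=\mathrm{BPP}=\BQP=\PP=\mathrm{postBQP}$.

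The only genuine content is imported rather than proved here: the containment $\BQP\subseteq\PP$ and the identity $\mathrm{postBQP}=\PP$. The main point to check is that these are stated for a fixed, reasonable gate set, for instance Hadamard, Toffoli and phase gates with algebraic or rational entries, and that $\PP$ is used with its standard definition. Under those conditions the cited results apply verbatim and no further work is needed. I expect no real obstacle beyond citing these correctly. I would also remark that the consequence $\mathrm{NP}\subseteq\BQP$ mentioned in the main text follows even more directly, since $\mathrm{NP}\subseteq\mathrm{P}\subseteq\BQP$ under the hypothesis.
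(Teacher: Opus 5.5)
Your proposal is correct and matches the paper's proof essentially step for step: obtain $\PP=\mathrm{P}$ from Theorem~\ref{thm:ceqp-collapse}, collapse the chain $\mathrm{P}\subseteq\mathrm{BPP}\subseteq\BQP\subseteq\PP$, and finish with Aaronson's identity $\mathrm{postBQP}=\PP$. One small correction to a side remark: the nontrivial inclusion $\PP\subseteq\mathrm{postBQP}$ is not needed. Concluding $\mathrm{postBQP}=\mathrm{P}$ only uses the easy inclusion $\mathrm{postBQP}\subseteq\PP$, together with the trivial chain $\mathrm{P}\subseteq\BQP\subseteq\mathrm{postBQP}$.
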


\begin{proof}
The standard unconditional containments are~\cite{FortnowRogers1999ComplexityLimitations}
\begin{equation}
    \mathrm{P}\subseteq\mathrm{BPP}\subseteq\mathrm{BQP}\subseteq\PP\,.
\end{equation}
Since Theorem~\ref{thm:ceqp-collapse} gives
$\PP=\mathrm{P}$, these containments imply
\begin{equation}
    \mathrm{P}
    =
    \mathrm{BPP}
    =
    \mathrm{BQP}
    =
    \PP.
\end{equation}
Finally, since
$\mathrm{postBQP}=\PP$~\cite{Aaronson_2005}, then
$\mathrm{postBQP}=\mathrm{P}$ .
\end{proof}

Moreover, under the weaker assumption that a polynomial-time quantum algorithm that decides $\CeqP$ problems exists, the following collapse, highly unlikely under standard complexity assumptions, would also follow.

\begin{theorem}\label{thm:ceqp-np-bqp}
If $\CeqP \subseteq \BQP$, then $\NP \subseteq \BQP$.
\end{theorem}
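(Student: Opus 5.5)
The plan is to compose the containment $\mathrm{coNP}\subseteq\CeqP$ from Lemma~\ref{lem:np-coceqp} with the hypothesis, and then use the fact that $\BQP$ is closed under complement.

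\textbf{Step 1.} Let $L\in\NP$ be arbitrary. Its complement $\bar L$ lies in $\mathrm{coNP}$. By item 1 of Lemma~\ref{lem:np-coceqp}, $\mathrm{coNP}\subseteq\CeqP$, so $\bar L\in\CeqP$. Concretely, $\bar L$ is the zero set of the $\sharpP$ witness-counting function $f$ for $L$.

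\textbf{Step 2.} The hypothesis $\CeqP\subseteq\BQP$ then gives $\bar L\in\BQP$. Let $\mathcal A$ be a polynomial-time quantum algorithm deciding $\bar L$ with bounded error (say success probability at least $2/3$ on every input).

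\textbf{Step 3.} Define $\mathcal A'$ to run $\mathcal A$ and negate its output bit. This is still a uniform polynomial-size quantum circuit family, now with an extra $X$ gate on the output qubit. It accepts $x$ with probability at least $2/3$ exactly when $\mathcal A$ rejects, that is, when $x\in L$. It rejects with probability at least $2/3$ when $x\notin L$. Hence $L\in\BQP$, and since $L\in\NP$ was arbitrary, $\NP\subseteq\BQP$.

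There is no substantive obstacle here. The only point needing care is the closure of $\BQP$ under complement, which holds because the error bound is two-sided and symmetric. If one prefers to stay in the $\CeqP$ formulation, an equivalent route is: $\NP\subseteq\mathrm{co}\CeqP$ by Lemma~\ref{lem:np-coceqp}, and $\mathrm{co}\CeqP\subseteq\mathrm{co}\BQP=\BQP$.
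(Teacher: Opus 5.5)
Your proof is correct and follows the paper's argument: both chain $\coNP\subseteq\CeqP\subseteq\BQP$ and then use closure of $\BQP$ under complement to get $\NP\subseteq\BQP$. Your Step 3, which builds the algorithm with negated output, just writes out the complement-closure step that the paper states in one line.
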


\begin{proof}
Eq.~\eqref{eq:ceqp-containments} gives $\coNP \subseteq \CeqP \subseteq \BQP$.
Taking complements and using closure of $\BQP$ under complement, we obtain $\NP \subseteq \BQP$.
\end{proof}

\section{Generic PEPS contraction hardness}
One of the central strengths of the PEPS formalism is that it provides an exponentially compressed description of many-body quantum states whenever the bond dimension scales favorably with the system size~\cite{Cirac_Perez-Garcia_Schuch_Verstraete_2021}. Although to specify a generic $N$-qubit vector $2^N$ complex amplitudes are required, a PEPS is specified only by its local tensors. For fixed physical dimension and bond dimension $D = O(\poly N)$, the total number of tensor entries grows only polynomially with the system size (on the condition that each entry has a polynomially bounded bit length representation), thus PEPS constitute an efficiently describable family of quantum states despite representing vectors living in exponentially large Hilbert spaces.

As shown before, the contraction viewpoint naturally encompasses all quantities appearing in this work. Computational basis amplitudes are given by the scalar contractions seen in Eq.~\eqref{eq:peps-coordinate-amplitude}. The norm corresponds to the double-layer TN obtained by contracting a copy of the PEPS with its conjugate shown in Eq.~\eqref{eq:peps-norm}. More generally, expectation values of local observables are obtained by inserting the observable on the corresponding physical legs before performing the double-layer contraction as seen in Eq.~\eqref{eq:expval}, where $R\subseteq\Lambda$ denotes the support of the observable. 

Despite their compact description and simplicity in their graphical representation, contracting arbitrary two-dimensional PEPS is computationally intractable in the worst case, even restricting to PEPS with bond dimension $D=2$ on the square lattice. In particular, Schuch, Wolf, Verstraete, and Cirac proved that exact PEPS norm evaluation, as well as local expectation-value computation, are $\sharpP$-complete problems~\cite{Schuch_Wolf_Verstraete_Cirac_2007}; see also the modern review~\cite{Cirac_Perez-Garcia_Schuch_Verstraete_2021}. 
We will use the following consequence of the PEPS contraction hardness.

\begin{theorem}[PEPS contraction hardness~\cite{Schuch_Wolf_Verstraete_Cirac_2007}]
\label{thm:peps-contraction-hardness}
Exact contraction of PEPS, including exact norm evaluation
\begin{equation}
        Z(A)=\braket*{\widetilde\psi(A)}{\widetilde\psi(A)},
\end{equation}
is $\sharpP$-hard under exact function reductions.  Conversely, for rational-entry PEPS, the relevant exact contractions are computable in $\sharpP$ after clearing denominators.
\end{theorem}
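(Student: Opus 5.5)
The statement just given is Theorem~\ref{thm:peps-contraction-hardness}, which the paper imports from Schuch, Wolf, Verstraete, and Cirac. I would therefore prove it in two directions: hardness via an explicit embedding of a $\sharpP$-complete count into a PEPS norm, and membership via expansion of the contraction as a signed integer sum.

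\textbf{Hardness.} I would start from $\#\mathrm{SAT}$ in circuit form. Given a polynomial-size Boolean circuit $C_x(y)$, I place it on a 2D grid in the standard way: each gate sits at a lattice site, wires are routed along edges, and wire crossings are implemented by swap gadgets. Each site then carries a $0/1$ tensor that enforces its gate's truth table, as in Ref.~\cite{Kourtis_Chamon_Mucciolo_Ruckenstein_2019}. Input wires carry the sum over assignments, and the output wire is projected onto $1$. The resulting scalar contraction $\cont(T)$ equals $f(x)=\#\{y: C_x(y)=1\}$.

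To turn this into a PEPS with qubit physical legs, I would give every site a trivial physical index: set $A_v^{s}=\delta_{s,0}\,T_v$. This gives $\ket*{\widetilde\psi(A)}=\cont(T)\ket{0\cdots0}$, and hence
\[
Z(A)=f(x)^2 .
\]
Since $f(x)\ge 0$, the post-processing map $f=\sqrt{Z}$ is an exact, injective function on the integers, so the reduction is weakly parsimonious in the sense of Definition~\ref{def:function-reduction}.

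Getting $D=2$ on the square lattice, as in the original reference, requires splitting gates into small gadgets and encoding bits in bonds. I would cite~\cite{Schuch_Wolf_Verstraete_Cirac_2007} for that refinement, since the later theorems only need constant $D$. The same encoding also yields hardness for local expectation values.

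\textbf{Membership.} Take a rational PEPS and let $L$ be the common denominator of all tensor entries; multiplying every tensor by $L$ makes all entries integers. An amplitude, a norm, or a local expectation value is then a sum, over polynomially many index variables of polynomial size, of products of polynomially many integers. Writing each integer in sign–magnitude form (splitting real and imaginary parts if needed), the contraction becomes $g=f_+-f_-$. Here $f_\pm$ counts the tuples consisting of an index assignment and, for each factor, a unit below its magnitude, with the product of signs fixed to be $\pm$. Both $f_\pm$ lie in $\sharpP$, so $g\in\GapP$. One $\sharpP$ oracle call (via the standard shifted-count trick, computing $f_+ + (M-f_-)$ for a known bound $M$) together with division by the known power of $L$ recovers the exact value.

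\textbf{Main obstacle.} The delicate part is the planar embedding with constant bond dimension: the crossing gadgets must preserve the count exactly, with no spurious multiplicities. I would check this by verifying that each gadget tensor is a $0/1$ indicator of a bijective wire relabeling.
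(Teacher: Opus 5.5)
Your argument is correct. It takes a genuinely different route only in that the paper gives no proof at all: it imports the statement from Schuch, Wolf, Verstraete and Cirac and uses it as a black box. Their result also delivers $D=2$ and hardness of normalized local expectation values.

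Your reconstruction is self-contained and reuses exactly the machinery the paper develops later:
\begin{itemize}
\item the circuit-to-tensor-network encoding of Lemma~\ref{lem:boolean-to-tn};
\item the one-branch case of Lemma~\ref{lem:branch-amplitude}, with $A_v^s=\delta_{s,0}T_v$.
\end{itemize}
This gives $Z(A)=f(x)^2$, and recovering $f(x)$ by an integer square root is a legitimate weakly parsimonious exact function reduction.

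Your membership argument is also sharper than the statement as phrased. A rational contraction can be negative, and even the nonnegative norm expands naturally as a signed sum once entries have signs. So what you actually establish is that the contraction, after clearing denominators, is a $\GapP$ function, via your sign--magnitude witness counting. It is recovered exactly with a single $\sharpP$ query through the shift $g+M$, rather than being literally a $\sharpP$ function. This is precisely the form the paper relies on later.

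Two side remarks in your proposal need correcting, although neither affects the theorem as stated.

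First, with your encoding the normalized state is $\ket{0\cdots0}$ independently of $f(x)$. Normalized local expectation values therefore carry no information about $f(x)$, and your claim that ``the same encoding'' yields their hardness is false. For that you need a marked qubit in a state $\propto f(x)\ket{0}+\ket{1}$, as in Lemma~\ref{lem:branch-amplitude}. For example, $\langle Z_{v_\star}\rangle=(f^2-1)/(f^2+1)$ is injective on $f\ge0$.

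Second, neither the $D=2$ refinement nor the ``crossing gadget'' obstacle is a real difficulty in the tensor-network setting. Route each wire along lattice edges with at most one wire per edge, and give unused edges dimension one. Then gate tensors, COPY tensors, corners and crossings all live on bonds of dimension two; a crossing is just the tensor $\delta_{\ell r}\delta_{ud}$ at one site. Such a crossing tensor preserves the count trivially, unlike crossover gadgets in planar circuit-SAT reductions.
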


Notice that while \cite{Schuch_Wolf_Verstraete_Cirac_2007} proves that PEPS contractions are $\sharpP-$complete in the worst case, \cite{Haferkamp_Hangleiter_Eisert_Gluza_2020} proves that these contractions are also hard in the average case if one wants to obtain the contraction value up to multiplicative precision. 

\section{Stabilizer entropy preliminaries}
\label{sec:stabilizer-prelims}
This section contains the concepts and properties of stabilizer states and stabilizer R\'enyi entropy (SE) that are relevant for this work. For a more comprehensive review, refer to~\cite{Leone_Oliviero_Hamma_2022,nielsen_chuang}.

Let $\pauli_N:=\{I,X,Y,Z\}^{\otimes N}$ denote the set of $N$-qubit Pauli strings modulo phases.  Let $\STAB_N$ denote the set of pure $N$-qubit stabilizer states, up to global phase.  Given a pure normalized $N-$qubit state $\ket*\psi$, define its \textit{characteristic distribution} as 
\begin{equation}\label{eq:chidist}
    \Xi(\psi):=\{2^{-N}\Tr[P\ketbra{\psi}]^2\}_{P\in \pauli_N}\,.
\end{equation}
It is straightforward to see that this is a probability distribution over $\pauli_N$, i.e. $\Xi_P(\psi)\geq 0 \;\forall P\in \pauli_N
$ and $\sum_P \Xi_P(\psi)=1$ for all pure normalized states. 

{A direct way to quantify nonstabilizerness of pure states is via the moments of the distribution $\Xi(\psi)$. This relies on the intuition, formalized later, that pure stabilizer states correspond to maximally concentrated distributions with a constraint on the cardinality of their support given by the purity condition, while states with nonstabilizerness yield flatter distributions~\cite{tirrito_2024_flatness}. Perhaps the simplest such measure is the \textit{SE} of a pure state $\ket*\psi$, defined} as the (offset) $\alpha-$R\'enyi entropy of its characteristic distribution, namely 
\begin{equation} \label{eq:sre_def}
    M_\alpha(\psi):=\frac{1}{1-\alpha}\log \left(2^{N(\alpha-1)}\sum_P \Xi_P(\psi)^\alpha\right) \;.
\end{equation}
along with its argument, which we denote by $\alpha-$stabilizer purity:
\begin{equation} \label{eq:stp_def}
    \stp_\alpha(\psi)=2^{N(\alpha-1)}\sum_P \Xi_P(\psi)^\alpha
\end{equation}
SE and purities satisfy the following property, which we also prove here for completeness:

\begin{lemma}[Faithfulness of SE/purity~\cite{Leone_Oliviero_Hamma_2022}]
\label{lem:sp-faithfulness}
For a normalized pure $N$-qubit state $\ket*\psi$ and for integer $\alpha>1$,
\begin{equation}
        M_\alpha(\psi)=0\iff\stp_\alpha(\psi)=1
        \quad\iff\quad
        \ket*\psi\in\STAB_N.
\end{equation}
\end{lemma}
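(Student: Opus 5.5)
The first equivalence is immediate from the definitions: since $M_\alpha(\psi)=\frac{1}{1-\alpha}\log \stp_\alpha(\psi)$ with $\alpha>1$, we have $M_\alpha(\psi)=0$ iff $\stp_\alpha(\psi)=1$. The content is therefore the second equivalence. We will prove it by comparing $\stp_\alpha$ with the size of the support of $\Xi(\psi)$, using two elementary facts. First, each weight is bounded: $\Xi_P(\psi)\le 2^{-N}$, because $|\langle P\rangle_\psi|\le 1$. Second, the weights are normalized: $\sum_P \Xi_P(\psi)=1$. This follows from Pauli completeness, $\sum_P \Tr(P\rho)^2=2^N\Tr\rho^2=2^N$ for a pure $\rho$.

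\emph{Stabilizer implies $\stp_\alpha=1$.} If $\ket\psi\in\STAB_N$, then $\ketbra{\psi}=2^{-N}\sum_{S\in G}\pm S$, where $G$ is the stabilizer group, which has $2^N$ elements. Hence $\Tr(P\ketbra\psi)^2=1$ for the $2^N$ phase-free strings $P$ appearing in $G$, and it vanishes for all other $P$. So $\Xi_P=2^{-N}$ on exactly $2^N$ strings, and
\begin{equation}
\stp_\alpha=2^{N(\alpha-1)}\cdot 2^N\cdot 2^{-N\alpha}=1 .
\end{equation}

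\emph{$\stp_\alpha=1$ implies stabilizer.} We write $\Xi_P^\alpha=\Xi_P\cdot\Xi_P^{\alpha-1}\le \Xi_P\, 2^{-N(\alpha-1)}$. Summing over $P$ and using normalization gives $\stp_\alpha\le 1$, with equality iff every nonzero weight equals exactly $2^{-N}$. Equivalently, every $P$ has $\langle P\rangle_\psi\in\{0,\pm1\}$. Normalization then forces the set $S=\{P:\langle P\rangle_\psi=\pm1\}$ to have exactly $2^N$ elements.

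The main step is to show that $S$, with its signs, forms a stabilizer group. For $P\in S$ with $\langle P\rangle=\pm1$, the state is an eigenvector: $P\ket\psi=\pm\ket\psi$. Take two elements $P,Q\in S$. If they anticommuted, we would get $PQ\ket\psi=-QP\ket\psi$ while also $PQ\ket\psi=QP\ket\psi=\pm\ket\psi$, which is impossible. So all elements of $S$ commute. The signed operators $\pm P$ fixing $\ket\psi$ are closed under multiplication, and the product of two commuting Hermitian Paulis is a Hermitian Pauli up to sign. Therefore $\{\pm P : P\in S\}$ is an abelian group of commuting Paulis, not containing $-\Id$, of order $2^N$.

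Such a group has exactly $N$ independent generators. Its joint $+1$ eigenspace has dimension $2^N/|G|=1$, and $\ket\psi$ lies in it. Hence $\ket\psi$ is the unique stabilizer state of $G$, which gives $\ket\psi\in\STAB_N$. The only delicate point is the sign bookkeeping that rules out $-\Id$ and justifies the projector formula $\prod_i(\Id+g_i)/2$. This is routine, since every element of the group fixes $\ket\psi$.
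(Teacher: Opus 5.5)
Your proof is correct and follows essentially the same route as the paper's. Both arguments use the bound $\Xi_P^\alpha\le 2^{-N(\alpha-1)}\Xi_P$ (the paper writes it as $x_P^\alpha\le x_P$ for $x_P=\Tr(P\ketbra{\psi})^2\in[0,1]$) to force every Pauli expectation into $\{0,\pm1\}$, with exactly $2^N$ of magnitude one; then the common-eigenvector argument gives an abelian signed Pauli group of order $2^N$ excluding $-\Id$, whose joint $+1$ eigenspace is one-dimensional.
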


\begin{proof}

Let $\psi=\ketbra{\psi}$ be normalized and pure, and put
$a_P=\Tr(P\psi)$. Pauli orthogonality gives
\begin{equation}
\psi=2^{-N}\sum_Pa_PP,\qquad
1=\Tr\psi^2=2^{-N}\sum_Pa_P^2.
\end{equation}
Every $a_P$ is real and $|a_P|\le1$. Therefore
\begin{equation}
\Xi_P=2^{-N}a_P^2\in[0,2^{-N}],\qquad
\sum_P\Xi_P=1.
\end{equation}
For every real $\alpha>1$, the correct normalization is
\begin{equation}
\mathrm{SP}_\alpha
=2^{N(\alpha-1)}\sum_P\Xi_P^\alpha
=2^{-N}\sum_P|a_P|^{2\alpha},
\end{equation}
\begin{equation}
M_\alpha=\frac{\log \mathrm{SP}_\alpha}{1-\alpha}
\end{equation}
For the equality case, put $x_P=a_P^2$. Since $0\le x_P\le1$,
\begin{equation}
\sum_Px_P^\alpha\le\sum_Px_P=2^N,
\end{equation}
with equality if and only if every $x_P$ equals zero or one. Thus
$\mathrm{SP}_\alpha=1$ implies exactly $2^N$ Pauli expectations of
magnitude one. If $|a_P|=1$, zero variance implies
$P|\psi\rangle=a_P|\psi\rangle$. Two anticommuting Hermitian Paulis
cannot have a common eigenvector: applying their products in opposite
orders would give opposite nonzero multiples of that vector.
Consequently all supported Paulis commute.

The signed operators
\begin{equation}
G=\{a_PP:|a_P|=1\}
\end{equation}
form an abelian Pauli subgroup fixing $|\psi\rangle$. It is closed
because products of commuting signed Pauli stabilizers again fix the
same vector; it excludes $-I$, and it has $2^N$ elements. Its common
+1 eigenspace has dimension one. Hence $|\psi\rangle$ is a pure
stabilizer state. Conversely, a pure stabilizer has exactly those
$2^N$ unit-magnitude expectations, so its stabilizer purity is one and its SE is zero.
\end{proof}

{It will be useful for later to ``linearize'' the expression of the $2-$stabilizer purity at the expense of introducing more copies of the state. For that,} define the orthogonal projector
\begin{equation}
        Q_N
        :=
        \frac{1}{4^N}
        \sum_{P\in\pauli_N}P^{\otimes 4}.
        \label{eq:QN-definition}
\end{equation}
This operator factorizes as
\begin{equation}
        Q_N=Q_1^{\otimes N},
        \qquad
        Q_1=\frac14\left(I^{\otimes4}+X^{\otimes4}+Y^{\otimes4}+Z^{\otimes4}\right)\,.
        \label{eq:Q-factorization}
\end{equation}
For a normalized pure state $\ket*\psi$, using $\Tr(A)\Tr(B)=\Tr(A\otimes B)$,  its stabilizer purity can be written as
\begin{equation}
        \stp_2(\psi)
        :=
        2^N\Tr[Q_N\ketbra\psi^{\otimes4}]\,.
        \label{eq:sp-normalized}
\end{equation}
For an unnormalized nonzero state $\ket*{\widetilde{\psi}}$, the stabilizer purity of the corresponding normalized state $\ket\psi$ is
\begin{equation}
        \stp_2({\psi}
        )
        =
        2^N
        \frac{\Tr[
        Q_N\ketbra*{\widetilde{\psi}}^{\otimes4}
        ]}{\braket*{\widetilde{\psi}}^4}\,.
        \label{eq:sp-unnormalized}
\end{equation}
The normalization in \eqref{eq:sp-unnormalized} is essential, as the condition $\stp_2(\psi)=1$ detects stabilizer states only after normalizing the pure state.

\section{Stabilizer entropy evaluation}
\label{sec:sp-complexity}

In this section, we prove $\sharpP-$hardness of computing SE. The formal definition of the computational problem is introduced below: we phrase this problem as the problem of outputting a \textit{dyadic number} (that is, a rational number where the denominator is a power of two) that approximates the value of SE up to $m$ digits.

\begin{problem}[Dyadic evaluation of the $\alpha$-SE]
    \label{prob:dyadic_stabilizer_entropy}
    Fix a constant integer $\alpha>1$. Given the tensor description
    $\{A_i\}$ of an $N$-qubit PEPS on a square lattice and a fixed number of digits $m$ in binary notation, output an integer $k\in\mathbb{Z}_{\geq 0}$ such
    that the dyadic rational
    \begin{equation}
        \widetilde M_\alpha\coloneqq\frac{k}{2^m}
    \end{equation}
    approximates the $\alpha-$SE
     \begin{equation}
        M_\alpha\bigl(\ket*{\psi(A)}\bigr)
        \coloneqq
        \frac{1}{1-\alpha}\log 
        \stp_\alpha\bigl(\ket*{\psi(A)}\bigr)
    \end{equation}
    with additive error
    \begin{equation}
        \left|
        \widetilde M_\alpha-
        M_\alpha\bigl(\ket*{\psi(A)}\bigr)
        \right|
        \leq 2^{-m}.
    \end{equation}
\end{problem}

To prove $\sharpP$ hardness of Problem~\ref{prob:dyadic_stabilizer_entropy}, we start with the simpler case of $\alpha = 2$ stabilizer purity, prove its hardness, and then  generalize to integer $\alpha > 1$. We then establish that computing $m-$digit accurate SE and computing exact stabilizer purity are equivalent problems that can be mapped onto each other using polynomial-time post-processing.

\subsection{\texorpdfstring{$\alpha-$}{}stabilizer purity}

We first formulate the exact $2-$stabilizer purity problem as a rational-valued function problem.

\begin{problem}[Exact PEPS stabilizer-purity evaluation]
\label{prob:sp}
 Let $\Lambda$ be a regular square lattice with $N:=|\Lambda|$ sites, local physical dimension $d=2$, and bond dimension $D$. Given a rational tensor description $\{A_i ^{s_i}\}_{i\in \Lambda,s_i\in\{0,1\}}$ of a nonzero PEPS on this lattice $\ket*{\widetilde{\psi}(A)}$, compute the $2-$stabilizer purity of the corresponding normalized PEPS $\ket{\psi(A)}$:
           \begin{equation}
    \stp_2(\ket{\psi(A)}):=2^N\Tr(\frac{Q_N\ketbra*{\widetilde{\psi}(A)}^{\ot 4}}{\braket*{\widetilde{\psi}(A)}^4})\,.
           \end{equation}
\end{problem}

Our result for its hardness is:

\begin{theorem}[{Exact PEPS stabilizer purity is $\sharpP$-complete}]
\label{thm:sp-hard}
Problem~\ref{prob:sp} for PEPS of bond dimension $D = 4$ is complete for $\sharpP$-type exact counting under weakly-parsimonious reductions. In particular, it is $\sharpP$-hard under polynomial-time exact function reductions and it can be computed with access to a  ${\sharpP}$ oracle and polynomial-time exact rational postprocessing.
\end{theorem}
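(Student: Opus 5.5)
The proof has two directions: membership (evaluation with a single $\sharpP$ oracle call plus exact rational postprocessing) and hardness (a weakly parsimonious reduction from $\#\mathrm{SAT}$).

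\emph{Membership.} By Eq.~\eqref{eq:sp-unnormalized}, $\stp_2(\psi(A))=2^N\,\mathcal N(A)/Z(A)^4$. Here $\mathcal N(A):=\Tr[Q_N\ketbra*{\widetilde\psi(A)}^{\otimes4}]$ is itself a scalar tensor-network contraction. At each site we place the 4-copy replicated tensor (two ket and two bra layers, bond dimension $D^4$ per edge) and contract the physical legs with $Q_1$, which has rational entries. After clearing denominators, both $\mathcal N(A)$ and $Z(A)$ are integer-valued signed sums over polynomially checkable index assignments. By Theorem~\ref{thm:peps-contraction-hardness} they are therefore $\GapP$ quantities computable with a $\sharpP$ oracle. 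To use a single call, I would pack the two numbers into one $\sharpP$ function, for example $2^{L}(\mathcal N'+c)+(Z'+c')$ with known offsets that make each part nonnegative and a padding length $L$ exceeding the bit length of the lower part. The ratio is then formed by exact rational arithmetic.

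\emph{Hardness.} Given a Boolean formula or circuit $C$, I would use the Kourtis et al.\ construction to write $\#C$ as a scalar TN contraction $t=\mathcal C(T_C)$ on a planar grid. Gate tensors enforce truth tables, and crossings are removed with SWAP/crossing tensors, so the bond dimension stays constant (at most $4$ after blocking). I would then build a PEPS of bond dimension $D=4$ whose state is proportional to $(t\ket0+\ket1)\otimes\ket{0\cdots0}$:
\begin{itemize}
\item every site except a marked site $v_0$ carries physical index $0$ with its $T_C$ tensor;
\item the physical index-$1$ component of every site is set to zero except at $v_0$;
\item the virtual legs at $v_0$ are extended by one extra bond value that routes a constant $1$ along a path through the lattice, so the amplitude of $\ket{1}_{v_0}\ket{0\cdots0}$ is exactly $1$ and does not interfere with the $t$ sector.
\end{itemize}
This is possible because adding one flag value to each bond along a spanning path, with tensors chosen so that all-flag and no-flag configurations are the only nonzero mixed assignments, gives a direct sum of $T_C$ and a product network.

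Since $\ket{0\cdots0}$ is a stabilizer state and SE is additive, $\stp_2$ of the whole state equals $\stp_2$ of the single qubit $(t\ket0+\ket1)/\sqrt{1+t^2}$. That qubit has real Bloch vector $(2t,0,t^2-1)/(1+t^2)$, so
\begin{equation}
\stp_2=\frac12\Bigl[1+\frac{16t^4+(t^2-1)^4}{(1+t^2)^4}\Bigr]=1-\frac{4t^2(t^2-1)^2}{(1+t^2)^4}.
\end{equation}
Since $t\ge0$ is an integer, I would recover it from this rational value by exact postprocessing. The map $t\mapsto\stp_2$ is injective on $\{0\}\cup\{2,3,\dots\}$ (it is strictly monotone there after $t\ge2$), and the only collision is $t=0$ versus $t=1$. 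I would remove that collision by encoding $t'=2t+2$, for instance by appending a trivial network that doubles and shifts the count. Alternatively one can solve the resulting polynomial equation over the rationals and keep the integer root in range. This makes the reduction weakly parsimonious in the sense of Definition~\ref{def:function-reduction}.

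The main obstacle is the PEPS embedding at bond dimension exactly $4$. The $\#\mathrm{SAT}$ network has to be planarized onto the square lattice, the flag sector (the bond-value-$1$ path) has to fit without exceeding $D=4$, and the construction must avoid cross terms between the flag and $T_C$ sectors. I would first try a planar network of bond dimension $2$ with the flag implemented by doubling the bond dimension on every edge, giving $2\cdot2=4$. Crossings would be handled by coarse-graining a constant-size block of gates into each PEPS site.
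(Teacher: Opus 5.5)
Your membership argument and your overall hardness strategy coincide with the paper's. The strategy is: a Kourtis-type counting network, then a branch-sum PEPS proportional to $(t\ket{0}+\ket{1})\otimes\ket{0\cdots0}$, then inverting $t\mapsto p_2(t)$. Two small points on membership. First, $\Tr[Q_N(\ket{\widetilde\psi}\bra{\widetilde\psi})^{\otimes4}]$ needs four ket and four bra copies, so the double layer has bond dimension $D^8$, not ``two ket and two bra layers'' with $D^4$; this is harmless. Second, your known offsets are a legitimate and slightly more careful way to pack the signed contractions into one $\sharpP$ query, since $2^h-f_-\in\sharpP$ whenever $f_-<2^h$; the paper treats the contractions directly as $\sharpP$ functions.

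The genuine problem is the injectivity claim on which your recovery step rests. In $u=t^2$ one has $dp_2/du=4(u-1)(u^2-6u+1)/(1+u)^5$. So $p_2$ decreases on $[1,\sqrt2+1]$ and is strictly increasing only for $t\ge\sqrt2+1\approx2.414$; it is not monotone on $[2,\infty)$. In fact $p_2(2)=p_2(3)=481/625$. To see this, write $p_2=1-x^2z^2$ with $(x,z)=\bigl(2t/(1+t^2),(t^2-1)/(1+t^2)\bigr)$; then $t=2$ and $t=3$ have $(x,z)=(4/5,3/5)$ and $(3/5,4/5)$, so the components are just swapped. More generally, the level sets of $p_2$ are the orbits of $t\mapsto -t,\ 1/t,\ (t+1)/(t-1)$. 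Hence $p_2$ is not injective on $\{0\}\cup\{2,3,\dots\}$, and $0$ versus $1$ is not the only integer collision.

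Your particular re-encoding $t'=2t+2$ survives only by accident: its range $\{2,4,6,\dots\}$ avoids $3$, and $p_2(2)=p_2(3)<p_2(4)<p_2(6)<\cdots$. The equally natural shift $t'=t+2$ would send counts $0$ and $1$ to the same purity. For the same reason, ``keep the integer root in range'' is ambiguous for unshifted counts.

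The clean repair, which is what the paper does, is to shift the count into the monotone region. Add a constant branch of weight $\lambda=242/100>\sqrt2+1$, routed to $\ket{0}$, so that $t=\lambda+f(x)$; any integer shift $\ge3$ works equally well. Then recover $f(x)$ by binary search over $\{0,\dots,2^{p(n)}\}$, comparing $p_2(\lambda+k)$ with the oracle value in exact rational arithmetic. This takes $O(p(n))$ comparisons and also supplies the explicit polynomial-time recovery procedure your proposal leaves unspecified. The bond-dimension budget matches the paper's: a bond-dimension-$2$ counting network, plus one extra bond value for the $\ket{1}$ branch and one for the shift branch, gives $D=4$.
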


The proof of this theorem naturally separates in two parts, namely proof of computability with a $\sharpP$ oracle and proof of $\sharpP-$hardness. 
To prove the first statement, we show that one can write the stabilizer purity computation as a PEPS contraction, which is $\sharpP-$complete~\cite{Schuch_Wolf_Verstraete_Cirac_2007}. This part is straightforward; we start from a generic PEPS tensor description $\{A_i ^s\}$ and:
\begin{itemize}
\item Linearize the computation of its stabilizer purity as shown in Eq.~\eqref{eq:sp-normalized} at the cost of now working with four copies of the initial state $\unpeps$;
\item Notice that the obtained state is still a PEPS whose bond dimension has increased only polynomially with respect to the initial one;
\item Incorporate the projector $Q_1$ at each site obtaining a new PEPS $\ket*{\widetilde{\Phi}(A)}$ living in the tensor product of four copies of the initial physical Hilbert space. This PEPS has the property that
\begin{equation}
    \frac{\braket*{\widetilde{\Phi}(A)}}{\braket*{\widetilde{\psi}(A)}^4}=\stp_2(\ket{\psi(A)})\,.
\end{equation}
\item Using a single call to a specific $\sharpP$ oracle and polynomial post-processing  we are able to extract numerator and denominator of this ratio and then obtain the stabilizer purity.
\end{itemize}
The only previously known hardness result needed from this is $\sharpP-$completeness of generic PEPS contractions, shown in \cref{thm:peps-contraction-hardness} \cite{Schuch_Wolf_Verstraete_Cirac_2007}.

The proof of $\sharpP-$hardness is more involved than the previous one: due to the presence of the $Q_1$ projector at every physical site, it is not immediate whether a generic PEPS contraction could be written as a PEPS stabilizer purity computation. Thus, we were unable to directly use the PEPS contraction hardness results of \cite{Schuch_Wolf_Verstraete_Cirac_2007} to prove $\sharpP-$hardness of our problem. We then follow a different route and directly encode a $\sharpP$ function in the amplitude of a specifically built PEPS, recovering its value by computing the stabilizer purity. The proof follows the following steps:
\begin{itemize}
    \item Starting from a $\sharpP-$problem, namely counting the satisfying assignments of a Boolean formula, we show how such a problem can be encoded in a scalar TN contraction of polynomial size and rational entries (\cref{lem:boolean-to-tn} below, from Ref.~\cite{Kourtis_Chamon_Mucciolo_Ruckenstein_2019})
    \item Given the scalar TN contraction encoding the $\sharpP$ function, we then build a specific PEPS with bond dimension $D=4$ from the original tensors, such that the value of the desired contraction is encoded in a specific single marked qubit amplitude of said PEPS, leaving all other qubits in the state $\ket{0}$. In particular, the state of the marked qubit reads
    \begin{equation}
           \ket{\omega_t}=\frac{t\ket{0}+\ket{1}}{\sqrt{1+t^2}}
    \end{equation}
    where the amplitude $t$ encodes the value of the scalar contraction.
    The building of this PEPS involves basic direct sums of the initial tensors, and is more precisely explained in \cref{lem:branch-amplitude}. 
    \item Compute the stabilizer purity of this specifically built PEPS and prove that for states of such a form, the stabilizer purity is an injective function of the marked qubit amplitude $t$ (and thus of the scalar TN contraction hence the $\sharpP$ computation): this is shown in \cref{lem:sp_t}.
    \item As a final step, find the associated $\sharpP$ function value using a binary search algorithm, which runs in polynomial time.
\end{itemize}
As a result, one is able to recover the value of any $\sharpP$ computation from the stabilizer purity of a suitable PEPS. 
We now state the auxiliary lemmas, along with their proofs, needed for a rigorous proof of \cref{thm:sp-hard} below.
\begin{lemma}[Boolean counting as tensor-network contraction \cite{Kourtis_Chamon_Mucciolo_Ruckenstein_2019}]
\label{lem:boolean-to-tn}
Let $C_x(y)$ be a polynomial-size Boolean circuit depending on an input $x$ and a nondeterministic string $y\in\{0,1\}^{m(n)}$.  One can construct in polynomial time a rational-entry scalar TN $T_x$ of polynomial size such that
\begin{equation}
        \cont(T_x)=\#\{y:C_x(y)=1\}.
\end{equation}
More generally, for every $g\in\GapP$, one can construct a signed rational-entry TN $T_{g,x}$ of polynomial size such that
\begin{equation}
        \cont(T_{g,x})=g(x).
\end{equation}
\end{lemma}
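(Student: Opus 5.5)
We will construct $T_x$ gate by gate, following the standard encoding of Boolean circuits as networks of truth-table tensors. We first fix a normal form for $C_x$: by a polynomial-time preprocessing we assume it uses only gates from a finite basis, say $\mathrm{NOT}$, two-input $\mathrm{AND}$, and an explicit $\mathrm{COPY}$ (fan-out) gate, so every wire has exactly one source and one sink. Each wire becomes an edge of bond dimension $2$ carrying a bit. Each gate $G$ with inputs $a,b$ and output $c$ becomes a vertex tensor with entries
\begin{equation}
  (T_G)_{a b c}=\delta_{c,G(a,b)}\in\{0,1\}.
\end{equation}
The $\mathrm{COPY}$ tensor is $\delta_{a=b=c}$. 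Each input wire $y_j$ is left free: we attach the all-ones vector $(1,1)$, which sums over $y_j\in\{0,1\}$. Each wire carrying a bit of the fixed input $x$ is terminated by the basis vector $e_{x_i}$. The output wire is terminated by $e_1=(0,1)$.

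Correctness follows by expanding the contraction. A nonzero term in $\cont(T_x)$ requires every gate constraint to hold, so the wire values are completely determined by the assignment $y$, and they must equal the actual evaluation of $C_x(y)$. The term then equals $1$ exactly when the output wire is $1$, and $0$ otherwise. Summing over $y$ therefore gives $\#\{y:C_x(y)=1\}$. The network has size linear in the circuit size, all entries lie in $\{0,1\}$, and the map $x\mapsto T_x$ is polynomial-time because $x\mapsto C_x$ is.

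For $g\in\GapP$, write $g=f_+-f_-$ with $f_\pm\in\sharpP$, and build the two networks $T_+$ and $T_-$ as above. We want a single connected network whose contraction is $\cont(T_+)-\cont(T_-)$. We introduce one extra "selector" bit $b$: a $\mathrm{COPY}$-type tensor distributes $b$ to every vertex. In the $b=0$ sector each vertex acts as its $T_+$ tensor, and in the $b=1$ sector as its $T_-$ tensor, i.e.\ the local block-diagonal direct sum. The two networks are first padded with dummy identity wires so that they share the same graph. The selector's boundary vector is set to $(1,-1)$. Distributing $b$ to every gate only multiplies bond dimensions by a constant factor. Summing over $b$ then yields $f_+(x)-f_-(x)$ with entries in $\{0,\pm1\}$.

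The only mildly delicate step is this last direct-sum gluing. It requires making the two circuits' graphs isomorphic by padding, and ensuring the sign enters exactly once, through the boundary vector, rather than multiplicatively at every vertex. Everything else is a routine check that nonzero terms correspond bijectively to consistent wire assignments.
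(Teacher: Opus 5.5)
Your proposal is correct and is essentially the paper's argument: truth-table gate tensors and COPY (delta) tensors on bond-dimension-$2$ wires, with witness legs summed and the output leg fixed to $1$, and, for $g=f_+-f_-$, a block-diagonal direct sum of the two networks on a common graph. The only difference is cosmetic: you insert the single minus sign through a $(1,-1)$ boundary vector on the selector, whereas the paper multiplies one tensor of the $f_-$ branch by $-1$, and your remarks on padding to a common graph and on the sign entering only once make explicit details that the paper leaves tacit.
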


\begin{proof}
The construction is the standard circuit-to-tensor-network encoding \cite{Kourtis_Chamon_Mucciolo_Ruckenstein_2019}. It essentially amounts to reinterpreting the circuit diagram as a scalar tensor-network contraction. Each Boolean wire carries a two-dimensional index.  {Delta} tensors enforce equality of repeated wire values, and gate tensors enforce the truth table of the corresponding Boolean gate.  Summing over all internal wire indices performs the sum over assignments.  Fixing the output index to $1$ makes the contraction count accepting assignments.  If $g=f_+-f_-$, construct networks for $f_+$ and $f_-$ and take their direct sum, multiplying one tensor of the second branch by $-1$.  This gives a signed TN with contraction $g(x)$.  See also the tensor-network counting constructions of Refs.~\cite{Schuch_Wolf_Verstraete_Cirac_2007,Kourtis_Chamon_Mucciolo_Ruckenstein_2019}. 
\end{proof}

At this point, it is important to notice that the TN obtained from a generic Boolean circuit lives on a generic graph with bounded connectivity rather than a square lattice where we build the PEPS in this work. However, if a construction is first described on a general connected graph, it can be embedded into a square-lattice with only polynomial overhead in the number of physical sites by the standard padding and embedding used in PEPS contraction hardness constructions. This involves routing virtual indices along lattice paths using identity tensors, and implementing crossings and permutations by fixed local swap tensors. Consequently, every scalar contraction considered in the reductions can equivalently be realized as the contraction of a square-lattice PEPS without affecting polynomial-time constructibility and thus we can consider all the scalar tensor contractions arising from the Boolean circuits as TN living on a square lattice with no loss of generality.

The following lemma then shows how to encode the contraction of a scalar TN in the one-qubit amplitude of a PEPS.

\begin{lemma}[Branch-amplitude PEPS]
\label{lem:branch-amplitude}
Let $\{T^{(h)}\}$ be a finite set of scalar TN on a square or rectangular lattice $\Lambda$, each with bond dimension $D^{(h)}$, local tensors $B_v^{(h)}$, and contraction $\cont(T^{(h)})$. Then one can construct in polynomial time a PEPS on $\Lambda$ with bond dimension $D=\sum_h D^{(h)}$ and physical dimension $d=2$ whose unnormalized state is
\begin{equation}
        \sum_{h}
        \cont(T^{(h)})\ket*{s_h}_{v_\star}\ket*0^{\otimes(|\Lambda|-1)}.
        \label{eq:branch-state}
\end{equation}
where $s_h$ is an arbitrarily chosen computational basis state of a fixed vertex $v_\star$.

\end{lemma}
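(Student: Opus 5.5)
The idea is to take the block-diagonal direct sum of the given networks on the virtual level, and to use the physical index at $v_\star$ to record which block is active. Every other site then outputs $\ket{0}$ regardless of the block. The direct sum makes the contraction split into a sum over blocks with no cross terms.

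\textbf{Virtual spaces.} On every edge $e$ set $\mathbb C^{D_e}=\bigoplus_h\mathbb C^{D^{(h)}_e}$. A virtual basis index is then a pair $(h,\alpha)$, with $h$ the block label and $\alpha$ the index inside block $h$. The pair state becomes $\ket{\Omega_e}=\sum_h\sum_\alpha\ket{h,\alpha}\ket{h,\alpha}$. This is just the direct sum of the original maximally entangled vectors, so the bond dimension is $\sum_h D^{(h)}$. (Padding edges to a uniform dimension can be handled by zero entries.)

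\textbf{Tensors at sites $v\neq v_\star$.} Define
\begin{equation}
(A_v^{0})_{((h_e,\alpha_e))_e}=\Bigl[\prod_{e,e'\to v}\delta_{h_e,h_{e'}}\Bigr]\,(B^{(h)}_v)_{(\alpha_e)_e},\qquad A_v^{1}=0,
\end{equation}
where $h$ is the common block label on the legs at $v$.

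\textbf{Tensor at $v_\star$.} Define
\begin{equation}
(A_{v_\star}^{s})_{((h_e,\alpha_e))_e}=\sum_h[s=s_h]\Bigl[\prod_{e}\delta_{h_e,h}\Bigr]\,(B^{(h)}_{v_\star})_{(\alpha_e)_e}.
\end{equation}

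\textbf{Contraction.} In the amplitude formula \eqref{eq:peps-coordinate-amplitude}, any configuration with $s_v=1$ for some $v\neq v_\star$ contributes zero. So only configurations of the form $s=(s_{v_\star},0,\dots,0)$ survive.

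Each vertex tensor forces all its incident legs to carry the same block label. Each edge pair is diagonal in $(h,\alpha)$, so it forces the two endpoints to agree. Since $\Lambda$ is connected, one label $h$ is therefore shared by the whole lattice. Summing over the remaining indices $\alpha$ reproduces exactly $\cont(T^{(h)})$ times the indicator $[s_{v_\star}=s_h]$.

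Summing over $h$ gives $\braket{s}{\widetilde\psi(A)}=\sum_h\cont(T^{(h)})[s_{v_\star}=s_h]$ on those configurations and zero elsewhere. This is exactly \eqref{eq:branch-state}.

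\textbf{Polynomial-time constructibility.} The new tensors are block-diagonal and have rational entries. Their size is polynomial in the input, since the number of branches is finite and each $D^{(h)}$ is part of the input.

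\textbf{Main obstacle.} The delicate step is the connectivity/label-consistency argument. It has to handle boundary vertices of degree less than 4 and must ensure there are no cross-block leakage terms. Both follow from the Kronecker deltas and the block-diagonal form of $\ket{\Omega_e}$.

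If the scalar networks are not already defined on the same lattice $\Lambda$, I would first embed them there using the identity-routing argument stated before the lemma. Trivial identity tensors can be used at sites a network does not occupy.
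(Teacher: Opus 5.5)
Your proposal is correct and uses the same construction as the paper. Unmarked sites carry the block-diagonal direct sum $\bigoplus_h B_v^{(h)}$ on the $\ket*{0}$ component and zero on the $\ket*{1}$ component, while at $v_\star$ block $h$ is routed to the physical value $s_h$. Your label-consistency argument, via the Kronecker deltas and the connectivity of $\Lambda$, spells out what the paper states only briefly as ``due to the direct-sum structure of the virtual space'' each branch contributes separately.
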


\begin{proof}
We provide an explicit construction.  The PEPS is placed on the same lattice $\Lambda$ and at every unmarked site $v\neq v_\star$, we put the direct sum of the tensors $B_v ^{(h)}$ in the physical component $\ket*0$ and put zero in the physical component $\ket*1$:
\begin{equation}
        P_v^0:=\bigoplus_{h} B_v ^{(h)},
        \qquad
        P_v^1:=0.
\end{equation}
At the marked site $v_\star$, we build the PEPS tensors in a similar way, but place the tensors $B_{v_\star}^{(h)}$ in the physical component $\ket*{s_h}$ where we want to encode their contraction and zero in the other components:
\begin{equation}
        P_{v_\star}^{0}:=\bigoplus_{h}B_{v_\star}^{(h)}\delta_{s_h,0} \,,
        \qquad
        P_{v_\star}^{1}:=\bigoplus_{h}B_{v_\star}^{(h)}\delta_{s_h,1}\,.
\end{equation}
Due to the direct-sum structure of the virtual space, each branch $h$ contributes $\cont(T^{(h)})\ket*{s_h}_{v_\star}\ket*0^{\otimes(|\Lambda|-1)}$.
Summing over the possible global branches gives exactly \eqref{eq:branch-state}.
\end{proof}

As a final step, the following lemma shows that the stabilizer purity of the specific PEPS built from the previous lemma is an injective function of the contraction numerical value when restricted to certain intervals, so that one is able to reconstruct the value of the contraction via binary search \cite{Knuth1973}. 

\begin{lemma}\label{lem:sp_t}
Given a state of the form $\ket*{\omega_t}\otimes \ket{0}^{\otimes M}$, with 
\begin{equation}
        \ket{\omega_t}=\frac{t\ket{0}+\ket{1}}{\sqrt{1+t^2}}
        \label{eq:omega-t-def}
\end{equation}
and $t\in \mathbb{R}$, its $2-$stabilizer purity reads 
\begin{equation}
    \stp_2(\omega_t)=p_2 (t):=\frac{t^8+14t^4+1}{(1+t^2)^4}.
    \label{eq:r-t}
\end{equation}
The function $p_2 (t)$ is strictly decreasing on $0\le t\le \sqrt2-1$ and strictly increasing for $t\geq \sqrt{2}+1$.
\end{lemma}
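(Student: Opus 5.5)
The plan has two parts: compute the stabilizer purity explicitly, then analyse the monotonicity of the resulting rational function. For the computation, I would use the expression $\stp_2(\psi)=2^{-N}\sum_P a_P^4$ with $a_P=\Tr(P\psi)$, taken from the proof of Lemma~\ref{lem:sp-faithfulness}. The Pauli expectations of a product state factorize, so the purity is multiplicative, $\stp_2(\psi_1\otimes\psi_2)=\stp_2(\psi_1)\stp_2(\psi_2)$. Each $\ket 0$ factor is a stabilizer state, so it contributes $1$ by Lemma~\ref{lem:sp-faithfulness}. Hence only the single-qubit quantity $\stp_2(\omega_t)$ needs to be computed.

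For the real state $\ket{\omega_t}$, the expectation values are
\begin{equation}
a_I=1,\qquad a_X=\frac{2t}{1+t^2},\qquad a_Y=0,\qquad a_Z=\frac{t^2-1}{1+t^2}.
\end{equation}
Then
\begin{equation}
\stp_2=\frac12\left[1+\frac{16t^4+(t^2-1)^4}{(1+t^2)^4}\right].
\end{equation}
Expanding $(1+t^2)^4+(t^2-1)^4=2(t^8+6t^4+1)$ and adding $16t^4$ gives $2(t^8+14t^4+1)$. After halving, this yields Eq.~\eqref{eq:r-t}.

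For monotonicity, I would differentiate $p_2(t)=\dfrac{N(t)}{(1+t^2)^4}$ with $N(t)=t^8+14t^4+1$. The derivative is
\begin{equation}
p_2'(t)=\frac{(8t^7+56t^3)(1+t^2)-8t\,N(t)}{(1+t^2)^5},
\end{equation}
and the numerator simplifies to $8t\,(t^6-7t^4+7t^2-1)$. The sextic factors as $(t^2-1)(t^4-6t^2+1)$, and $t^4-6t^2+1=(t^2-2t-1)(t^2+2t-1)$ has positive roots $\sqrt2\pm1$. So for $t\ge0$ the derivative equals $8t(t-1)(t+1)(t-(\sqrt2-1))(t-(\sqrt2+1))$ times positive factors. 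The critical points are therefore $0,\ \sqrt2-1,\ 1,\ \sqrt2+1$.

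The sign analysis then goes as follows. On $(0,\sqrt2-1)$ the factors $t$, $t+1$ are positive and $t-1$, $t-(\sqrt2-1)$, $t-(\sqrt2+1)$ are negative, so the product is negative and $p_2$ is strictly decreasing on $[0,\sqrt2-1]$. For $t>\sqrt2+1$ all factors are positive, so $p_2$ is strictly increasing. Strictness at the endpoints follows because the derivative vanishes only at isolated points. As a consistency check, the symmetry $p_2(1/t)=p_2(t)$ maps one interval onto the other. The only real obstacle is the algebraic simplification of the derivative numerator and its factorization, which I would verify by checking the roots $t=1$ and $t=\sqrt2\pm1$ directly.
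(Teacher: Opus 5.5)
Your proposal is correct and takes essentially the same route as the paper. You evaluate the Pauli expectations of $\ket{\omega_t}$, note that the $\ket{0}^{\otimes M}$ factor contributes trivially, and then read off monotonicity from the sign of the factored derivative. The only difference is cosmetic: the paper substitutes $u=t^2$ and uses $\frac{dp_2}{du}=\frac{4(u-1)(u^2-6u+1)}{(1+u)^5}$, which is exactly your factored $p_2'(t)$ divided by $2t$.
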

\begin{proof}
        Equation~\eqref{eq:r-t} follows by inserting $\ket*{\omega_t}$ into \eqref{eq:stp_def}.  Writing $u=t^2$, one has
\begin{equation}
        p_2 (t)=\frac{1+14u^2+u^4}{(1+u)^4}
\end{equation}
and
\begin{equation}
        \frac{d r}{d u}
        =
        \frac{4(u-1)(u^2-6u+1)}{(1+u)^5}.
\end{equation}
For $0\le u\le(\sqrt2-1)^2$, the factor $u-1$ is negative while $u^2-6u+1$ is nonnegative, with equality only at the endpoint.  Thus $r$ is strictly decreasing on $0\le t\le\sqrt2-1$. Analogously, for $u\geq (\sqrt{2}+1)^2$, both factors are nonnegative hence $r$ is strictly increasing for $t\geq\sqrt{2}+1$.
\end{proof}
We are now ready to prove \cref{thm:sp-hard}.
\begin{proof}[Proof of \cref{thm:sp-hard}]
We first prove the upper bound, namely that the stabilizer purity of a PEPS $\ket*\psi$ can be obtained by having a single call to a $\sharpP$ oracle and efficient postprocessing.
Computing $2-$stabilizer purity of the resulting normalized PEPS $\ket*{\psi(A)}$ amounts to 
\begin{equation}
    \stp_2(\ket*{\psi(A)})=2^N \frac{\bra*{\widetilde{\psi}(A)}^{\otimes 4}{Q_N}\ket*{\widetilde{\psi}(A)}^{\otimes 4}}{\braket*{\widetilde{\psi}(A)}^4}=2^N\frac{S(A)}{Z(A)^4}\,.
\end{equation}
Recalling the factorization property of $Q_N$ from \cref{eq:Q-factorization} and the fact that it is an orthogonal projector, together with the fact that if $\ket*{\widetilde{\psi}(A)}$ is an unnormalized PEPS with tensor entries $A_i ^{s_i}$, then, also $\ket*{\widetilde{\psi}(A)}^{\otimes 4}$ is a PEPS with tensor $A^{\ot 4}=\raisebox{-0.33\height}{\includegraphics[height=4 em]{figures/replicated_tensor.pdf}}=\raisebox{-0.38\height}{\includegraphics[height=3.3 em]{figures/thick_tensor.pdf}}$, we now build the PEPS $\ket*{\widetilde{\Phi}(A)}$ by defining its tensors as $Q_1A^{\ot 4}=\raisebox{-0.38\height}{\includegraphics[height=3 em]{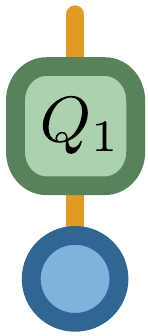}}
$. At this point it is straightforward to see that
\begin{equation}
     \braket*{\widetilde{\Phi}(A)}=\bra*{\widetilde{\psi}(A)}^{\otimes 4}{Q_N}\ket*{\widetilde{\psi}(A)}^{\otimes 4}
\end{equation}
hence
\begin{equation}\label{spratio}
     2^N \frac{\braket*{\widetilde{\Phi}(A)}}{\braket*{\widetilde{\psi}(A)}^4}=\stp_2(\ket*{\psi(A)})\,.
\end{equation}
Notice that this construction is the 2D analogue of the one seen in \cite{Haug_Piroli_2023} used to efficiently compute SE for Matrix Product States.
We know from \cref{thm:peps-contraction-hardness}, shown in Ref.~\cite{Schuch_Wolf_Verstraete_Cirac_2007}, that computing PEPS norms is $\sharpP-$complete, hence they can be computed by having access to a $\sharpP$ oracle. From Eq. \eqref{spratio}, it may seem that one needs \textit{two} calls to a $\sharpP$ oracle to determine PEPS stabilizer purity, since we need the norm of both $\ket*{\widetilde{\Phi}}$ and $\ket*{\widetilde\psi}$. However, we now show that both quantities can be accessed from a single call to a different $\sharpP$ oracle. We denote the two $\sharpP$ functions of the norms of the two PEPS as
\begin{equation}
    \begin{split}
        \braket*{\widetilde{\Phi}(A)}&:=f_1\\
        \braket*{\widetilde{\psi}(A)}&:=f_2
    \end{split}
\end{equation}
Then, choose a known polynomial $h(n)$ strictly exceeding the witness-length
bounds of these two machines, so $0\le f_j<2^{h(n)}$. Define then the following $\sharpP$ function (recall that $\sharpP$ is closed under sum and integer multiplication~\cite{FortnowRogers1999ComplexityLimitations}):
\begin{equation}
    C(A)=\sum_{j=1}^2 2^{(j-1)h(n)}f_j(A).
\end{equation}
Finally, its base-$2^h$ digits recover both
counts:
\begin{equation}
f_j(A)=\left\lfloor\frac{C(A)}{2^{(j-1)h(n)}}\right\rfloor\bmod2^{h(n)}.
\end{equation}
Hence computing the ratio of the two quantities and multiplying by $2^N$ one obtains $\stp_2(\ket*{\psi(A)})$ using a single call to a $\sharpP$ oracle.

We now show $\sharpP$-hardness.  Let $f\in\sharpP$.  For input $x$, choose a Boolean circuit whose accepting assignments are counted by $f(x)$, and use \cref{lem:boolean-to-tn} to construct a scalar TN $T_f(x)$ with
\begin{equation}
        \cont(T_f(x))=f(x).
\end{equation}
Let $p(n)$ be a polynomial upper bound on the length of the witness of the function $f$, so $0\le f(x)\le2^{p(n)}$.  Let $M+1$ be the number of physical sites needed to represent the scalar TN $T_f(x)$ on a square lattice and choose $ \lambda:=\frac{242}{100}>\sqrt{2}+1$ (the reason for this will become clear subsequently) and use \cref{lem:branch-amplitude} with three branches.  The first branch is $T_f(x)$, and is routed to the marked physical value $0$.  The second branch is a scalar-one network on the same graph, routed to the marked physical value $1$, and the third branch is a scalar-one network scaled to have contraction $\lambda$, routed to the marked physical state 0. The resulting unnormalized PEPS, with bond dimension 4, is
\begin{equation}
        \ket*{\widetilde\phi_f}=[(\lambda+f(x))\ket*{0}_{v_\star}+\ket{1}_{v_\star}]\ket*0^{\otimes M}.
\end{equation}
After normalization,
\begin{equation}
        \ket*{\phi_f}
        =
        \ket*{\omega_t}\otimes\ket*0^{\otimes M},
        \quad
        t:=\lambda + f(x).
\end{equation}
Tensoring with the stabilizer product state $\ket*0^{\otimes M}$ leaves stabilizer purity unchanged, so
\begin{equation}
        \stp_2(\phi_f)
        =
        p_2 (t)
        =
        \frac{t^8+14t^4+1}{(1+t^2)^4}.
\end{equation}
It remains to justify that the value of $f(x)$ can be recovered from the
exact rational number $p_2(\lambda + f(x))$ in polynomial time.  The possible values of $f(x)$ form the integer interval $\{0,1,\ldots,2^{p(n)}\}$.
Although this set has exponentially many elements, its endpoints have
$O(p(n))$ bits, and binary search needs only $O(p(n))$ comparisons.
Concretely, initialize two integers $\ell_0:=0$ and $\ell_{\max}:=2^{p(n)}$.

At each step set $ k:=\floor{\dfrac{\ell_0+\ell_{\max}}{2}}$
and compare the candidate value $p_2(\lambda+k)$ with the oracle value
$p_2(\lambda+f(x))$.
Since $p_2$ is strictly increasing (thus injective) on the interval $t\geq \lambda$ by \cref{lem:sp_t}, this
comparison has the following meaning:
\begin{equation}
        p_2(\lambda+k)=p_2(\lambda+f(x))\iff k=f(x),
\end{equation}
while
\begin{equation}
        p_2(\lambda+k)>p_2(\lambda+f(x))\iff k>f(x),
\end{equation}
and
\begin{equation}
        p_2(\lambda+k)<p_2(\lambda+f(x))\iff k<f(x).
\end{equation}
Thus, if $p_2(\lambda+k)<p_2(\lambda+f(x))$, one replaces the lower endpoint by
$\ell_0:=k+1$; if $p_2(\lambda+k)>p_2(\lambda+f(x))$, one replaces the upper endpoint by
$\ell_{\max}:=k-1$.  Each step halves the number of remaining candidate
integers, so after at most
\begin{equation}
       \ceil{ \log (2^{p(n)}+1)}
        =
        O(p(n))
\end{equation}
steps the unique integer $k=f(x)$ is found.
Every comparison is polynomial-time exact rational arithmetic, and
the full reconstruction of $f(x)$ is a polynomial-time exact postprocessing
procedure. Given the exact value $\stp_2(\phi_f)$, one recovers the integer $f(x)$ by exact binary search over $k\in\{0,1,\ldots,2^{p(n)}\}$, comparing the rational values $p_2(\lambda+k)$ with the oracle output.  The search uses $O(p(n))$ exact rational comparisons and can therefore be computed in polynomial time.  Thus exact stabilizer-purity evaluation computes an arbitrary $\sharpP$ function under exact postprocessing.
\end{proof}

We now prove that evaluating $\alpha-$stabilizer purity is $\sharpP$ hard for every integer value $\alpha>1$:
\begin{theorem}\label{thm:sp_alpha_hard}
    Given the tensor description $\{A_i\}$ of a PEPS on a square lattice, computing $\stp_\alpha(\ket*{\psi(A)})$ is $\sharpP-$hard for every {constant integer} $\alpha>1$ and can be evaluated with access to a $\sharpP$ oracle and polynomial-time post-processing.
\end{theorem}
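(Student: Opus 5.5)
We mirror the two halves of the proof of \cref{thm:sp-hard}, replacing the four-copy projector $Q_N$ by its $2\alpha$-copy analogue.

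\textbf{Upper bound.} Define $Q^{(\alpha)}_N:=4^{-N}\sum_{P\in\pauli_N}P^{\otimes 2\alpha}=(Q^{(\alpha)}_1)^{\otimes N}$. Since $\Tr(P\psi)^{2\alpha}=\Tr[P^{\otimes 2\alpha}\psi^{\otimes 2\alpha}]$, the expression $\stp_\alpha=2^{-N}\sum_P|a_P|^{2\alpha}$ from the proof of \cref{lem:sp-faithfulness} gives
\begin{equation}
\stp_\alpha(\psi(A))=2^{N}\,\frac{\bra*{\widetilde\psi(A)}^{\otimes 2\alpha}Q^{(\alpha)}_N\ket*{\widetilde\psi(A)}^{\otimes 2\alpha}}{Z(A)^{2\alpha}} .
\end{equation}
The numerator is a double-layer contraction of the PEPS with local tensors $Q^{(\alpha)}_1A^{\otimes 2\alpha}$. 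For constant $\alpha$, this PEPS has bond dimension $D^{2\alpha}$ and physical dimension $2^{2\alpha}$, both constant. $Q^{(\alpha)}_1$ is no longer a projector for $\alpha>2$, but it is still Hermitian with rational entries, so the contraction remains a rational $\GapP$-type quantity. Both numerator and denominator are therefore rational PEPS contractions, computable in $\sharpP$ after clearing denominators by \cref{thm:peps-contraction-hardness}. We pack the two counts into a single oracle call using the same base-$2^{h(n)}$ encoding as before. If signs appear, we first add a known offset to make each quantity nonnegative. Exact rational postprocessing then returns $\stp_\alpha$.

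\textbf{Hardness.} We reuse the three-branch construction of \cref{lem:branch-amplitude} with \cref{lem:boolean-to-tn}, which produces $\ket*{\omega_t}\otimes\ket0^{\otimes M}$ with $t=\lambda+f(x)$. Tensoring with a stabilizer product state leaves stabilizer purity unchanged, so we only need the single-qubit value. For $\ket*{\omega_t}$ the Pauli expectations are $a_I=1$, $a_X=2t/(1+t^2)$, $a_Y=0$ and $a_Z=(t^2-1)/(1+t^2)$. Hence
\begin{equation}
p_\alpha(t)=\frac{(1+t^2)^{2\alpha}+(2t)^{2\alpha}+(t^2-1)^{2\alpha}}{2(1+t^2)^{2\alpha}} .
\end{equation}
For $\alpha=2$ this reproduces \cref{lem:sp_t}.

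\textbf{Main obstacle: monotonicity of $p_\alpha$ for general $\alpha$.} The cleanest route is the substitution $\cos\theta=(t^2-1)/(1+t^2)$ and $\sin\theta=2t/(1+t^2)$. This gives
\begin{equation}
p_\alpha=\tfrac12\bigl(1+\cos^{2\alpha}\theta+\sin^{2\alpha}\theta\bigr).
\end{equation}
Set $g(\theta)=\cos^{2\alpha}\theta+\sin^{2\alpha}\theta$. Then
\begin{equation}
g'(\theta)=2\alpha\sin\theta\cos\theta\bigl(\sin^{2\alpha-2}\theta-\cos^{2\alpha-2}\theta\bigr).
\end{equation}
On $(0,\pi/4)$ the last factor is negative for $\alpha>1$, so $g$ is strictly decreasing there. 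As $t$ increases from $\sqrt2+1$ to $\infty$, $\theta$ decreases monotonically from $\pi/4$ to $0$. Therefore $p_\alpha$ is strictly increasing on $t\ge\sqrt2+1$, uniformly in $\alpha$, and the same $\lambda=242/100$ works for every $\alpha$.

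Binary search over $k\in\{0,\dots,2^{p(n)}\}$, comparing the exact rationals $p_\alpha(\lambda+k)$ with the oracle value, then recovers $f(x)$ using $O(p(n))$ comparisons. Each rational $p_\alpha(\lambda+k)$ has $O(\alpha\, p(n))$ bits, which is polynomial since $\alpha$ is constant.
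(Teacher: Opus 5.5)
Your proposal is correct and follows the paper's proof essentially step for step: a $2\alpha$-replica network with a single packed $\sharpP$ query for containment, and for hardness the three-branch encoding $t=\lambda+f(x)$, monotonicity of $p_\alpha$ on $t\ge\sqrt2+1$, and binary search. Your angle substitution is just a tidier form of the paper's direct derivative in $t$, and your offset remark covers the sign issue the paper glosses over. One aside needs correcting: $\frac14\sum_P P^{\otimes 2\alpha}$ on one qubit is still a projector for every even $\alpha$; only for odd $\alpha$ does it fail, with $\frac12\sum_P P^{\otimes 2\alpha}$ instead a Hermitian unitary. Your argument uses only Hermiticity and rationality, so nothing breaks.
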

\begin{proof}
    The hardness proof is almost identical to that of $\sharpP-$hardness of the 2-stabilizer purity:  we follow the same procedure as in the proof of Theorem \ref{thm:sp-hard} by encoding the value of a $\sharpP$ function $f$ in the single qubit amplitude of a PEPS and then computing its $\alpha-$stabilizer purity, which yields
    \begin{equation}
        \stp_\alpha(\phi_f):=p_\alpha(t)=\frac12\left[1+\left(\frac{2t}{1+t^2}\right)^{2\alpha}+\left(\frac{1-t^2}{1+t^2}\right)^{2\alpha}\right]\,.
    \end{equation}
    This function is strictly monotone (thus injective) for $t\geq \sqrt{2}+1$: to see that, we compute its derivative. This reads
    \begin{equation}
       \stp_\alpha'(t)=\dv{p_\alpha(t)}{t}=\frac{4 \alpha  t^2 \left(\frac{\left(t^2-1\right)^2}{\left(t^2+1\right)^2}\right)^{\alpha }-4^{\alpha } \alpha  \left(t^2-1\right)^2 \left(\frac{t^2}{\left(t^2+1\right)^2}\right)^{\alpha }}{t \left(t^4-1\right)}
    \end{equation}
    For $t>\sqrt{2}+1$, the expression can be written as
    \begin{equation}
        \stp_\alpha'(t)=
        \frac{4\alpha t(t^2-1)}{(t^2+1)^{2\alpha+1}}
        \left[(t^2-1)^{2\alpha-2}-(2t)^{2\alpha-2}\right].
    \end{equation}
    Since $\alpha>1$, $t>0$, and $t^2-1>0$, the prefactor is strictly positive. Moreover,
    \begin{equation}
        t>\sqrt{2}+1
        \quad\Longrightarrow\quad
        t^2-1>2t,
    \end{equation}
    and, because $2\alpha-2>0$,
    \begin{equation}
        (t^2-1)^{2\alpha-2}>(2t)^{2\alpha-2}.
    \end{equation}
    Hence the quantity in brackets is also strictly positive, proving that
    \begin{equation}
        \stp_\alpha'(t)>0,
        \qquad
        t>\sqrt{2}+1.
    \end{equation}
    Since the $\stp_\alpha(\phi_f)$ is injective, we can again use a binary search algorithm to recover the value of $t$ and thus the value of $f(x)$.

    For containment, we build a $2\alpha-$replica TN to compute
    \begin{equation}
U_\alpha=\Tr\!\left[
\left(\sum_{P\in\pauli_N}P^{\otimes2\alpha}\right)
(\ketbra*{\widetilde{\psi}(A)})^{\otimes2\alpha}\right].
\end{equation}
This uses $2\alpha$ ket copies and $2\alpha$ bra copies. Replicating
the ket PEPS gives bond dimension $D^{2\alpha}$; including the bra
gives a scalar double layer with dimension $D^{4\alpha}$. The
physical/operator factors are constants for fixed $\alpha$. Notice that the operator $\sum_{P\in\pauli_N}P^{\otimes2\alpha}$ is proportional to a projector for even $\alpha$, whereas it is proportional to a unitary operator for $\alpha$ odd~\cite{Bittel2026operational}. In both cases, the operator can be incorporated into the replicated tensor network. The $\alpha-$stabilizer purity is then $\frac{U_\alpha}{2^N Z(A)^{2\alpha}}$, which can be computed using the $\sharpP$ oracle procedure described in the proof of Theorem~\ref{thm:sp-hard}.
\end{proof}

\subsection{\texorpdfstring{$\alpha-$}{}stabilizer entropy}

We are now ready to state the main result, namely the $\sharpP$ hardness of evaluating $\alpha-$SE.

\begin{theorem}[Complexity of the $\alpha$-SE]
\label{thm:complexity_stabilizer_entropy}
Fix a constant integer $\alpha>1$, and consider rational-entry $N-$qubit PEPS on a square lattice with a description of length $n=\poly(N)$, with the promise that $Z(A)>0$.
Problem~\ref{prob:dyadic_stabilizer_entropy} with $m=\Omega(\poly(n)+\log (\alpha))$ is
$\sharpP$-hard under polynomial-time Turing reductions. Moreover this problem can be solved by an algorithm in $\FP^{\sharpP}$.

More precisely, exact evaluation of $\stp_\alpha$ and dyadic
evaluation of $M_\alpha$ up to $m=\Omega(\poly(n)+\log (\alpha))-$digits reduce to one another using one oracle
query and polynomial-time postprocessing. The reduction succeeds for every oracle answer
satisfying the prescribed error bound.
\end{theorem}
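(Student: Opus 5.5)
The proof reduces Theorem~\ref{thm:complexity_stabilizer_entropy} to the exact stabilizer-purity results already established, namely Theorems~\ref{thm:sp-hard} and~\ref{thm:sp_alpha_hard}. It does so by showing that an $m$-digit dyadic approximation of $M_\alpha=\frac{1}{1-\alpha}\log\stp_\alpha$ and the exact rational $\stp_\alpha$ determine each other in polynomial time. There are two directions.

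\emph{Upper bound.} One $\sharpP$ oracle query returns $U_\alpha$ and $Z(A)$, packed as base-$2^{h}$ digits exactly as in the proof of Theorem~\ref{thm:sp-hard}. This gives $\stp_\alpha=U_\alpha/(2^N Z(A)^{2\alpha})$ exactly as a rational number with $\poly(n)$ bits. From it we compute $m$ bits of $\log\stp_\alpha$ in polynomial time. To do this, write $\stp_\alpha=2^{e}r$ with $r\in[1,2)$ and $e$ an integer. Then evaluate $\log r$ to $m+O(\log\alpha)$ bits using standard series or bit-by-bit squaring with exact rationals truncated at $O(m)$ bits. Finally divide by $1-\alpha$; the extra $\log\alpha$ bits absorb the division. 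This places the problem in $\FP^{\sharpP}$.

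\emph{Hardness.} Given $f\in\sharpP$, we reuse the PEPS $\ket*{\phi_f}=\ket*{\omega_t}\otimes\ket*0^{\otimes M}$ with $t=\lambda+f(x)$ and $D=4$, exactly as in the proof of Theorem~\ref{thm:sp-hard}. Its purity is $p_\alpha(t)$, which is strictly increasing for $t>\sqrt2+1$ by Theorem~\ref{thm:sp_alpha_hard}, so $M_\alpha(t)=\frac{1}{1-\alpha}\log p_\alpha(t)$ is strictly decreasing there. The oracle returns some $\widetilde M$ with $|\widetilde M-M_\alpha(\lambda+f(x))|\le 2^{-m}$. We recover $f(x)$ by binary search over $k\in\{0,\dots,2^{p(n)}\}$, comparing $\widetilde M$ against the candidate values $M_\alpha(\lambda+k)$. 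Those candidates are computed to accuracy $2^{-m-2}$ by the procedure of the upper-bound direction. The output is the unique $k$ whose value lies within $2^{-m}+2^{-m-2}$ of $\widetilde M$. This works for every oracle answer satisfying the error bound, provided consecutive candidates are separated by more than $3\cdot 2^{-m}$.

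\emph{Main obstacle: the separation bound.} We need an explicit polynomial lower bound on the bit-complexity of
\[
\Delta_k:=M_\alpha(\lambda+k)-M_\alpha(\lambda+k+1)\qquad\text{for } 0\le k\le 2^{p(n)}.
\]
The plan is to bound it through $1-p_\alpha(t)$. For large $t$, expanding $\frac{2t}{1+t^2}\sim 2/t$ and $\frac{1-t^2}{1+t^2}=-1+\frac{2}{1+t^2}$ gives
\[
1-p_\alpha(t)=\Theta\!\left(\frac{2\alpha}{t^2}\right),\qquad |p_\alpha'(t)|=\Theta\!\left(\frac{\alpha}{t^3}\right),
\]
so $|M_\alpha'(t)|=\Theta(t^{-3}/\ln 2)$ up to $\alpha$-dependent constants. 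On the bounded window $t\in[\lambda,\lambda+c]$ we instead use the explicit positive bracket $(t^2-1)^{2\alpha-2}-(2t)^{2\alpha-2}$ from the derivative formula to get a constant lower bound. By the mean value theorem, $\Delta_k\ge c_\alpha\,2^{-3p(n)-O(1)}$. Hence choosing $m\ge 3p(n)+O(\log\alpha)+O(1)$, which is $\Omega(\poly(n)+\log\alpha)$, suffices, and the reduction uses a single oracle query. I expect the care to go into making the constants in the small-$t$ regime near $\lambda$ explicit, since $p_\alpha'$ is smallest relative to $t^{-3}$ there. I would handle that regime by direct evaluation of the factored derivative rather than asymptotics.
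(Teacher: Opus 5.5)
Your proposal is correct. Your upper-bound direction is the paper's, but your hardness direction takes a different route.

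\textbf{The paper's route.} It never returns to the gadget when it handles the entropy. For an \emph{arbitrary} PEPS $A$, it shows that one dyadic answer for $M_\alpha(A)$ at precision $m_\star=2q+\lceil\log\alpha\rceil+2$ determines the exact rational $\stp_\alpha(A)$. Three ingredients do this: a denominator bound $B=2^{q(n)}$ on $\stp_\alpha(A)=U_\alpha/(2^N Z(A)^{2\alpha})$; the Lipschitz constant $(\alpha-1)\ln2$ of $x\mapsto 2^{(1-\alpha)x}$ on $x\ge0$; and continued-fraction recovery of the unique fraction within $1/(4B^2)$. It then composes this with Theorem~\ref{thm:sp_alpha_hard}.

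\textbf{Your route.} You feed the entropy oracle the $\sharpP$ gadget $t=\lambda+f(x)$ directly. You replace rational reconstruction by an analytic separation bound and a binary search against numerically computed candidates. The ``main obstacle'' you flag closes cleanly from the factored derivative. Apply $x^{n}-y^{n}\ge x^{n-1}(x-y)$ with $x=t^2-1$, $y=2t$, $n=2\alpha-2$. Use $t^2-2t-1\ge\kappa t^2$ for $t\ge\lambda$, where $\kappa:=1-2/\lambda-1/\lambda^2>0$ because $\lambda>\sqrt2+1$. This gives
\begin{equation*}
p_\alpha'(t)\;\ge\;\frac{4\alpha\kappa}{(1+\lambda^{-2})^{3}}\left(\frac{\lambda^2-1}{\lambda^2+1}\right)^{2\alpha-2}t^{-3},\qquad t\ge\lambda .
\end{equation*}
Then $|M_\alpha'|\ge p_\alpha'/((\alpha-1)\ln2)$ since $p_\alpha\le1$, and the mean value theorem finishes the separation bound. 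This also shows your precision overhead is $O(\alpha)$ bits, not $O(\log\alpha)$. Because $\lambda$ sits just above the critical point $\sqrt2+1$, the first gap $M_\alpha(\lambda)-M_\alpha(\lambda+1)$ is genuinely of order $(0.71)^{\alpha}/\alpha$. This is immaterial for fixed $\alpha$.

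\textbf{What each route buys.} The paper's route is instance-independent. It needs no analysis of $p_\alpha$, and it directly yields the ``more precisely'' clause for the \emph{same} PEPS $A$. Your route avoids the denominator bound and continued fractions, but it is tied to the gadget. It gives the direction ``exact $\stp_\alpha\le$ dyadic $M_\alpha$'' only by composition. Exact $\stp_\alpha(A)$ is a polynomial-time function of one packed counting value, as in your upper bound, and that value is recovered from one entropy query on a \emph{different} (gadget) PEPS. You should state this composition explicitly. As written, your opening sentence claims that $m$ digits of $M_\alpha$ determine $\stp_\alpha$ instance by instance, which your argument does not show.
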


The theorem relates two ways of describing the same quantity: the stabilizer purity as an exact rational number, and the SE as a numerical approximation with a prescribed accuracy. Although these quantities are connected by a logarithm, their computational equivalence requires some care. The stabilizer purity has a finite exact representation through its numerator and denominator, whereas the entropy is generally irrational. The central observation is that sufficiently many digits of the entropy nevertheless determine stabilizer purity exactly, because the possible rational values obey a denominator bound.

The first direction is straightforward: once the exact stabilizer purity is available, evaluating its logarithm gives the entropy. The numerical calculation can be carried out efficiently by rescaling the logarithm's argument to a bounded interval and
evaluating a convergent series with a controlled truncation error. A few additional digits of internal precision account for the final rounding to the required dyadic grid.
Consequently, an exact-purity oracle allows us to produce a valid entropy approximation with only polynomial-time additional computation.

The converse direction explains why arbitrary-precision entropy evaluation inherits the hardness of exact purity evaluation. Writing
\begin{equation}
    M=M_\alpha(\psi), \qquad p:=\stp_\alpha(\psi)=2^{-(\alpha-1)M},
\end{equation}
the argument proceeds in three steps:
\begin{itemize}
    \item \textit{Convert an entropy estimate into a controlled purity estimate.} Given an approximation $\widetilde M$ to $M$, evaluate $2^{-(\alpha-1)\widetilde M}$ numerically. There are two sources of error: the uncertainty in the supplied entropy and the numerical error in evaluating the exponential. The mean-value theorem controls the first contribution: on the nonnegative entropy range, the magnitude of the derivative of the inverse transformation is bounded by $(\alpha-1)\ln2$. Thus, for fixed $\alpha$, an entropy error of order $2^{-m}$ produces a purity error of the
    same order. The second contribution can be made equally small by choosing sufficient numerical precision.
    \item \textit{Make the uncertainty smaller than the separation
    between candidate fractions.} Given that stabilizer purity is a rational number with a known denominator bound, then it suffices to estimate the SE up to an error that is smaller than the smallest possible separation between two neighboring fraction with the same denominator. This is what yields $m=\Omega(\poly(n)+\log  \alpha)$.
    \item \textit{Recover that fraction efficiently.} Once the approximation uniquely identifies the fraction, standard numerical methods recover its exact value in polynomial time.~\cite{BrentZimmermann2010}
    
\end{itemize}

Although exact recovery requires an exponentially small error,
only polynomially many digits of the entropy are needed.
A single entropy query at this precision therefore allows us
to recover the exact stabilizer purity with polynomial-time
postprocessing. Combined with the reverse conversion from
purity to entropy, this establishes the claimed computational
equivalence.

\begin{proof}
Let $n$ denote the length of the explicit binary encoding of the
PEPS tensors, and write
\begin{equation}
    p_\alpha(A)
    \coloneqq \stp_\alpha\bigl(\ket*{\psi(A)}\bigr),
    \qquad
    M \coloneqq M_\alpha\bigl(\ket*{\psi(A)}\bigr).
\end{equation}
We use the normalization
\begin{equation}
    p_\alpha(A)
    =
    2^{N(\alpha-1)}
    \sum_{P\in\pauli_N}\Xi_P(\psi(A))^\alpha,
    \qquad
    M=\frac{1}{1-\alpha}\log  p_\alpha(A).
\end{equation}
Consequently,
\begin{equation}
    p_\alpha(A)=2^{(1-\alpha)M}.
    \label{eq:purity_from_alpha_entropy}
\end{equation}
Since the characteristic distribution of a normalized pure state
satisfies
$0\le\Xi_P\le2^{-N}$ and $\sum_P\Xi_P=1$, we have
$p_\alpha(A)\le1$.
The identity Pauli contributes $2^{-N}$ to the expression
$p_\alpha(A)=2^{-N}\sum_P
|\langle\psi(A)|P|\psi(A)\rangle|^{2\alpha}$.
Thus
\begin{equation}
    2^{-N}\le p_\alpha(A)\le1,
    \qquad
    0\le M\le\frac{N}{\alpha-1}.
    \label{eq:dyadic-entropy-range}
\end{equation}
\medskip
\noindent

One can find a polynomial $q$ such that the reduced denominator of
$p_\alpha(A)$ is at most $B=2^{q(n)}$.
Moreover, the preceding exact-purity evaluation result gives
$p_\alpha(A)$ using a $\sharpP$ oracle and polynomial-time
postprocessing.
Given the exact rational $p_\alpha(A)$ and the requested digit precision $m$
, compute a rational number $\widehat M$ satisfying
\begin{equation}
    |\widehat M-M|\le 2^{-m-2}.
\end{equation}
Certified evaluation of the logarithm achieves this accuracy in
time polynomial in $n+m$.
The logarithm and exponential evaluations used below can be
performed with certified accuracy by standard argument reduction
and power-series methods, with suitable control of working
precision; see \cite[Secs.~4.3--4.4]{BrentZimmermann2010}.
For the logarithm, we use
\begin{equation}
  \ln y
  = 2\sum_{j=0}^{\infty}
      \frac{1}{2j+1}
      \left(\frac{y-1}{y+1}\right)^{2j+1},
\end{equation}
which follows from the logarithmic series given in
\cite[Sec.~4.4.2]{BrentZimmermann2010}.
Once the approximation satisfies the error bound established
above, the exact rational purity can be recovered in polynomial
time by continued fractions, using the previously established
denominator bound; see the rational-recovery step in
\cite[Sec.~5]{Shor1997}.
Together with a certified approximation of $\ln2$, this gives
the required polynomial-time evaluation of $\log p_\alpha(A)$.
Return
\begin{equation}
    k
    =
    \max\left\{
        0,\,
        \left\lfloor 2^m\widehat M+\frac12\right\rfloor
    \right\}.
\end{equation}
Rounding to the dyadic grid contributes an error of at most
$2^{-m-1}$.
Since $M\ge0$, replacing a negative rounded value by zero cannot
increase its error.
Hence
\begin{equation}
    \left|\frac{k}{2^m}-M\right|
    \le 2^{-m-2}+2^{-m-1}
    =3\cdot2^{-m-2}
    <2^{-m}.
\end{equation}
The output has $O(m+\log (N+1)+1)$ bits by
Eq.~\eqref{eq:dyadic-entropy-range}.
Thus an $\FP^{\sharpP}$ algorithm produces a valid answer
to the dyadic evaluation problem.
The same procedure also gives a reduction using one query to an
exact-purity oracle.

Now we recover the exact stabilizer purity from a dyadic entropy oracle.
Fix the PEPS description $A$, and abbreviate
$q=q(n)$ and $B=B(A)$.
Two distinct reduced fractions whose positive denominators are
at most $B$ obey
\begin{equation}
    \left|\frac ab-\frac{a'}{b'}\right|
    =
    \frac{|ab'-a'b|}{bb'}
    \ge\frac1{B^2}.
    \label{eq:alpha_purity_separation}
\end{equation}
We will approximate $p_\alpha(A)$ accurately enough to identify
its reduced fraction uniquely.
Query the dyadic entropy oracle at precision
\begin{equation}
    m_\star
    \coloneqq
    2q+\left\lceil\log \alpha\right\rceil+2.
\end{equation}
Because $q$ is polynomially bounded in $n$, the output number has polynomial length.
Let
\begin{equation}
    \widetilde M=\frac{k}{2^{m_\star}}
\end{equation}
be any valid oracle answer.
By the output requirements,
\begin{equation}
    \widetilde M\ge0,
    \qquad
    |\widetilde M-M|\le2^{-m_\star}.
\end{equation}
Define
\begin{equation}
    f_\alpha(x)\coloneqq 2^{(1-\alpha)x}.
\end{equation}
Its derivative is
\begin{equation}
    f_\alpha'(x)
    =
    -(\alpha-1)\ln2\,2^{-(\alpha-1)x}.
\end{equation}
If $\widetilde M=M$, the following error bound is immediate.
Otherwise, the mean-value theorem gives a point $\xi$ between
$M$ and $\widetilde M$ such that
\begin{equation}
    f_\alpha(\widetilde M)-f_\alpha(M)
    =
    f_\alpha'(\xi)(\widetilde M-M).
\end{equation}
Both endpoints are nonnegative, so $\xi\ge0$ and
$2^{-(\alpha-1)\xi}\le1$.
Taking absolute values therefore gives
\begin{align}
    \left|f_\alpha(\widetilde M)-p_\alpha(A)\right|
    &\le
    (\alpha-1)\ln2\,|\widetilde M-M|
    \notag\\
    &\le
    (\alpha-1)\ln2\,2^{-m_\star}.
    \label{eq:alpha_purity_error}
\end{align}
Thus the inverse transformation has a Lipschitz bound on the
nonnegative half-line that is independent of the PEPS size.
The number $f_\alpha(\widetilde M)$ need not be rational.
Using certified arbitrary-precision exponentiation, compute a
dyadic rational $\widetilde p$ with
\begin{equation}
    \left|\widetilde p-f_\alpha(\widetilde M)\right|
    \le2^{-m_\star}.
\end{equation}
This takes polynomial time: the input has polynomial bit length,
the requested precision is polynomial, and
$0\le\widetilde M\le N/(\alpha-1)+1$.
Argument reduction and power-series evaluation with certified
remainder bounds suffice.
Combining the two errors yields
\begin{align}
    |\widetilde p-p_\alpha(A)|
    &\le
    \left(1+(\alpha-1)\ln2\right)2^{-m_\star}
    \notag\\
    &<
    \alpha\,2^{-m_\star}
    \le\frac1{4B^2}.
    \label{eq:total_alpha_purity_error}
\end{align}
Here we used $\ln2<1$ and the definition of $m_\star$.
If necessary, clamp $\widetilde p$ to $[0,1]$; since
$p_\alpha(A)\in[0,1]$, this does not increase the error.
Equation~\eqref{eq:alpha_purity_separation} shows that at most one
reduced fraction with denominator at most $B$ can satisfy
\begin{equation}
    \left|\widetilde p-\frac ab\right|<\frac1{4B^2}.
\end{equation}
Such a fraction exists, namely $p_\alpha(A)$.
It can be found in polynomial time using continued fractions.
Indeed, the continued-fraction criterion states that a reduced
fraction satisfying
\begin{equation}
    \left|\widetilde p-\frac ab\right|<\frac1{2b^2}
\end{equation}
is a convergent of $\widetilde p$.
The true fraction satisfies this inequality because $b\le B$ and
Eq.~\eqref{eq:total_alpha_purity_error} holds.
Compute the finite continued-fraction expansion of the rational
number $\widetilde p$, enumerate its convergents, and retain the
unique one with denominator at most $B$ and error less than
$1/(4B^2)$.
All comparisons are exact rational comparisons, and the Euclidean
algorithm~\cite{Burton_2010} and convergent computation take polynomial time in the
relevant bit lengths.

We have therefore recovered $p_\alpha(A)$ exactly using one
dyadic-entropy query and polynomial-time postprocessing.
The argument used only the defining guarantees on the oracle
answer, so it succeeds for every valid output of the approximation
relation.
Composing this reduction with the hardness reduction for exact
$\stp_\alpha$ evaluation in
Theorem~\ref{thm:sp_alpha_hard} proves $\sharpP$-hardness under
polynomial-time Turing reductions.
\end{proof}

    Theorem~\ref{thm:complexity_stabilizer_entropy} does not assert
    that $M_\alpha$ is itself a $\sharpP$ function, since
    $M_\alpha$ is generally irrational, whereas functions in
    $\sharpP$ are nonnegative and integer-valued. The hardness result
    concerns arbitrary-precision evaluation and requires polynomially
    many accurate binary digits, corresponding to exponentially small
    additive error. It does not, by itself, establish hardness for
    constant or inverse-polynomial additive accuracy.

\section{Exact PEPS stabilizer membership}
\label{sec:exact-membership}

We now move to the problem of deciding whether a certain PEPS is a stabilizer state given its tensor description. This problem has been approached in a more general setting of density matrices in~\cite{unbearable}, and is known to be computationally efficient in 1D TN states~\cite{Haug_Piroli_2023}, so it is interesting to explore the complexity of this problem in a more structurally rich, yet still physically relevant scenario such as 2D PEPS.

\begin{problem}[PEPS exact stabilizer membership --  $\stabpeps$]\label{prob:exact-membership}
Consider a nonzero PEPS over the $N-$qubit square lattice with bond dimension $D$, specified by tensors $A=\{A_i^{s_i}\}$ with rational entries.
	Let $\ket*{\widetilde{\psi}(A)}$ be the unnormalized PEPS generated by $A$.
	The YES instances of the problem are those for which the normalized state
	\begin{equation}
		\ket*{\psi(A)}
		:=
		\frac{\ket*{\widetilde\psi(A)}}{\sqrt{\braket*{\widetilde\psi(A)}}}
	\end{equation}
	is a pure stabilizer state, with every other instance being a NO instance.
	Equivalently,
	\begin{equation}
		A\in \stabpeps
		\iff
		\ket*{\psi(A)}\in \STAB_N .
	\end{equation}
\end{problem}

\begin{theorem}[Exact PEPS stabilizer membership is $\CeqP$-complete]
\label{thm:exact-membership}
Problem~\ref{prob:exact-membership} with bond dimension $D= 33$ is $\CeqP$-complete under polynomial-time many-one reductions.
\end{theorem}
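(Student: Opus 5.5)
The proof splits into membership of the problem in $\CeqP$ and $\CeqP$-hardness.

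\emph{Membership.} By Lemma~\ref{lem:sp-faithfulness}, $\ket*{\psi(A)}\in\STAB_N$ if and only if $\stp_2(\psi(A))=1$. By Eq.~\eqref{spratio} this is equivalent to
\begin{equation}
g(A):=2^N\braket*{\widetilde\Phi(A)}-\braket*{\widetilde\psi(A)}^4=0 .
\end{equation}
After clearing the (efficiently computable) common denominators of the rational tensor entries, both terms are integer-valued rational tensor contractions, and Theorem~\ref{thm:peps-contraction-hardness} places each of them in $\GapP$. The fourth power of the norm is also in $\GapP$, either by closure of $\GapP$ under products or by writing it directly as the norm of the 4-replica PEPS. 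Since $\GapP$ is closed under subtraction and multiplication by the known factor $2^N$, $g\in\GapP$, and the problem is the zero test of $g$. This puts it in $\CeqP$. Nonzero-ness of the PEPS is part of the input promise, so the ratio is well defined.

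\emph{Hardness.} Take a $\CeqP$ language $L$ with $x\in L\iff g(x)=0$, $g=f_+-f_-\in\GapP$. Using Lemma~\ref{lem:boolean-to-tn}, build a signed rational scalar TN $T_{g,x}$ with $\cont(T_{g,x})=g(x)$ and embed it in the square lattice. Then apply Lemma~\ref{lem:branch-amplitude} with two branches: $T_{g,x}$ routed to the marked value $\ket{0}$, and a scalar-one network routed to $\ket{1}$. This gives the PEPS $(g(x)\ket0+\ket1)\ket0^{\otimes M}$, which is nonzero and equals $\ket*{\omega_t}\otimes\ket0^{\otimes M}$ with $t=g(x)$. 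A single-qubit state with real amplitudes of this form is a stabilizer state exactly when $t\in\{0,\pm1\}$, i.e.\ $p_2(t)=1$ in Lemma~\ref{lem:sp_t}. Since $g$ is integer-valued, a value $g=\pm1$ would be a false positive. I would remove this by using $t=2g(x)$ (multiply one tensor by $2$), or $g^2$ scaled by $2$. Then $t\in\{0\}\cup\{|t|\ge2\}$, and $p_2(t)<1$ for $|t|\ge2$, because only $t\in\{0,\pm1\}$ solve $t^8+14t^4+1=(1+t^2)^4$. Hence $x\in L\iff$ the PEPS is a stabilizer state, which is a polynomial-time many-one reduction.

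\emph{Main obstacle.} The main obstacle is the bond dimension count $D=33$ rather than the logic. The direct sum $f_+\oplus(-f_-)$ of two Kourtis-type networks after square-lattice embedding, plus the scalar branch, must fit in $16+16+1$ virtual levels. I would account for this as follows. Routing the circuit TN onto the square lattice with crossings and wire-bundling costs bond dimension $16$ per branch. The two signed branches give $32$, and the trivial branch adds $1$, so Lemma~\ref{lem:branch-amplitude} yields $D=\sum_h D^{(h)}=33$. I expect to have to check carefully that the embedding gadgets for a general bounded-degree circuit graph indeed stay within $16$.
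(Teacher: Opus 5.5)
Your proof is correct. The containment half is the same as the paper's: the zero test of $2^N S(A)-Z(A)^4\in\GapP$. The hardness half uses a different encoding.

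The paper does not place $g(x)$ linearly in the marked amplitude. It uses the saturating gadget of \cref{lem:saturated-gadget}: it routes $c\,T_{g,x}\otimes T_{g,x}$ to $\ket{0}$, and $T_1$ together with $T_{g,x}\otimes T_{g,x}$ to $\ket{1}$. The amplitude ratio is then $F(g)=cg^2/(1+g^2)$, which is $0$ for $g=0$ and lies in $[c/2,c)\subset(0,1)$ otherwise. You instead take $t=2g(x)$, which lands in $\{0\}\cup\{|t|\ge 2\}$. Your check that $(1+t^2)^4-(t^8+14t^4+1)=4t^2(t^2-1)^2$ vanishes only at $t\in\{0,\pm1\}$ correctly rules out false positives.

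Your route is more economical for the exact problem. It needs no tensor squaring and no rational function, so the bond dimension is $D(T_{g,x})+1$ rather than $2D(T_{g,x})^2+1$. The paper's route buys uniformity. The bounded window keeps $\ket{\omega_t}$ a constant distance from $\STAB_1$ with SE bounded below by a constant, so the same PEPS also drives the constant-gap membership and constant-accuracy SE theorems. Your linear encoding cannot serve there, since $\ket{\omega_t}\to\ket{0}$ as $|t|\to\infty$, so both the stabilizer distance and the SE of NO instances tend to zero.

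Finally, your ``main obstacle'' is illusory, and your count is not the paper's. In the paper, $33=2\cdot 4^2+1$ comes from squaring a signed network of bond dimension $4$ (two branches of dimension $2$). It does not come from $16$ per signed branch, and your unsupported ``$16$ per branch'' claim is not needed. Under the same assumption, your construction has bond dimension $5$. Zero-padding the virtual spaces leaves the state unchanged, so your construction already produces valid instances at $D=33$.
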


The reduction constructed in the above Theorem maps an arbitrary zero test $g(x)=0$, with $g\in\GapP$, to a rational-entry, constant-bond-dimension PEPS whose normalized state is a stabilizer state if and only if the zero test accepts.  Given the collapse of complexity classes established in \cref{subsec:ceqp-significance}, it is highly unlikely that either a classical or quantum computer can thus decide the exact stabilizer membership of PEPS with polynomial resources.

As in~\cref{thm:sp-hard}, the proof is split into two parts: containment and hardness, which together prove completeness. The proof of containment relies on the previous result of $\sharpP-$hardness of $\stp_2$ and generic PEPS contractions proven in~\cref{thm:sp-hard} and~\cite{Schuch_Wolf_Verstraete_Cirac_2007}:
\begin{itemize}
    \item By using the stabilizer purity criterion, namely $\stp_2(\psi)=1\iff \psi\in\STAB$ shown in~\cref{lem:sp-faithfulness}~\cite{Leone_Oliviero_Hamma_2022}, we write the stabilizer membership test for PEPS as the difference between the stabilizer purity and the fourth power of the PEPS norm. \item Since $\sharpP$ is closed under multiplication~\cite{valiant_permanent}, we recognize that we have written the stabilizer membership test as the difference of two $\sharpP$ functions which is zero if and only if the considered PEPS is a stabilizer state: by definition, this is a zero-test for a $\GapP$ function, hence a $\CeqP$ problem.
\end{itemize}

For the proof of hardness, we follow a similar path to the $\sharpP$ proof of hardness of computing stabilizer purity. The reduction chain is summarized in \cref{fig:reduction-chain}, and the steps are explained below:
\begin{itemize}
    \item We encode the value of a $\GapP$ function in a scalar TN contraction using~\cref{lem:boolean-to-tn} again;
    \item We then encode the value of the contraction in the single-qubit amplitude of a specifically built PEPS, while leaving the other qubits in the $\ket 0$ state. This time, though, we do not encode it directly but rather in the argument of a rational function $F$
    \begin{equation}
        F(k):=\frac{c k^2}{1+k^2},
        \qquad 0<c<\sqrt2-1,
        \label{eq:F-def}
\end{equation}
where $c$ is a fixed rational constant. Its key properties are
\begin{equation}
        F(0)=0,
        \qquad
        k\in\mathbb Z\setminus\{0\}\Rightarrow \frac c2\le F(k)<c.
        \label{eq:F-properties}
\end{equation}
\item The single qubit state that encodes the value of $g(x)$ then reads 
\begin{equation}
\label{phig}
    \ket*{\phi_g}=
        \frac{ F(g(x))\ket*0
        +
        \ket*1}{\sqrt{1+F(g(x))^2}}.
\end{equation}
The complete construction of this encoding is shown in~\cref{lem:saturated-gadget}. This rational function construction is particularly useful for this problem, as $F$ has the ability of distinguishing instances of $g(x)=0$ from $g(x)\neq 0$ as seen in~\cref{eq:F-properties}, and excluding the other values out of our scope. In fact, a state of the form of~\cref{phig} can be a stabilizer state if and only if the $\ket0$ amplitude is either $0$ or $\pm 1$. 
\item Since, as mentioned before $-1<F(x)<1$, $\ket{\phi_g}$ being a stabilizer unequivocally implies that $g(x)=0$, hence proving the reduction from $\CeqP$ problems to instances of exact PEPS stabilizer membership.
\end{itemize}

\begin{figure}[t!]
\centering
\begin{tikzpicture}[node distance=1.2cm and 1cm, every node/.style={font=\small}]
\tikzset{
box/.style={rectangle,draw,rounded corners,align=center,minimum width=2.4cm,minimum height=0.8cm},
arr/.style={-{Latex[length=2mm]},thick}
}
\node[box] (gap) {$g(x)\in\GapP$};
\node[box,right=of gap] (tn) {TN \\ contraction $T_{F(g)}(x)$};
\node[box,right=of tn] (peps) {single qubit \\amplitude of a PEPS};
\node[box,right=of peps] (stab) {stabilizer iff\\$g(x)=0$};
\draw[arr] (gap)--(tn);
\draw[arr] (tn)--(peps);
\draw[arr] (peps)--(stab);
\end{tikzpicture}
\caption{Reduction proving $\CeqP$-hardness of exact PEPS stabilizer membership.  The zero test for a function $g\in \GapP$ is encoded in the amplitude of a single marked qubit of a specifically built PEPS, through the argument of a rational function. The resulting state is a stabilizer state if and only if $g(x)=0$.}
\label{fig:reduction-chain}
\end{figure}

We now show how to encode the value of $F(g(x))$ in the amplitude of a PEPS.

\begin{lemma}
\label{lem:saturated-gadget}

Let $T$ be a polynomial-size rational scalar TN with bond dimension $D$ on a connected graph and let $M+1$ be the number of sites needed for representing $T$ on a square lattice. For every fixed rational $0<c<\sqrt2-1$, one can construct in polynomial time a rational-entry qubit PEPS with bond dimension $D'=2D^2+1$ whose normalized state is
\begin{equation}
        \ket*{\Phi_T}
        =
        \frac{F(\cont(T))\ket*0+\ket*1}{\sqrt{1+F(\cont(T))^2}}
        \otimes\ket*0^{\otimes M},
\end{equation}
with the function $F$ defined in \eqref{eq:F-def}.
\end{lemma}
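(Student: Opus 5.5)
The plan is to clear the denominator of $F$ and realize the resulting polynomial amplitudes with the branch-amplitude construction of \cref{lem:branch-amplitude}. Write $k:=\cont(T)$. The target normalized state is proportional to
\begin{equation}
  \bigl(c k^2\ket*0 + (1+k^2)\ket*1\bigr)\otimes\ket*0^{\otimes M}.
\end{equation}
Multiplying numerator and denominator of $F$ by $1+k^2$ does not change the normalized state. So it suffices to build an unnormalized PEPS whose marked-qubit amplitudes are the polynomials $ck^2$ and $1+k^2$, with all other qubits in $\ket*0$. The point of this rewriting is that every amplitude is now a polynomial in the contraction value $k$. Polynomials of contractions are themselves contractions.

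The key steps are as follows.

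\textbf{Step 1 (squaring).} Form the scalar TN $T\otimes T$ by taking the sitewise tensor product $B_v\otimes B_v$ of the local tensors on the same square-lattice embedding. Its bond dimension is $D^2$ and its contraction is $\cont(T)^2=k^2$. Multiplying one tensor by the rational $c$ gives a network $T_c$ with contraction $ck^2$, still of bond dimension $D^2$.

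\textbf{Step 2 (the constant term).} Take a trivial scalar network of bond dimension $1$ on the same lattice, with all tensors equal to $1$, so its contraction is $1$.

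\textbf{Step 3 (assembly).} Apply \cref{lem:branch-amplitude} with three branches:
\begin{itemize}
  \item $T_c$, routed to marked value $0$;
  \item a second copy of $T\otimes T$, routed to marked value $1$;
  \item the trivial network, also routed to marked value $1$.
\end{itemize}
The total bond dimension is $D^2+D^2+1=2D^2+1=D'$, matching the claim. The unnormalized state is $\bigl(ck^2\ket*0+(1+k^2)\ket*1\bigr)\ket*0^{\otimes M}$. Its norm is at least $1+k^2\ge1>0$, so the state is never zero and normalization is well defined. After normalization it equals $\ket*{\Phi_T}$.

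Polynomial-time constructibility and rationality are inherited directly. The tensors are tensor products, direct sums and rational rescalings of the given rational tensors. The square-lattice embedding of $T$ (identity routing and swap tensors) is reused for both copies, with swap tensors tensored with themselves.

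I expect the main obstacle to be geometric bookkeeping rather than algebra. All three branches must live on the same square lattice with the same marked site $v_\star$ and $M+1$ sites. Concretely, the bond-dimension-1 branch must be padded to that lattice, and the doubled network must use the identical embedding so that the direct sum in \cref{lem:branch-amplitude} is sitewise consistent. One also has to check that lattice edges of bond dimension smaller than $D$ can be padded by zero blocks without changing the contraction, so that $D'=2D^2+1$ is a uniform bound.
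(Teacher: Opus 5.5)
Your proposal is correct and follows essentially the same route as the paper. In both, $T\otimes T$ rescaled by $c$ is routed to $\ket*{0}$, while an unscaled copy of $T\otimes T$ and a bond-dimension-one unit network are routed to $\ket*{1}$ via \cref{lem:branch-amplitude}; this gives $ck^2\ket*0+(1+k^2)\ket*1$ with bond dimension $2D^2+1$, and your remarks on the nonvanishing norm and the shared lattice embedding are sound points the paper leaves implicit.
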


\begin{proof}    

Let $T':=T\otimes T$, which has contraction $\cont(T)^2$. Construct also a scalar network $T_1$ with contraction value 1 on the same graph.
Multiply one tensor of $T'$ by $c$ and denote the resulting TN as $T_{\mathrm{num}}$, so that $\cont(T_{\mathrm{num}})=c\,\cont(T)^2$.
Then, apply \cref{lem:branch-amplitude} to three branches, routing $T_{\mathrm{num}}$ to $\ket0$ and routing $T_1$ and $T'$ to $\ket1$.  The resulting unnormalized PEPS is
\begin{equation}
\begin{aligned}
        \ket*{\widetilde\Phi_T}
        &=
        c \,\cont(T)^2\ket*0\ket*0^{\otimes M}
        +
        \bigl(1+\cont(T)^2\bigr)\ket*1\ket*0^{\otimes M} \\
        &=
        \bigl(1+\cont(T)^2\bigr)
        \left(
        \frac{c \,\cont(T)^2}{1+\cont(T)^2}\ket*0+
        \ket*1
        \right)\ket*0^{\otimes M}.
\end{aligned}
\end{equation}
Normalizing gives the claimed state.  The bond dimension of the PEPS comes from the sum of the bond dimensions of the TN involved, namely, the bond dimension of $T_{\rm num}$ is $D^2$, as is that of $T'$, and $T_1$ has bond dimension one.
\end{proof}

We are now ready to prove \cref{thm:exact-membership}.

\begin{proof}[Proof of Theorem \ref{thm:exact-membership}]
First we prove containment.  By \cref{lem:sp-faithfulness}, the normalized state is stabilizer if and only if $\stp_2(\psi(A))=1$.  Using \eqref{eq:sp-unnormalized}, this is equivalent to
\begin{equation}
        2^N S(A)-Z(A)^4=0.
        \label{eq:membership-gap-function}
\end{equation}
After clearing denominators, the left-hand side is a GapP function of the PEPS input: $S(A)$ and $Z(A)$ are exact tensor-network contractions, GapP is closed under products and subtraction, and multiplication by the known integer $2^N$ preserves GapP.  Hence the language is in $\CeqP$.

We now prove hardness.  Let $L\in\CeqP$.  Choose $g\in\GapP$ such that
\begin{equation}
        x\in L
        \quad\iff\quad
        g(x)=0.
        \label{eq:ceqp-hardness-start}
\end{equation}
By \cref{lem:saturated-gadget}, construct a PEPS with physical dimension $d=2$ and bond dimension $D=33$ such that its unnormalized state reads
\begin{equation}
        \ket*{\widetilde\phi_g}
        =
         F(g(x))\ket*0\ket*0^{\otimes M}
        +
        \ket*1\ket*0^{\otimes M}.
\end{equation}
The normalized state is then
\begin{equation}
        \ket*{\phi_g}
        =
        \frac{F(g(x))\ket*0+\ket*1}{\sqrt{1+F(g(x))^2}}
        \otimes\ket*0^{\otimes M},
        \qquad
\end{equation}
 The real one-qubit state $(t\ket*0+\ket*1)/\sqrt{1+t^2}$ is a stabilizer state only at $t=0$ or $t=\pm1$ among real $t$ of this form.  The cases $t=\pm1$ are excluded since $F(g(x))=0\iff g(x)=0$ and $-1<c/{2}\leq F(g(x)) < c <1$ for $g(x)\neq 0$.  By tensoring with $\ket*0^{\otimes M}$, we therefore have
\begin{equation}
        \ket*{\phi_g}\in\STAB
        \quad\iff\quad
        F(g(x))=0
        \quad\iff\quad
        g(x)=0
        \quad\iff\quad
        x\in L.
\end{equation}
This gives a polynomial-time many-one reduction from an arbitrary $\CeqP$ language to exact PEPS stabilizer membership.  Together with containment, the problem is $\CeqP$-complete.
\end{proof}

\ges{\section{Hardness of deciding local-unitary equivalence to stabilizer states}}
\label{subsec:nonlocal-magic}
\ges{For a normalized pure state $\ket\psi$ on a qubit bipartition $A|B$ with $N_A+N_B=N$, we say that $\ket\psi$ is \textit{local-equivalent to the stabilizer set} if there exists two unitary operators $U_A$ and $U_B$ acting on $\mathcal{H}_A$ and $\mathcal{H}_B$ respectively, such that $(U_A\otimes U_B)\ket\psi\in\STAB_N$. We denote the set of local-equivalent stabilizers as $\lstab$.}

\ges{We invoke the following characterization of the
 set of local-equivalent stabilizer set, established in
Ref.~\cite[Lemma~1, Eq.~(44)]{cao2025gravitational}
and Ref.~\cite[Theorem~1 and the discussion following
Eq.~(3)]{sierant2026exactquantificationnonlocalmagic}.}

\ges{\begin{lemma}[Local-equivalent stabilizer set]
\label{lem:nlm-zero}
For every normalized pure state $\ket\psi$ on a qubit bipartition $A|B$, $\ket\psi\in\lstab$ if and only if the nonzero
Schmidt coefficients of $\ket{\psi}$ are all equal and their
number is $2^k$ for some integer
$0\le k\le\min\{|A|,|B|\}$.
\end{lemma}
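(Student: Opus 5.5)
We prove the two directions separately, using the Schmidt decomposition $\ket\psi=\sum_{i=1}^{r}\sqrt{\lambda_i}\ket{a_i}\ket{b_i}$ together with the standard structure theory of stabilizer states under bipartitions.

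For the direction ($\Leftarrow$), suppose all nonzero Schmidt coefficients equal $2^{-k/2}$ and $r=2^k$, where $k\le\min\{|A|,|B|\}$. Choose $U_A$ mapping the orthonormal set $\{\ket{a_i}\}$ onto the computational states $\ket{i}_{A_1\dots A_k}\ket{0}^{\otimes(|A|-k)}$, labelling $i\in\{0,1\}^k$, and choose $U_B$ mapping $\ket{b_i}$ to $\ket{i}_{B_1\dots B_k}\ket0^{\otimes(|B|-k)}$. Both maps extend to unitaries because they send orthonormal sets to orthonormal sets. The resulting state is
\begin{equation}
2^{-k/2}\sum_{i}\ket{i}\ket{i}\otimes\ket{0\cdots0},
\end{equation}
which is $k$ Bell pairs across the cut tensored with $\ket0$'s. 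This is manifestly a stabilizer state.

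For the direction ($\Rightarrow$), note that local unitaries $U_A\otimes U_B$ leave the Schmidt coefficients unchanged. It therefore suffices to show that every pure stabilizer state $\ket s$ has a flat Schmidt spectrum of size $2^k$. Let $G$ be the stabilizer group of $s$. The reduced state is
\begin{equation}
\rho_A=2^{-|A|}\sum_{P\in G_A}P,
\end{equation}
where $G_A$ is the subgroup of $G$ supported only on $A$; all other terms are traceless on $A$ after the partial trace. The operator $\rho_A$ is proportional to the projector onto the joint $+1$ eigenspace of $G_A$, an abelian group excluding $-I$ with $2^{g}$ elements. That eigenspace has dimension $2^{|A|-g}$, so $\rho_A=2^{-(|A|-g)}\Pi$. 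The nonzero eigenvalues are therefore all equal, and there are $2^{k}$ of them with $k=|A|-g\le|A|$. Applying the same argument to $B$, or using that the Schmidt rank is common to both sides, gives $k\le|B|$.

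The main obstacle is the reduced-state formula: we must justify that tracing out $B$ kills every element of $G$ not acting as identity on $B$, which follows from $\Tr P_B=0$ for $P_B\neq I$. Everything else is bookkeeping. This is essentially the known result of Fattal et al.\ and the cited references, so we may also simply invoke them.
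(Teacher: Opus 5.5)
Your proof is correct. The paper gives no argument of its own here: it imports the lemma from Cao et al.\ and Sierant et al., with Fattal et al.\ cited in the main text, and your two steps (rotating a flat spectrum of size $2^k$ into $k$ Bell pairs, and using $\rho_A=2^{-|A|}\sum_{g\in G_A}g$ to show every stabilizer state has a flat spectrum of size $2^{|A|-g}$) are the standard argument behind those references. Note also that the reduction in Theorem~\ref{thm:nlm-hardness} only needs your $(\Rightarrow)$ direction, because in the YES case the output $\ket{11}\otimes\ket{0}^{\otimes(N-2)}$ is already a stabilizer state.
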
}


\ges{We formulate the problem of deciding whether a PEPS description represents a state belonging to $\lstab$.
\begin{problem}[Local-stabilizer equivalence decision]
\label{prob:zero-nlm}
Given a rational-entry tensor
description $T$ of a nonzero $N$-qubit PEPS on a square lattice
and a specified bipartition $A|B$ of its physical qubits,
output YES if corresponding normalized state
$\ket{\psi(T)}$ belongs to $\lstab$, and NO otherwise.
\end{problem}}
The next theorem proves the computational hardness of local-stabilizer equivalence decision.
\ges{\begin{theorem}[Hardness of $\lstab$ decision]
\label{thm:nlm-hardness}
Problem~\ref{prob:zero-nlm} is $\CeqP$-hard under polynomial-time many-one
reductions, with bond dimension 
$D=33$, and any bipartition $A|B$ of the lattice.
\end{theorem}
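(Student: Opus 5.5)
We reduce from an arbitrary $L\in\CeqP$, with $g\in\GapP$ such that $x\in L\iff g(x)=0$, reusing the gadget of \cref{lem:saturated-gadget}. The only change is that the marked amplitude is now spread over two qubits, one on each side of the cut. Fix the bipartition $A|B$ and choose adjacent sites $v_A\in A$ and $v_B\in B$; such a pair exists on the connected square lattice for any nontrivial bipartition. Build $T$ with $\cont(T)=g(x)$ via \cref{lem:boolean-to-tn}, and embed it in the square lattice with padding. Then form the same three branches as before: $T_{\mathrm{num}}$ with contraction $c\,g(x)^2$, and $T_1$ and $T'=T\otimes T$.

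We generalize \cref{lem:branch-amplitude} so that branch $h$ is routed to a chosen basis configuration on the pair $(v_A,v_B)$ rather than on a single site. At both $v_A$ and $v_B$, place $B_v^{(h)}$ in the physical component prescribed by the branch and zero elsewhere; all other sites carry $\bigoplus_h B_v^{(h)}$ in $\ket0$. The direct-sum block structure forces every virtual index into a single branch globally, so cross terms vanish. This holds because each branch network is connected, so the block label propagates through the whole lattice. Routing $T_{\mathrm{num}}$ to $\ket{00}$ and $T_1,T'$ to $\ket{11}$ gives the unnormalized state
\begin{equation}
\bigl[F(g(x))\ket{00}+\ket{11}\bigr]_{v_Av_B}\otimes\ket0^{\otimes(N-2)},
\end{equation}
up to the positive factor $1+g(x)^2$. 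The bond dimension is $2D^2+1=33$, exactly as in \cref{lem:saturated-gadget}.

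Since all other qubits are in a product state, the Schmidt decomposition across $A|B$ is
\begin{equation}
\frac{F\ket{0}_A\ket{0}_B+\ket1_A\ket1_B}{\sqrt{1+F^2}},
\end{equation}
whose Schmidt coefficients are $F/\sqrt{1+F^2}$ and $1/\sqrt{1+F^2}$. Apply \cref{lem:nlm-zero}. If $g(x)=0$, then $F=0$ and the Schmidt rank is $1=2^0$ with a flat spectrum, so the answer is YES. If $g(x)\neq0$, then $c/2\le F<c<1$ by \eqref{eq:F-properties}. The rank is then $2$, but the two coefficients are unequal, so the spectrum is not flat and the answer is NO. Hence $x\in L\iff\ket{\psi}\in\lstab$, which is a many-one reduction computable in polynomial time.

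The main obstacle is the two-site routing step. We must make sure that directing a branch's physical output at two separate sites introduces no spurious mixed configurations, such as $\ket{01}$. This is exactly the global branch-consistency of the direct sum described above. We must also handle degenerate bipartitions: if $A$ or $B$ is empty, every state trivially belongs to $\lstab$, so we restrict to nontrivial cuts, which is implicit in the statement.
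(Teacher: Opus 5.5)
Your proposal is correct and follows essentially the same route as the paper. The paper likewise marks one site on each side of the cut and runs the three-branch direct-sum construction of \cref{lem:saturated-gadget} (with $c=1/4$ and bond dimension $16+1+16=33$) to obtain a state proportional to $F(g(x))\ket{00}+\ket{11}$ tensored with $\ket{0}^{\otimes(N-2)}$. It then concludes via \cref{lem:nlm-zero}, since $F(g(x))$ is either $0$ or lies in $[c/2,c)$ with $c<1$. Your extra remarks make explicit what the paper leaves implicit: the global block-consistency argument ruling out mixed configurations, the trivial-cut caveat, and the adjacency of the marked sites, which is harmless but not needed.
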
}
\ges{\begin{proof}
Let $L\in\CeqP$, and choose $g\in\GapP$ such that
$x\in L$ if and only if $g(x)=0$.
By Lemma~\ref{lem:boolean-to-tn}, construct in polynomial time a rational-entry
scalar network $T_{g,x}$ on a connected square lattice with
\begin{equation}
    \cont(T_{g,x})=g(x),
    \qquad D(T_{g,x})=4.
\end{equation}
We modify the construction of Lemma~\ref{lem:saturated-gadget},
with $c=1/4$, by recording its branch label at two physical sites.
Choose sites $a$ and $b$ sites on opposite sides of the 
bipartition $A|B$.
As in Lemma~\ref{lem:saturated-gadget}, form the sitewise tensor
product $S_x:=T_{g,x}\otimes T_{g,x}$, whose contraction is
$g(x)^2$ and whose bond dimension is $16$.
Take three scalar branches on the same lattice: a copy of
$S_x$ with one tensor multiplied by $c$, a bond-dimension-one network of contraction one, and an unscaled copy of $S_x$. Apply the direct-sum construction of Lemma~\ref{lem:branch-amplitude}, assigning the same physical branch value at both marked sites $a$ and $b$: the resulting unnormalized PEPS is
\begin{equation}
    \ket*{\widetilde\Psi_x}
    =
    \left[
        c\,g(x)^2\ket{00}_{ab}
        +(1+g(x)^2)\ket{11}_{ab}
    \right]\otimes\ket{0}^{\otimes(N-2)}.
    \label{eq:nlm-unnormalized-gadget}
\end{equation}
Appending local pure ancillas leaves the nonzero Schmidt coefficients unchanged. Marking the second physical site does not enlarge the virtual
spaces, so the bond dimension is $16+1+16=33$, while the physical dimension is two.
Since $1+g(x)^2>0$, every output PEPS is nonzero.
Its normalized state is
\begin{equation}
    \ket{\Psi_x}
    =
    \frac{t\ket{00}_{ab}+\ket{11}_{ab}}{\sqrt{1+t^2}}
    \otimes\ket{0}^{\otimes(N-2)},
    \qquad
    t=F(g(x)),
    \label{eq:nlm-normalized-gadget}
\end{equation}
where $F$ is defined in Eq.~\eqref{eq:F-def}.
Here $t$ is the rational amplitude ratio; the Schmidt
coefficients are $t/\sqrt{1+t^2}$ and $1/\sqrt{1+t^2}$,
with zero coefficients omitted.
For a normalized state $u\ket{00}+v\ket{11}$,
Lemma~\ref{lem:nlm-zero} gives local-stabilizer equivalence precisely
when $uv=0$ or $|u|=|v|$.
In Eq.~\eqref{eq:nlm-normalized-gadget}, the second amplitude
never vanishes and $t\ge0$, so these cases reduce to $t=0$
or $t=1$. By Eq.~\eqref{eq:F-properties},
\begin{equation}
    g(x)=0\Longrightarrow t=0,
    \qquad
    g(x)\ne0\Longrightarrow \frac18\le t<\frac14.
\end{equation}
The equal-Schmidt-coefficient case is therefore excluded, and
\begin{equation}
    x\in L
    \quad\Longleftrightarrow\quad
    g(x)=0
    \quad\Longleftrightarrow\quad \ket{\Psi_x}\in\lstab
\end{equation}
This proves $\CeqP$-hardness of
Problem~\ref{prob:zero-nlm} under polynomial-time
many-one reductions.
\end{proof}}

\section{Approximate PEPS stabilizer membership}
\label{sec:approx-membership}

We next consider a promise version of the stabilizer detection problem.

\begin{problem}[Approximate PEPS stabilizer membership]
\label{prob:approx-membership}
For efficiently computable thresholds $0\le a<b\le\sqrt2$, the promise problem $\ApproxSTABPEPS_{a,b}$ has as input a rational-entry tensor description $A$ of a nonzero $N$-qubit PEPS, with promise
\begin{equation}
\begin{aligned}
\yes: &\quad \dist(\ket{\psi(A)},\STAB_N)\le a,\\
\no:  &\quad \dist(\ket{\psi(A)},\STAB_N)\ge b.
\end{aligned}
\end{equation}
with $\ket{\psi(A)}$ the normalized state described by the PEPS input.
No condition is imposed on inputs whose distance lies in $(a,b)$.
\end{problem}

To quantify distance of a state from the stabilizer set, we use the Euclidean distance to pure stabilizer states,
\begin{equation}
        \dist(\ket*\psi,\STAB_N)
        :=
        \min_{\ket*\sigma\in\STAB_N}
        \min_{\theta\in\mathbb R}
        \norm{\ket*\psi-e^{i\theta}\ket*\sigma}.
\end{equation}
For normalized $\ket*\psi$ this is equivalent to the maximum stabilizer overlap:
\begin{equation}
        \dist(\ket*\psi,\STAB_N)^2
        =
        2-2\max_{\ket*\sigma\in\STAB_N}\abs{\braket*{\sigma}{\psi}}.
        \label{eq:distance-overlap}
\end{equation}

We now state that $\ApproxSTABPEPS_{a,b}$ is a $\CeqP-$hard problem even for a constant promise gap, and $\CeqP-$complete when $a=0$. 

\begin{theorem}[{Constant-threshold $\ApproxSTABPEPS_{a,b}$ is $\CeqP-$hard}]
\label{thm:constant-threshold}
Fix a rational constant $0<c<\sqrt2-1$ and define
\begin{equation}
\delta_c:=\left[2-2({1+(c/2)^2)^{-1/2}}\right]^{1/2}>0.
        \label{eq:delta-c-def}
\end{equation}
For every efficiently computable threshold pair $0\le a<b\le\delta_c$, the promise problem $\ApproxSTABPEPS_{a,b}$ is $\CeqP$-hard under polynomial-time promise many-one reductions.  In particular, choosing $c=1/4$ gives $d_1(1/8)>1/16$, and therefore $\ApproxSTABPEPS_{0,1/16}$ is $\CeqP$-complete as a promise problem.
\end{theorem}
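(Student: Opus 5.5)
The plan is to reuse the gadget of \cref{lem:saturated-gadget} and show that it produces a constant distance gap. Given $L\in\CeqP$ with $x\in L\iff g(x)=0$ for some $g\in\GapP$, \cref{lem:boolean-to-tn} and \cref{lem:saturated-gadget} give in polynomial time a bond-dimension-33 PEPS with normalized state $\ket{\phi_g}=\ket{\omega_t}\otimes\ket0^{\otimes M}$, where $\ket{\omega_t}=(t\ket0+\ket1)/\sqrt{1+t^2}$ and $t=F(g(x))$. If $g(x)=0$ then $t=0$, so the state is $\ket1\ket0^{\otimes M}\in\STAB_N$. Its distance to the stabilizer set is therefore $0\le a$, a YES instance. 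It remains to show that $g(x)\neq0$, where $c/2\le t<c$ by \eqref{eq:F-properties}, forces $\dist\ge\delta_c\ge b$.

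The key step is to compute the maximal stabilizer overlap of $\ket{\omega_t}\otimes\ket0^{\otimes M}$ and use \eqref{eq:distance-overlap}. For a product state $\ket\omega\otimes\ket{0}^{\otimes M}$ one cannot assume the best stabilizer is a product. However, the overlap with any stabilizer $\ket\sigma$ is $\abs{\braket{\sigma}{\omega\otimes0^M}}=\abs{\braket{\sigma'}{\omega}}\,\norm{\braket{0^M}{\sigma}}$. Here $\ket{\sigma'}$ is the normalized one-qubit state obtained by projecting the last $M$ qubits of $\ket\sigma$ onto $\ket0^{\otimes M}$, and this postselected state is again a stabilizer state (projecting onto a Pauli eigenstate preserves stabilizerness). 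Hence the maximum equals the one-qubit maximum $\max_{\sigma\in\STAB_1}\abs{\braket{\sigma}{\omega_t}}$.

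For a real qubit with Bloch vector in the $xz$-plane at angle $\theta$ from $\ket1$, where $\tan(\theta/2)=t$, the six one-qubit stabilizer states give overlaps $\cos(\theta/2)=1/\sqrt{1+t^2}$ for $\ket1$, $|1\pm t|/\sqrt{2(1+t^2)}$ for $\ket\pm$, $1/\sqrt2$ for $\ket{\pm i}$, and $t/\sqrt{1+t^2}$ for $\ket0$. For $0<t<c<\sqrt2-1$, the $\ket1$ overlap dominates: $1/\sqrt{1+t^2}\ge(1+t)/\sqrt{2(1+t^2)}$ holds iff $t\le\sqrt2-1$, which is exactly why $c<\sqrt2-1$ is required. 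Then $\dist^2=2-2(1+t^2)^{-1/2}$, which is increasing in $t$, so $t\ge c/2$ gives $\dist\ge\delta_c$ as in \eqref{eq:delta-c-def}. The reduction is therefore a promise many-one reduction for any $0\le a<b\le\delta_c$.

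For the particular case, take $c=1/4$ and check numerically that $\delta_{1/4}=[2-2(1+1/64)^{-1/2}]^{1/2}\approx0.124>1/16$. For containment of $\ApproxSTABPEPS_{0,1/16}$ in $\CeqP$, note that YES instances (distance $0$) are exactly stabilizer states. The exact membership test of \cref{thm:exact-membership}, a $\GapP$ zero test $2^NS(A)-Z(A)^4=0$, accepts YES and rejects every NO instance, so the promise problem lies in $\CeqP$. The step I expect to need the most care is the reduction of the multi-qubit maximal overlap to the single-qubit one via the postselection argument, which requires checking that the projected stabilizer state is again a stabilizer state, and handling the case where the projection vanishes.
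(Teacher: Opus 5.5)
Your proposal is correct and follows the paper's proof essentially step for step. It uses the same gadget from \cref{lem:saturated-gadget}, the same reduction of the multi-qubit overlap to the one-qubit overlap via computational-basis postselection of stabilizer states (the paper's \cref{lem:projection-stab,lem:product-gadget-distance}, including the vanishing-projection case), the same overlap table showing that $\ket1$ is optimal exactly when $t\le\sqrt2-1$ (\cref{lem:one-qubit-geometry}), and containment for $a=0$ via the exact-membership $\GapP$ zero test; your explicit check $\delta_{1/4}\approx0.124>1/16$ merely makes a numerical step explicit that the paper leaves implicit.
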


Building on top of the previous hardness proofs, to prove hardness we again employ the encoding of the zero test of a $\GapP$ function in the single qubit amplitude of a suitable PEPS. More specifically, the steps of the proof are the following:
\begin{itemize}
    \item Encode the value of a $\GapP$ function $g$ in a scalar TN~(\cref{lem:boolean-to-tn}).
    \item Build a PEPS that realizes the state in~\cref{eq:omega-t-def}~(\cref{lem:branch-amplitude}).
    \item Prove that the distance of a state of the form $\ket\omega\otimes\ket0 ^{\otimes M}$ from the $M+1-$qubit stabilizer set is equal to the distance of $\ket\omega$ from the $1-$qubit stabilizer set~(\cref{lem:product-gadget-distance}).
    \item Find an analytical formula of the distance of $\omega_t$ as per~\cref{eq:omega-t-def} from $\STAB_1$~(\cref{lem:one-qubit-geometry}).
    \item Realize that a zero test of a $\GapP$ function automatically maps YES instances to stabilizer states, thus those with distance zero to $\STAB$ and NO instances to states with distance greater than $\delta_c$ thus establishing $\CeqP$ hardness.
\end{itemize}

We start by introducing the auxiliary lemmas previously mentioned, along with their proofs. The first one establishes an elementary fact of the stabilizer formalism, namely nonzero outcomes of partial computational-basis projection are proportional to stabilizer states.  

\begin{lemma}[Computational-basis projection of stabilizer states]
\label{lem:projection-stab}
Let $\ket*\Sigma\in\STAB_{r+s}$.  For every $z\in\{0,1\}^s$, the vector
\begin{equation}
        (\Id_r\otimes\bra z)\ket*\Sigma
\end{equation}
is either zero or proportional to an $r$-qubit stabilizer state.
\end{lemma}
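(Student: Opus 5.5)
I would prove this with the stabilizer-group description of $\ket*\Sigma$, treating the projection as a sequence of single-qubit computational-basis measurements. By induction on $s$, it suffices to handle $s=1$: projecting one qubit onto $\bra z$, $z\in\{0,1\}$. Then $(\Id_{r}\otimes\bra{z_1\cdots z_s})=(\Id_r\otimes\bra{z_1\cdots z_{s-1}})(\Id_{r+s-1}\otimes\bra{z_s})$. If an intermediate vector vanishes, the final one vanishes. Otherwise it is proportional to a stabilizer state on fewer qubits, and the inductive hypothesis applies to that state after rescaling.

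For $s=1$, write $\ket*\Sigma$ as the unique $+1$ joint eigenvector of an abelian group $G\subset\pm\pauli_{r+1}$ with $|G|=2^{r+1}$ and $-\Id\notin G$. Let $Z_{\rm last}:=\Id_r\otimes Z$, and note that $(\Id_r\otimes\bra z)\ket*\Sigma$ is, after reidentifying spaces, the same vector as $(\Id_r\otimes\ketbra z)\ket*\Sigma=\tfrac12\bigl(\Id+(-1)^zZ_{\rm last}\bigr)\ket*\Sigma$. I would split into two cases.

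\textbf{Case 1:} $\pm Z_{\rm last}\in G$. Then $\ket*\Sigma$ is an eigenvector of $Z_{\rm last}$. It therefore has the product form $\ket*\Sigma=\ket*{\sigma}\otimes\ket{z_0}$, and the projection is either $0$ (if $z\neq z_0$) or $\ket*\sigma$. To see that $\ket*\sigma$ is a stabilizer state, take the subgroup of $G$ of elements acting as $\Id$ or $Z$ on the last qubit, and strip off the last tensor factor, absorbing the sign $(-1)^{z_0}$ where $Z$ appears. This yields an abelian group of $2^r$ signed Paulis on $r$ qubits that excludes $-\Id$ and fixes $\ket*\sigma$, so $\ket*\sigma\in\STAB_r$. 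The counting step, that this subgroup has exactly half the elements, follows because the map recording the last-qubit Pauli has kernel and image of the right sizes.

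\textbf{Case 2:} $\pm Z_{\rm last}\notin G$. Some $g\in G$ anticommutes with $Z_{\rm last}$, since otherwise $Z_{\rm last}$ would commute with a maximal abelian group and lie in it up to sign. Then $\langle\Sigma|Z_{\rm last}|\Sigma\rangle=0$, so the projected vector has squared norm $1/2$ and is nonzero. Replace $g$ by $(-1)^zZ_{\rm last}$, and replace every other generator that anticommutes with $Z_{\rm last}$ by its product with $g$. The result is an abelian group $G'$ of order $2^{r+1}$ that stabilizes the normalized post-measurement state, which is therefore in $\STAB_{r+1}$ and contains $\pm Z_{\rm last}$. This reduces to Case 1.

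The main obstacle is the bookkeeping in Case 1, namely showing that the stripped group has order $2^r$, excludes $-\Id$, and is abelian. I would handle it via independent generators: choose generators of $G$ with $(-1)^{z_0}Z_{\rm last}$ among them, and multiply the others by it so that they act as $\Id$ on the last qubit. This gives $r$ independent commuting generators on the first $r$ qubits, which fix $\ket*\sigma$ and exclude $-\Id$.
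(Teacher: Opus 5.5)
Your proposal is correct and takes the same route as the paper. Both reduce to single-qubit $Z$-postselection iterated over the $s$ qubits; the paper simply cites closure of stabilizer states under Pauli measurement, while you verify it explicitly, via the Gottesman update in Case 2 and by stripping the last tensor factor in Case 1. One small wording slip: in Case 1 every element of $G$ already commutes with $Z_{\rm last}$, so the group you strip is all of $G$, and it is the image of the stripping map (with kernel $\{\Id,(-1)^{z_0}Z_{\rm last}\}$) that has $2^r$ elements; your generator argument in the last paragraph handles this correctly.
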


\begin{proof}
Projecting a qubit onto $\ket*0$ or $\ket*1$ is postselection on a Pauli-$Z$ measurement outcome.  Stabilizer states are closed under Pauli measurements and postselection on any outcome of nonzero probability \cite{nielsen_chuang,Gottesman_1998}.  Iterating over the $s$ measured qubits proves the claim.
\end{proof}

The next lemma shows that the minimum distance between a single-qubit state $\ket\omega$ tensored with $M$ copies of the zero ket and the set of $(M+1)$-qubit stabilizer states is equal to the minimum distance between $\ket\omega$ and the single-qubit stabilizer set.

\begin{lemma}[Product stabilizer distance]
\label{lem:product-gadget-distance}
For every one-qubit state $\ket*\omega$ and every $M\ge0$,
\begin{equation}
        \dist(\ket*\omega\otimes\ket*0^{\otimes M},\STAB_{M+1})
        =
        \dist(\ket*\omega,\STAB_1).
\end{equation}
\end{lemma}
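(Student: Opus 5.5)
The plan is to prove the two inequalities separately, working with the overlap form \eqref{eq:distance-overlap}. Since the distance is a decreasing function of the maximal stabilizer overlap, it is enough to show
\begin{equation}
\max_{\ket*\Sigma\in\STAB_{M+1}}\abs{\braket*{\Sigma}{\omega\otimes 0^{M}}}
=\max_{\ket*\sigma\in\STAB_1}\abs{\braket*{\sigma}{\omega}}.
\end{equation}

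The direction ``$\ge$'' for overlaps is immediate. For any $\ket*\sigma\in\STAB_1$, the product $\ket*\sigma\otimes\ket*0^{\otimes M}$ lies in $\STAB_{M+1}$, because the stabilizer group of $\ket*\sigma$ together with $Z_2,\dots,Z_{M+1}$ generates a full stabilizer group. Its overlap with $\ket*\omega\otimes\ket*0^{\otimes M}$ is exactly $\braket*{\sigma}{\omega}$. Hence $\dist(\ket*\omega\otimes\ket*0^{\otimes M},\STAB_{M+1})\le\dist(\ket*\omega,\STAB_1)$.

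For the reverse direction, fix an arbitrary $\ket*\Sigma\in\STAB_{M+1}$ and write
\begin{equation}
\braket*{\Sigma}{\omega\otimes0^{M}}=\bra*{\omega}^{*}\!\!\,\text{-free form: }\;\braket*{\Sigma}{\omega\otimes 0^M}=\overline{\braket*{\omega}{v}},\qquad \ket*v:=(\Id_1\otimes\bra{0^M})\ket*\Sigma .
\end{equation}
By \cref{lem:projection-stab} with $r=1$, $s=M$, $z=0^M$, the vector $\ket*v$ is either zero or equal to $\norm{v}\,e^{i\theta}\ket*\sigma$ for some $\ket*\sigma\in\STAB_1$. In the zero case the overlap vanishes and the bound is trivial. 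Otherwise
\begin{equation}
\abs{\braket*{\Sigma}{\omega\otimes0^M}}=\norm{v}\,\abs{\braket*{\sigma}{\omega}}\le\abs{\braket*{\sigma}{\omega}}\le\max_{\ket*{\sigma'}\in\STAB_1}\abs{\braket*{\sigma'}{\omega}},
\end{equation}
using $\norm{v}\le\norm{\Sigma}=1$, since $\Id\otimes\bra{0^M}$ is a contraction. Taking the maximum over $\ket*\Sigma$ gives the reverse inequality, and \eqref{eq:distance-overlap} converts this back to distances.

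The only delicate point is the appeal to \cref{lem:projection-stab}: closure of stabilizer states under postselected computational-basis measurement. This holds because postselecting a $Z$ outcome with nonzero probability on a stabilizer state yields a stabilizer state, and the lemma is already stated in the paper. The remaining care is bookkeeping of phases and normalization, which the absolute values take care of. The minima in the definition of distance are attained, since $\STAB$ is finite and $\theta$ ranges over a compact set, so comparing maxima is legitimate.
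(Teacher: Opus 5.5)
Your proof is correct and follows the paper's own argument: reduce to maximal overlaps via \eqref{eq:distance-overlap}, bound $\abs{\braket*{\Sigma}{\omega\otimes 0^M}}$ by applying \cref{lem:projection-stab} to $(\Id\otimes\bra{0^M})\ket*\Sigma$ with $\norm{v}\le 1$, and attain the bound with the product stabilizer $\ket*\sigma\otimes\ket*0^{\otimes M}$. Your first displayed identity contains a garbled fragment, but the intended relation $\braket*{\Sigma}{\omega\otimes 0^M}=\braket*{v}{\omega}$ is correct.
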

\begin{proof}
By \eqref{eq:distance-overlap}, it suffices to compare maximum stabilizer overlaps.  For any $\ket*\Sigma\in\STAB_{M+1}$,
\begin{equation}
        \braket*{\Sigma}{\omega\otimes0^M}
        =
        \left\langle
        (\Id\otimes\bra{0}^{\otimes M})\Sigma
        \middle|\omega
        \right\rangle .
\end{equation}
By \cref{lem:projection-stab}, the projected vector is either zero or has the form $\sqrt p\ket*\tau$ with $0<p\le1$ and $\ket*\tau\in\STAB_1$.  Hence the overlap is bounded by the maximum one-qubit stabilizer overlap with $\ket*\omega$.  The reverse inequality is achieved by the product stabilizer $\ket*\tau\otimes\ket*0^{\otimes M}$, where $\ket*\tau$ is a closest one-qubit stabilizer state to $\ket*\omega$.
\end{proof}

We will also need to derive the explicit formula of the minimum distance between a state of the form $\ket{\omega_t}\propto t\ket{0}+\ket{1}$ and the stabilizer set. The following lemma fills this gap.

\begin{lemma}[One-qubit $\ket{\omega_t}$ stabilizer distance]
\label{lem:one-qubit-geometry}
Let
\begin{equation}
        d_1(t):=\dist(\ket*{\omega_t},\STAB_1).
\end{equation}
with \begin{equation}
    \ket{\omega_t}=\frac{t\ket0+\ket1}{\sqrt{1+t^2}}
\end{equation}
For $|t|\le \sqrt2-1$,
\begin{equation}
        d_1(t)
        =
        \sqrt{2-\frac{2}{\sqrt{1+t^2}}}.
        \label{eq:d1-small}
\end{equation}
Moreover, $d_1(t)$ is strictly increasing as a function of $|t|$ on this interval.
\end{lemma}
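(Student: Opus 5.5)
The plan is to use the overlap formula \eqref{eq:distance-overlap} and compute the maximum overlap of $\ket{\omega_t}$ with the six single-qubit stabilizer states $\ket0,\ket1,\ket{\pm},\ket{\pm i}$. Since global phases are quotiented out, $\STAB_1$ consists of exactly these six states. For real $t$ the overlaps are immediate:
\begin{equation}
|\braket{0}{\omega_t}|=\frac{|t|}{\sqrt{1+t^2}},\qquad
|\braket{1}{\omega_t}|=\frac{1}{\sqrt{1+t^2}},\qquad
|\braket{\pm}{\omega_t}|=\frac{|t\pm1|}{\sqrt{2(1+t^2)}},\qquad
|\braket{\pm i}{\omega_t}|=\frac{1}{\sqrt2}.
\end{equation}
The claim then reduces to showing that $1/\sqrt{1+t^2}$, the overlap with $\ket1$, is the largest of these whenever $|t|\le\sqrt2-1$. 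Substituting it into \eqref{eq:distance-overlap} gives \eqref{eq:d1-small} directly.

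\textbf{Comparisons.} Three inequalities are needed.
\begin{itemize}
\item[(i)] $1\ge|t|$, which holds trivially.
\item[(ii)] $1\ge |t\pm1|/\sqrt2$. It suffices to check the worse sign, $1+|t|\le\sqrt2$, which is exactly the hypothesis $|t|\le\sqrt2-1$. This is the only place where the threshold enters, and it explains why it is sharp.
\item[(iii)] $1/\sqrt{1+t^2}\ge1/\sqrt2$. This is equivalent to $t^2\le1$, which is implied by the hypothesis.
\end{itemize}
Hence the maximum overlap equals $(1+t^2)^{-1/2}$.

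\textbf{Monotonicity.} The expression $2-2(1+t^2)^{-1/2}$ is strictly increasing in $t^2$. The square root is strictly increasing on $[0,\infty)$, so $d_1$ is strictly increasing in $|t|$ on $[0,\sqrt2-1]$.

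\textbf{Main obstacle.} The only care needed is in the case analysis of step (ii), in particular handling both signs of $t$ via $|t|$. It is also worth noting explicitly that the minimum over the global phase $\theta$ is already absorbed into the absolute value of the overlap in \eqref{eq:distance-overlap}. Otherwise the argument is routine.
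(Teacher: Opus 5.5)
Your proposal is correct and follows essentially the same route as the paper: tabulate the overlaps of $\ket{\omega_t}$ with the six one-qubit stabilizer states, show that the overlap with $\ket{1}$ is the largest when $|t|\le\sqrt2-1$, and substitute into the overlap formula \eqref{eq:distance-overlap}. Your explicit comparisons (i)--(iii), in particular $1+|t|\le\sqrt2$ for the $\ket{\pm}$ states, spell out the step the paper leaves as ``direct comparison of overlaps.''
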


\begin{proof}
The one-qubit stabilizer states are $\ket*0,\ket*1,\ket*\pm=(\ket*0\pm\ket*1)/\sqrt2$, and $\ket*{\pm i}=(\ket*0\pm i\ket*1)/\sqrt2$.  Direct comparison of overlaps gives the following:
\begin{equation}
\begin{array}{c|c}
\toprule
\text{candidate} & \abs{\braket*\sigma{\omega_t}} \\
\midrule
\ket*1 & (1+t^2)^{-1/2} \\
\ket*0 & |t|(1+t^2)^{-1/2} \\
\ket*+ & |1+t|[2(1+t^2)]^{-1/2} \\
\ket*- & |1-t|[2(1+t^2)]^{-1/2} \\
\ket*{\pm i} & 2^{-1/2} \\
\bottomrule
\end{array}
\end{equation}
For $|t|\le\sqrt2-1$, the largest overlap is with $\ket*1$, which gives \eqref{eq:d1-small}.  The monotonicity of $d_1$ follows immediately.  
\end{proof}

Using the preceding lemmas, the proof of~\cref{thm:constant-threshold} immediately follows.

\begin{proof}[Proof of \cref{thm:constant-threshold}]
Let $L\in\CeqP$, and choose $g\in\GapP$ such that $x\in L$ iff $g(x)=0$.  Construct $\ket{\Phi_g}$ using \cref{lem:saturated-gadget}.  If $g(x)=0$, then $F(g(x))=0$ and
\begin{equation}
        \ket{\Phi_g}=\ket1\otimes\ket0^{\otimes M}\in\STAB,
\end{equation}
so the distance to $\STAB$ is zero and the output is a YES instance for any $a\ge0$.

If $g(x)\ne0$, then $g(x)$ is a nonzero integer, so \eqref{eq:F-properties} gives $F(g(x))\ge c/2$.  Also $F(g(x))<c<\sqrt2-1$.  By \cref{lem:product-gadget-distance,lem:one-qubit-geometry},
\begin{equation}
        \dist(\ket{\Phi_g},\STAB)
        =
        d_1(F(g(x)))
        \ge
        d_1(c/2)
        =
        \delta_c.
\end{equation}
Thus nonzero GapP instances map to NO instances for every $b\le\delta_c$.  This is a polynomial-time promise many-one reduction from an arbitrary $\CeqP$ language.

For $a=0$, the exact membership test of \cref{thm:exact-membership} decides all promised instances: YES instances have distance zero and are exactly stabilizer states, while NO instances have positive distance.  Therefore $\ApproxSTABPEPS_{0,b}$ lies in the promise analogue of $\CeqP$.  With $c=1/4$ and $b=1/16$, this gives $\CeqP$-completeness.
\end{proof}
\subsection{Hardness of constant-accuracy SE evaluation}\label{sec:constant_acc_se}
The construction we employed in the proof of $\CeqP$ hardness of the stabilizer membership allows us also to prove $\CeqP$ hardness of approximate estimation of either stabilizer purity or entropy up to fixed constant accuracy.

\begin{theorem}[Hardness for constant-accuracy]
\label{cor:constant-accuracy-estimation}
Fix an integer $\alpha>1$ and define the positive rational constants
\begin{equation}
    \Delta_\alpha
    :=
    \frac12\left[
        1-\left(\frac{16}{65}\right)^{2\alpha}
         -\left(\frac{63}{65}\right)^{2\alpha}
    \right],
    \qquad
    \varepsilon_\alpha
    :=
    \frac{\Delta_\alpha}{4(\alpha-1)} \,.
    \label{eq:constant-estimation-accuracy}
\end{equation}
For nonzero rational-entry qubit PEPS on a square lattice with bond dimension $D=33$,
estimating either $\stp_\alpha$ or $M_\alpha$ of the normalized
state to additive error at most $\varepsilon_\alpha$ is
$\CeqP-$hard under polynomial-time Turing reductions.
More precisely, every language in $\CeqP$ can be decided using
one query to either estimation oracle and polynomial-time
postprocessing.
The same hardness holds for estimating $\stp_\alpha$ to relative
error at most $\varepsilon_\alpha$.
For $\alpha=2$, additive error $1/100$ suffices for both quantities,
as does relative error $1/100$ for stabilizer purity.
\end{theorem}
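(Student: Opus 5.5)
The plan is to reuse the reduction from the proof of \cref{thm:exact-membership} without change and show that it produces a constant gap in $\stp_\alpha$, and hence in $M_\alpha$. Let $L\in\CeqP$ with $x\in L\iff g(x)=0$ for some $g\in\GapP$. Apply \cref{lem:saturated-gadget} with $c=1/4$ to the network $T_{g,x}$ of \cref{lem:boolean-to-tn}. This gives, in polynomial time, a nonzero bond-dimension-$33$ PEPS whose normalized state is $\ket{\omega_t}\otimes\ket0^{\otimes M}$ with $t=F(g(x))$. By \eqref{eq:F-properties}, $t=0$ when $g(x)=0$, and $t\in[1/8,1/4)$ when $g(x)\neq0$. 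Tensoring with $\ket0^{\otimes M}$ leaves the stabilizer purity unchanged, so in both cases $\stp_\alpha=p_\alpha(t)$, with $p_\alpha$ the closed form from the proof of \cref{thm:sp_alpha_hard}:
\begin{equation}
    p_\alpha(t)=\tfrac12\left[1+\left(\tfrac{2t}{1+t^2}\right)^{2\alpha}+\left(\tfrac{1-t^2}{1+t^2}\right)^{2\alpha}\right].
\end{equation}

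\textbf{YES case.} If $g(x)=0$, the state $\ket1\ket0^{\otimes M}$ is a stabilizer state, so $p_\alpha=1$ and $M_\alpha=0$.

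\textbf{NO case.} Here I need the bound $p_\alpha(t)\le p_\alpha(1/8)$ for all $t\in[1/8,1/4)$. This follows from monotonicity of $p_\alpha$ on $(0,\sqrt2-1)$. I will use the factored derivative from the proof of \cref{thm:sp_alpha_hard}; that algebraic factorization holds for every $t>0$ with $t\neq 1$, not only for $t>\sqrt2+1$.
\begin{itemize}
\item For $0<t<\sqrt2-1$ we have $1-t^2>2t$, so the bracket $(t^2-1)^{2\alpha-2}-(2t)^{2\alpha-2}$ is strictly positive.
\item The prefactor $4\alpha t(t^2-1)/(t^2+1)^{2\alpha+1}$ is negative there.
\item Hence $p_\alpha$ is strictly decreasing on this interval.
\end{itemize}
At $t=1/8$ one computes $2t/(1+t^2)=16/65$ and $(1-t^2)/(1+t^2)=63/65$. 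Therefore $p_\alpha(1/8)=1-\Delta_\alpha$, which is exactly where the constant in \eqref{eq:constant-estimation-accuracy} comes from. Positivity of $\Delta_\alpha$ follows because $(16/65)^2+(63/65)^2=1$ and both ratios lie in $(0,1)$, so their $2\alpha$-th powers sum to strictly less than $1$ when $\alpha\ge 2$.

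With this gap the three decision procedures are as follows.
\begin{itemize}
\item \textbf{Additive purity.} The values $1$ and $\le 1-\Delta_\alpha$ are separated by $\Delta_\alpha$. An estimate within $\varepsilon_\alpha\le\Delta_\alpha/4$ is compared with the threshold $1-\Delta_\alpha/2$.
\item \textbf{Relative purity.} A relative-error-$\varepsilon$ estimate is $\ge 1-\varepsilon$ in the YES case and $\le(1-\Delta_\alpha)(1+\varepsilon)$ in the NO case. These ranges are disjoint since $\varepsilon<\Delta_\alpha/4$ and $\Delta_\alpha<1$.
\item \textbf{Entropy.} In the NO case,
\begin{equation}
M_\alpha=\frac{-\log(1-\Delta_\alpha)}{\alpha-1}\ge\frac{\Delta_\alpha}{(\alpha-1)\ln 2}>\frac{\Delta_\alpha}{\alpha-1},
\end{equation}
while $M_\alpha=0$ in the YES case. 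An estimate within $\varepsilon_\alpha=\Delta_\alpha/(4(\alpha-1))$ is compared with the threshold $\Delta_\alpha/(2(\alpha-1))$.
\end{itemize}
Each procedure uses one oracle query plus a rational comparison.

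\textbf{The case $\alpha=2$.} Numerically $(63/65)^4\approx0.8825$ and $(16/65)^4\approx0.0037$, so $\Delta_2\approx0.0569$ and $\varepsilon_2\approx0.0142>1/100$. Hence additive error $1/100$ suffices for both quantities, and relative error $1/100$ suffices for purity.

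The main obstacle is the monotonicity step for general $\alpha$. The paper only established it for $t>\sqrt2+1$, so I must check that the factorization of $p_\alpha'$ is valid on $(0,1)$ and track the sign flip of $t^2-1$ there. Everything else is bookkeeping of constants.
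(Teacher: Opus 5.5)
Your proposal is correct and follows essentially the same route as the paper: the $c=1/4$ gadget from \cref{lem:saturated-gadget}, the gap $p_\alpha(1/8)=1-\Delta_\alpha$, and the same thresholds $1-\Delta_\alpha/2$ and $\Delta_\alpha/(2(\alpha-1))$. The paper handles relative error more simply, noting that it implies additive error because $\stp_\alpha\le1$.

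Your sign analysis of the factored derivative on $(0,\sqrt2-1)$ is a genuine addition. The paper only ``recalls'' this monotonicity for general $\alpha$, although earlier it was proved only for $\alpha=2$ and, for general $\alpha$, only on $t>\sqrt2+1$.

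Two small fixes. In the NO case the first relation for $M_\alpha$ should be $\ge$, not $=$. And at $\alpha=2$ you have $\varepsilon_\alpha=\Delta_\alpha/4$, so write $\le$ rather than $<$ there; $\le$ still suffices for your disjointness argument.
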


This theorem shows that a computational obstruction persists even when the requested numerical accuracy is fixed independently of the PEPS size.

\begin{proof}
Let $L\in\CeqP$, and choose $g\in\GapP$ such that
\begin{equation}
    x\in L\quad\iff\quad g(x)=0.
\end{equation}
Apply the construction of \cref{lem:saturated-gadget} with $c=1/4$. It produces, in polynomial time, a nonzero rational-entry square-lattice PEPS of constant bond dimension whose normalized state is
\begin{equation}
    \ket{\Phi_g}=\ket{\omega_{t}}\otimes\ket{0}^{\otimes M},\qquad\ket{\omega_t}=\frac{t\ket{0}+\ket{1}}{\sqrt{1+t^2}},\qquad t=\frac{g(x)^2}{4(1+g(x)^2)},
    \label{eq:constant-estimation-state}
\end{equation}
where $M$ is polynomially bounded in $|x|$. Since $g(x)$ is integer-valued,
\begin{equation}
    g(x)=0 \Longrightarrow t=0,\qquad g(x)\neq0 \Longrightarrow \frac18\le t<\frac14.
    \label{eq:constant-estimation-parameter-gap}
\end{equation}
A direct evaluation gives
\begin{equation}
    \stp_\alpha(\Phi_g)=p_\alpha(t),\qquad p_\alpha(t) := \frac12\left[1+\left(\frac{2t}{1+t^2}\right)^{2\alpha}+\left(\frac{1-t^2}{1+t^2}\right)^{2\alpha}\right].
    \label{eq:constant-estimation-one-qubit}
 \end{equation}
Recall that $p_\alpha$ is strictly decreasing on $0<t<\sqrt2-1$. Consequently,
\begin{equation}
    g(x)=0\Longrightarrow\stp_\alpha(\Phi_g)=1,\qquad g(x)\neq0\Longrightarrow\stp_\alpha(\Phi_g)\le p_\alpha(1/8)=1-\Delta_\alpha.
    \label{eq:constant-estimation-purity-gap}
\end{equation}
The constant $\Delta_\alpha$ is positive because $(16/65)^2+(63/65)^2=1$ and raising either of these two numbers to the power $\alpha>1$ strictly decreases it. Now suppose a stabilizer purity oracle returns $\widehat p$ satisfying
\begin{equation}
    \bigl|\widehat p-\stp_\alpha(\Phi_g)\bigr|\le\varepsilon_\alpha.
\end{equation}
Since, by choice, $\varepsilon_\alpha\le\Delta_\alpha/4$ we have
\begin{equation}
    x\in L \quad\iff\quad \widehat p>1-\frac{\Delta_\alpha}{2}.
\end{equation}
For entropy, the same construction gives
\begin{equation}
    g(x)=0\Longrightarrow M_\alpha(\Phi_g)=0,
\end{equation}
whereas, for $g(x)\neq0$,
\begin{align}
    M_\alpha(\Phi_g)
    &\ge
    -\frac{\log (1-\Delta_\alpha)}{\alpha-1}
    \notag\\
    &\ge
    \frac{\Delta_\alpha}{(\alpha-1)\ln2}
    \ge
    \frac{\Delta_\alpha}{\alpha-1}.
    \label{eq:constant-estimation-entropy-gap}
\end{align}
Here we used $-\ln(1-z)\ge z$ for $0<z<1$ and $\ln2<1$.
Thus an entropy oracle returning $\widehat M$ with
\begin{equation}
    \bigl|\widehat M-M_\alpha(\Phi_g)\bigr|
    \le\varepsilon_\alpha
\end{equation}
decides the same zero test by comparison
\begin{equation}
    x\in L
    \quad\iff\quad
    \widehat M<\frac{\Delta_\alpha}{2(\alpha-1)}
\end{equation}
All thresholds and error tolerances are fixed rational constants
for fixed $\alpha$. The construction and the final comparisons
therefore take polynomial time.

For relative purity error, an answer satisfying
\begin{equation}
    \bigl|\widehat p-\stp_\alpha(\Phi_g)\bigr|
    \le
    \varepsilon_\alpha\,\stp_\alpha(\Phi_g)
\end{equation}
also satisfies the additive-error condition, because
$\stp_\alpha(\Phi_g)\le1$. The same reduction therefore applies.
Finally, for $\alpha=2$,
\begin{equation}
    \Delta_2=\frac{1\,016\,064}{17\,850\,625},
    \qquad
    \varepsilon_2
    =\frac{254\,016}{17\,850\,625}
    >\frac1{100},
\end{equation}
which proves the stated numerical specialization.
\end{proof}

Note that \cref{cor:constant-accuracy-estimation} establishes
$\CeqP$-hardness of constant-accuracy numerical estimation.
Whether these estimation tasks are also $\sharpP$-hard at
constant additive accuracy remains open.

\end{document}